\documentclass[12pt, draftclsnofoot, onecolumn]{IEEEtran}
\usepackage{amsmath,amssymb}
\usepackage{subfigure}
\usepackage{graphicx,graphics,color,psfrag}
\usepackage{cite,balance}
\usepackage{caption}
\captionsetup{font={small}}
\allowdisplaybreaks
\usepackage{algorithm}
\usepackage{algorithmic}
\usepackage{accents}
\usepackage{amsthm}
\usepackage{bm}
\usepackage{url}
\usepackage[english]{babel}
\usepackage{multirow}
\usepackage{enumerate}
\usepackage{cases}
\usepackage{stfloats}
\usepackage{dsfont}
\usepackage{color,soul}
\usepackage{amsfonts}
\usepackage{cite,graphicx,amsmath,amssymb}
\usepackage{subfigure}
\usepackage{fancyhdr}
\usepackage{hhline}
\usepackage{graphicx,graphics}
\usepackage{array,color}
\usepackage{mathtools}
\usepackage{amsmath}
\usepackage{setspace}

\newtheorem{theorem}{\emph{\underline{Theorem}}}

\newtheorem{definition}{\emph{\underline{Definition}}}

\newtheorem{lemma}{\emph{\underline{Lemma}}}
\newtheorem{corollary}{\emph{\underline{Corollary}}}

\newtheorem{example}{\bf Example}
\newtheorem{remark}{\bf \emph{\underline{Remark}}}

\def\({\left(}
\def\){\right)}

\setcounter{page}{1}



\def\b0{{\mathbf{0}}}







\newcommand{\diag}{\mathrm{diag}}

\newcommand{\nn}{\nonumber}

\setlength{\topskip}{-2pt}
\usepackage{setspace}

\allowdisplaybreaks[4]
\begin{document}

\captionsetup[figure]{name={Fig.}}

\title{\huge 
Multi-Active Multi-Passive (MAMP)-IRS  Aided Wireless Communication: A Multi-Hop \\ Beam Routing Design}
\author{Yunpu~\!Zhang, \!Changsheng~\!You,~\IEEEmembership{\!Member,~\!IEEE},  and \!Beixiong \!Zheng,~\IEEEmembership{\!Member,~\!IEEE}
\thanks{\noindent Part of this work will be presented at the 2022 IEEE Global Communications Conference, Rio de Janeiro, Brazil \cite{zhangSAMP}.

Y. Zhang and C. You are with the Department of Electronic and Electrical Engineering, Southern University of Science and Technology (SUSTech), Shenzhen 518055, China (e-mail: zhangyp2022@mail.sustech.edu.cn, youcs@sustech.edu.cn).

B. Zheng is with the School of Microelectronics, South China University of Technology, Guangzhou 511442, China (e-mail: bxzheng@scut.edu.cn).
}}
\maketitle

\newcommand\blfootnote[1]{%
\begingroup
\renewcommand\thefootnote{}\footnote{#1}%
\addtocounter{footnote}{-1}%
\endgroup
}
\vspace{-1.5cm}
\begin{abstract} 
Prior studies on intelligent reflecting surface (IRS) have mostly considered wireless communication systems aided by a \emph{single passive} IRS, which, however, has limited control over wireless propagation environment and suffers severe product-distance path-loss. To address these issues, we propose in this paper a new \emph{multi-active multi-passive (MAMP)}-IRS aided wireless communication system, where a number of active and passive IRSs are deployed to assist the communication between a base station (BS) and a remote user in complex environment, by establishing a multi-hop reflection path across active and passive IRSs. In particular, the active IRSs enable to opportunistically amplify the reflected signal along the multi-reflection link, thus effectively compensating for the severe product-distance path-loss. For the new MAMP-IRS aided system, an optimization problem is formulated to maximize the achievable rate of a typical user by designing the active-and-passive IRS  routing path as well as the joint beamforming of the BS and selected active/passive IRSs. To draw useful insights into the optimal design, we first consider a special case of the \emph{single-active multi-passive (SAMP)}-IRS aided system. For this case, we propose an efficient algorithm to obtain its optimal solution by first optimizing the joint beamforming given any SAMP-IRS routing path, and then optimizing the routing path by using a new path decomposition method and graph theory. 
Moreover, we show that the active IRS should be selected to establish the beam routing path when its amplification power and/or number of active reflecting elements are sufficiently large. Next, for the general MAMP-IRS aided system, we show that its challenging beam routing optimization problem can be efficiently solved by a new two-phase approach. Its key idea is to first optimize the inner passive-IRS beam routing between each two active IRSs for effective channel power gain maximization, followed by an outer active-IRS beam routing optimization for rate maximization. Last, numerical results are provided to validate our analytical results and demonstrate the effectiveness of the proposed MAMP-IRS beam routing scheme as compared to various benchmark schemes.
\end{abstract}

\vspace{-0.7cm}
\IEEEpeerreviewmaketitle

\begin{IEEEkeywords}
Intelligent reflecting surface (IRS), active IRS, passive IRS, cooperative passive/active beamforming, multi-hop beam routing.
\end{IEEEkeywords}
\vspace{-0.7cm}
\section{Introduction}
\vspace{-0.2cm}
Intelligent reflecting surface (IRS) has emerged as a promising technology to smartly reconfigure the wireless propagation environment by dynamically tuning its reflecting elements \cite{wu2021intelligent,di2020smart,zheng2022survey}.  Specifically, IRS is a planar metasurface composed of a large number of reflecting elements, each of which is able to independently control the amplitude and/or phase of the incident signal, thereby collaboratively reconfiguring the wireless channels. Moreover, different from the conventional active transmitters and relays, IRS 
passively reflects the signals without installing costly radio-frequency (RF) chains for signal transmission/reception, hence significantly reducing the hardware and energy cost. These appealing features thus have motivated substantial research recently to incorporate IRS into traditional wireless systems for improving the communication performance (see, e.g., \cite{liu2021reconfigurable,9801736,huang2019reconfigurable,9745477,9483903,9198125,wang2020channel,zhou2020framework,9133130}).

In the existing works on IRS, most of them have considered the basic wireless communication system aided by a \emph{single passive} IRS \cite{pan2020multicell,wu2019intelligent,you2020channel,zhang2021intelligent,yang2021energy,hu2020location,9724202}. However, employing only a single passive IRS usually has limited control over the wireless channels, which may not be able to unlock the full potentials of IRS. For example, in complex environment (e.g., corridors in an indoor setting), there may not exist a blockage-free reflection link between the base station (BS) and a remote user via a single IRS only \cite{mei2022intelligent,9690635,9410457}. To address this issue, new research efforts have been recently devoted to designing more efficient multi-IRS aided systems by deploying two or more IRSs in the network. This thus enables multiple IRSs to collaboratively establish double- or multi-hop signal reflection paths from the BS to the users so as to bypass the obstacles in between (see, e.g., \cite{han2020cooperative,you2020wireless,zheng2021double,9373363,mei2020cooperative,mei2021multi}). In particular, the authors in \cite{han2020cooperative} showed that under the line-of-sight (LoS) channel model, the double-IRS aided single-user system can achieve a higher scaling order of the passive beamforming gain than the conventional single-IRS counterpart, i.e., $\mathcal{O}(M^4)$ versus $\mathcal{O}(M^2)$ \cite{wu2019intelligent} with $M$ denoting the number of passive reflecting elements. Moreover, the authors in \cite{you2020wireless} showed that even with the channel estimation error and training overhead taken into account,  two cooperative IRSs can still achieve significant rate improvement over the conventional single-IRS system. In addition, it was shown in \cite{zheng2021double} that deploying two cooperative IRSs
near the BS and users achieves superior performance than the single-IRS system, in terms of the maximum
signal-to-noise ratio (SNR) and multi-user effective channel rank. Furthermore, for the more complex environment (e.g., smart factories) wherein deploying two IRSs may not be able to provide virtual LoS communication links,  a multi-IRS aided system  was proposed in \cite{mei2020cooperative}, where the remote user is served by the BS via a multi-reflection link with the signal consecutively reflected by a set of selected IRSs. To balance the tradeoff between the multiplicative passive beamforming gain and product-distance path-loss, efficient beam routing algorithms were proposed to maximize the effective channel power gain. Besides, the more general multi-IRS aided multi-user communication system was further studied in \cite{mei2021multi}, where  an efficient recursive algorithm was proposed to maximize the
minimum received signal power among all users. However, despite the prominent multiplicative passive beamforming gain, the double/multi-passive-IRS aided systems still suffer severe \emph{product-distance} path-loss arising from the double/multi-hop signal reflection.

To deal with the above issue, a new type of IRS architecture, called \emph{active} IRS, has been recently proposed \cite{zhang2021active,long2021active,9568854}. Specifically, equipped with low-power reflection-type amplifiers (e.g., tunnel diode and negative impedance converter), active IRS is capable of reflecting incident signals with \emph{power amplification} in a full-duplex mode, thus effectively compensating for the product-distance path-loss at a modest energy and hardware cost \cite{zhang2021active,long2021active,9568854,you2021wireless,zeng2022throughput,zhi2022active,kang2022active}. Moreover, active IRS generally outperforms the conventional amplify-and-forward (AF) relay, since the latter usually operates in a half-duplex mode and thus suffers lower spectrum efficiency. These promising features thus have motivated upsurging research interests recently on active-IRS aided systems. For example, it was shown in \cite{zhang2021active} and \cite{long2021active} that given the same IRS location, the active-IRS aided system tends to achieve better performance than the passive-IRS counterpart thanks to the power amplification gain. Besides, the authors in \cite{you2021wireless} compared the optimal IRS placement strategy  and the achievable rate for the active- and passive-IRS aided wireless systems, and showed that active IRS attains a higher rate than passive IRS when its amplification power is sufficiently large and/or number of active reflecting
elements is relatively small.
Moreover, it was revealed in \cite{zhi2022active} that under the total system energy consumption constraint, active IRS is superior to passive IRS when its number of reflecting elements is small and the total power budget is  sufficiently large. Furthermore, to reap both the active-IRS power amplification gain and passive-IRS beamforming gain, the authors in \cite{kang2022active} proposed a new hybrid active-passive IRS architecture consisting of both active and passive reflecting elements, based on which they devised an optimal active/passive elements allocation strategy for rate maximization. However, this hybrid IRS architecture may increase both the hardware and design complexity, which thus limits its large-scale deployment in practice. Instead, deploying multiple active and passive IRSs in a distributed manner is expected to significantly improve the communication performance in complex environment, which, however, has not been studied in the existing literature. In particular, it remains unknown how to leverage active and passive IRSs for constructing effective multi-hop beam routing paths and whether the newly added active IRSs can bring significant performance gain to the multi-IRS aided wireless communication system. 
\begin{figure}[t]
    \centering
    \includegraphics[width=10cm]{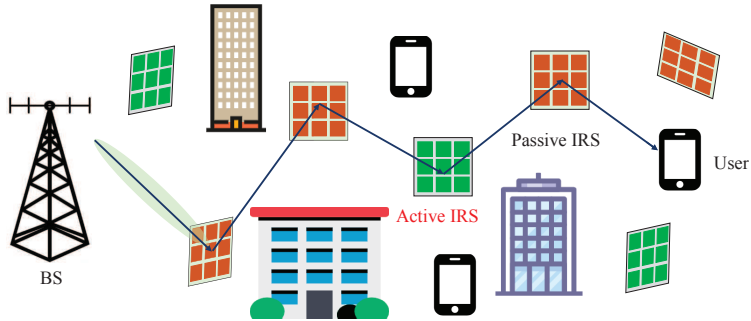}
    \caption{MAMP-IRS aided wireless communication systems.}\label{fig:system}
    \vspace{-22pt}
\end{figure}

    Motivated by the above, we consider in this paper a new \emph{multi-active multi-passive (MAMP)}-IRS aided wireless communication system as shown in Fig.~1, where a number of active and passive IRSs are deployed to assist the downlink communication via establishing a cascaded LoS link for each user. Compared with the conventional multi-passive-IRS aided wireless system \cite{mei2020cooperative}, the new MAMP-IRS system enables to amplify the reflected signal \emph{opportunistically} along the multi-reflection link with the aid of active IRSs, thus effectively compensating for the severe multi-reflection path-loss at the expense of a modest higher energy/hardware cost. However, the new MAMP-IRS beam routing optimization  problem is also much more challenging, since it needs to strike a new trade-off between maximizing the power amplification gain of  selected  active IRSs and minimizing the \emph{accumulated}  amplification noise at the user. To this end, we formulate an optimization problem for  maximizing the achievable rate of a typical user by jointly designing the MAMP-IRS beam routing as well as the beamforming of the BS and selected active/passive IRSs. The main contributions of this paper are summarized as follows.
\begin{itemize}
    \item First, to obtain useful insights, we consider a special system setup with one active IRS and multiple passive IRSs deployed, termed as \emph{single-active multi-passive (SAMP)}-IRS aided wireless communication. The corresponding SAMP-IRS beam-routing optimization problem is shown to be non-convex, which is in general difficult to be solved optimally. To address this issue,
    we first obtain the optimal beamforming of the BS and selected (active/passive) IRSs for any feasible SAMP-IRS routing path. Then, we show that the SAMP-IRS routing optimization problem can be equivalently decomposed into two separate sub-path routing designs, corresponding to the multi-reflection path from the BS to the active IRS and that from the active IRS to the user, respectively; thus significantly reducing the routing design complexity. Subsequently, an efficient graph-optimization algorithm is proposed to obtain the optimal SAMP-IRS routing path by using graph theory. In particular, we show that the active IRS should be selected to establish the beam routing path when its amplification power and/or number of active reflecting elements are sufficiently large.
    \item Second, we consider the general case of the MAMP-IRS aided system. The corresponding rate maximization problem is much more challenging than that of the SAMP-IRS case due to the (possible) cascaded power amplification and accumulated amplification noise arising from selected active IRSs. To tackle this difficulty, we first show that the MAMP-IRS beam routing optimization problem can be  efficiently solved by a
    new \emph{two-phase} approach, which first optimizes the 
    \emph{inner passive-IRS beam routing} between each two active IRSs for effective channel power gain maximization, and then optimizes the  \emph{outer active-IRS beam routing} for maximizing the achievable rate. In particular, to solve the challenging (outer) active-IRS routing problem, we approximate  the receive SNR by its lower bound which is asymptotically accurate in the high-SNR regime. As such,  
    the problem is transformed into a more tractable form and then solved by an efficient graph-optimization algorithm.
    \item Last, we provide numerical results to corroborate our
    theoretical findings on the new MAMP-IRS aided
    system as well as validate the effectiveness of the proposed MAMP-IRS beam routing scheme. It is shown that the MAMP-IRS aided system  achieves significantly higher rate than the multi-passive-IRS aided system under various system setups. It is also observed that our proposed MAMP-IRS beam routing scheme greatly outperforms the other benchmark schemes.
\end{itemize}

The remainder of this paper is organized as follows. Sections~\ref{System Model and Problem Formulation} and \ref{Probelm Formu} introduce the system model and problem formulation for the MAMP-IRS aided system, respectively. In Sections \ref{SAMP} and \ref{MAMP}, we propose efficient algorithms to design the multi-hop beam routing in the SAMP-IRS and MAMP-IRS aided systems, respectively. Numerical results are presented in Section~\ref{Results} to evaluate the performance of the proposed MAMP-IRS beam routing design, followed by the conclusions given in Section~\ref{CL}.

\emph{Notations:} 
The superscript $(\cdot)^{H}$ denotes the Hermitian transpose. For a complex vector $\boldsymbol{x}$, $\|\boldsymbol{x}\|$ denotes the $\ell_{2}$-norm, and $\angle(\cdot)$  denotes the phase of each element in $\boldsymbol{x}$. $|\cdot|$ denotes the absolute value for a real number and the cardinality for a set. $\otimes$ denotes the Kronecker product. $\mathcal{O}(\cdot)$ denotes the standard big-O notation. For matrices, $\diag(\cdot)$ denotes a square diagonal matrix with the elements in $(\cdot)$ on its main diagonal. $\mathbf{I}_{N}$ denotes an identity matrix  with size $N\times N$, and $\mathbf{0}_{N}$ denotes an $N\times N$ all-zero matrix. $\mathbb{C}^{N \times M}$ denotes the space of $N \times M$ of complex-valued matrices. The main symbols used in this paper are summarized in Table~\ref{Table1}. 
\begin{table*}[t]
\renewcommand\arraystretch{1.4}
\centering
{\color{black}{
\caption{List of main symbols and their physical meanings.}\label{Table1}
\scalebox{0.75}{
\begin{tabular}{| c | l ||c| l | }
\hline
\multirow{1}{*}{$J_{a}$}&\multirow{1}{*}{Number of active IRSs}&\multirow{1}{*}{$J_{p}$}&\multirow{1}{*}{Number of passive IRSs}\\
\hline
\multirow{1}{*}{$N$}&\multirow{1}{*}{Number of elements at each active IRS}&\multirow{1}{*}{$M$}&\multirow{1}{*}{Number of elements at each passive IRS}\\
\hline
\multirow{1}{*}{$T$}&\multirow{1}{*}{Number of BS antennas}&\multirow{1}{*}{$J$}&\multirow{1}{*}{Total number of IRSs}\\
\hline
\multirow{1}{*}{$\mathcal{J}$}&\multirow{1}{*}{Set of IRSs}&\multirow{1}{*}{$\mathcal{J}_{a}$}&\multirow{1}{*}{Set of active IRSs}\\
\hline
\multirow{1}{*}{$\mathcal{J}_{p}$}&\multirow{1}{*}{Set of passive IRSs}&\multirow{1}{*}{$\boldsymbol{H}_{0, j}$}&\multirow{1}{*}{Channel from the BS to IRS $j$}\\
\hline
\multirow{1}{*}{$\boldsymbol{H}_{i, j}$}&\multirow{1}{*}{Channel from IRS $i$ to IRS $j$}& \multirow{1}{*}{$\boldsymbol{h}^H_{i, J+1}$}&\multirow{1}{*}{Channel from IRS $i$ to user}\\
\hline
\multirow{1}{*}{$\boldsymbol{\Psi}_{j}$}&Reflection coefficient matrix of passive IRS $j$ &\multirow{1}{*}{$\boldsymbol{\Phi}_{\ell}$}&\multirow{1}{*}{Reflection coefficient matrix of active IRS $\ell$ }\\
\hline
\multirow{1}{*}{$\eta_{\ell}$}&Amplification factor of active IRS $\ell$ &\multirow{1}{*}{$\sigma^2_{\rm F}$}&\multirow{1}{*}{Amplification noise power}\\
\hline
\multirow{1}{*}{$\Omega$}&\multirow{1}{*}{MAMP-IRS routing path}&\multirow{1}{*}{$K$}& Total number of selected IRSs\\
\hline
\multirow{1}{*}{$K_{a}$}&\multirow{1}{*}{Number of selected active IRSs}&\multirow{1}{*}{$K_{p}$}& Number of selected passive IRSs\\
\hline
\multirow{1}{*}{$\mu(k_{a})$}&\multirow{1}{*}{Routing-index of the $k_{a}$-th selected active IRS}&\multirow{1}{*}{$\mathcal{K}_{a}$}& Node-indices of the selected active IRSs\\
\hline
\multirow{1}{*}{$\boldsymbol{G}_{0,s_{\mu(1)}}$}&\multirow{1}{*}{Channel from the BS to active IRS $s_{\mu(1)}$ }&\multirow{1}{*}{$\boldsymbol{G}_{s_{\mu(k_{a})},s_{\mu(k_{a}+1)}}$}& Channel from active IRS $s_{\mu(k_{a})}$ to $s_{\mu(k_{a}+1)}$ \\
\hline
\multirow{1}{*}{$\boldsymbol{g}^{H}_{s_{\mu(K_{a})},J+1}$}&\multirow{1}{*}{Channel from active IRS $s_{\mu(K_{a})}$ to the user}&\multirow{1}{*}{$\boldsymbol{g}^{H}_{0,J+1}$}& Channel from the BS to the user \\
\hline
\multirow{1}{*}{$\boldsymbol{G}_{0,s_{\mu(k_{a})}}$}&\multirow{1}{*}{Channel from the BS to active IRS $s_{\mu(k_{a})}$ }&\multirow{1}{*}{$\boldsymbol{g}^{H}_{s_{\mu(k_{a})},J+1}$}& Channel from active IRS $s_{\mu(k_{a})}$ to the user \\
\hline
\multirow{1}{*}{$\Omega_{\rm act}$}&\multirow{1}{*}{Active-IRS routing path}&\multirow{1}{*}{$\gamma_{i}$}& Per-hop SNR between nodes ($i-1$) and $i$\\
\hline
\end{tabular}}}}
\vspace{-20pt}
\end{table*}
\vspace{-0.3cm}
\section{System Model}\label{System Model and Problem Formulation}
\vspace{-0.1cm}
As shown in Fig. 1, we consider a MAMP-IRS aided wireless  communication system, where $J_{a}$ active IRSs (each equipped with $N$ elements) and $J_{p}$ passive IRSs (each equipped with $M$ elements) are deployed to assist the downlink communication from a $T$-antenna BS to multiple remote users. In particular, we consider the time division multiple access (TDMA) to separate the communications for different users, and thus focus on the design for a typical user in the paper, without loss of generality.\footnote{The obtained results can be extended to the case of simultaneous multi-user communications by designing the multi-beam routing for different users as in \cite{mei2021multi}.} We assume that the direct BS$\to$user channel is severely blocked by scattered obstacles; thus the BS can communicate with the user via a multi-reflection path only that is formed by a set of selected active/passive IRSs.\footnote{The obtained results can be easily extended to the case where a direct link exists between the BS and (typical) user, by aligning the multi-reflection link with the direct link in phase \cite{mei2021performance}.} Let $\bar{\mathcal{J}}\triangleq\{0, \mathcal{J}, J+1\}$ denote the set of all nodes, where nodes $0$ and $J + 1$ refer to the BS and user, respectively, and  $\mathcal{J}\triangleq\{1,2, \cdots, J\}$ stands for the set of all IRSs. In addition, the sets of $J_{a}$ active IRSs and $J_{p}$ passive IRSs are denoted by $\mathcal{J}_{a} $, and $\mathcal{J}_{p}$ (i.e., $\mathcal{J}_{a} \cup \mathcal{J}_{p} = \mathcal{J}$), respectively. Besides, the number of elements/antennas of each node is denoted by $U_j, \forall j\in \bar{\mathcal{J}}$, e.g., $U_0=T$, $U_{j}=N, \forall j \in \mathcal{J}_{a}$ and $U_{j}=M, \forall j \in \mathcal{J}_{p}$.
\vspace{-0.3cm}
\subsection{Channel Model}
\vspace{-0.1cm}
Let $\boldsymbol{H}_{0, j}$, $j \in {\mathcal{J}}$  denote the channel from the BS to the active/passive IRS $j$; $\boldsymbol{H}_{i, j}$, $i, j \in {\mathcal{J}}, i\neq j$ represent the channel from IRS $i$ to IRS $j$; and $\boldsymbol{h}_{i, J+1}^{H}$, $i \in \mathcal{J}$ denote the channel from the IRS to the user, where the dimension of each channel depends on the number of antennas/elements of the associated nodes. The above channels are modeled as follows.

Similar to \cite{mei2020cooperative}, we consider the dominant LoS channel components for the IRS reflection design, while treating all non-LoS (NLoS) components as part of environment scattering, which has been shown to have a marginal effect on the user performance\cite{mei2021multi}.\footnote{Numerical results will be provided in Section~\ref{Results} to show the effect of Rician factor on the rate performance.} Moreover, the information of the LoS availability for all links is assumed to be known \emph{a priori}, which can be practically acquired by using e.g., efficient offline/online IRS beam training methods \cite{you2020fast,mei2022intelligent}. For each communication link involved, we denote by $a_{i,j}$, $i,j\in \bar{\mathcal{J}}$ the channel state indicator, where $a_{i,j}=1$ indicates that there exists an LoS link between nodes $i$ and $j$, and $a_{i,j}=0$ otherwise. Then, if $a_{i,j}=1$ for $i,j \in {\mathcal{J}}, i\neq j$, the inter-IRS channel, $\boldsymbol{H}_{i, j}$, can be modeled as
\vspace{-0.3cm}
\begin{equation}
\boldsymbol{H}_{i, j}=h_{i,j}\boldsymbol{a}_{\rm r}\left(\vartheta^{\rm r}_{i,j}, \theta^{\rm r}_{i,j}, U_j\right) \boldsymbol{a}_{\rm t}^H \left(\vartheta^{\rm t}_{i,j}, \theta^{\rm t}_{i,j}, U_i\right),\nn
\vspace{-0.3cm}
\end{equation}
where $h_{i,j}=\frac{\sqrt{\beta}}{d_{i, j}} e^{-\frac{\jmath 2 \pi d_{i, j}}{\lambda}}$ denotes the complex channel gain for the link with $d_{i,j}$ being the link distance, $\beta$ denoting the reference channel gain at a distance of $1$ meter (m), and $\lambda$ denoting the carrier wavelength.  Moreover, $\vartheta^{\rm r}_{i,j}$ (or $\theta^{\rm r}_{i,j})$  denotes the azimuth (or elevation) angle-of-arrival (AoA) at node $j$ from node $i$; $\vartheta^{\rm t}_{i,j}$ (or $\theta^{\rm t}_{i,j})$ denotes the azimuth (or elevation) angle-of-departure (AoD) from node $i$ to node $j$; and $\boldsymbol{a}_{\rm r}$ and $\boldsymbol{a}_{\rm t}$ denote respectively the receive and transmit steering vectors. Specifically, based on the uniform rectangular array (URA) model for the IRS, $\boldsymbol{a}_{\rm r}$ can be modeled as 
$\boldsymbol{a}_{\rm r}\left(\vartheta^{\rm r}_{i,j}, \theta^{\rm r}_{i,j}, U_j\right)= \boldsymbol{u}(\frac{2 d_{\rm I}}{\lambda}\sin\theta^{\rm r}_{i,j} \cos\vartheta^{\rm r}_{i,j}, U_j^{(1)})\otimes \boldsymbol{u}(\frac{2 d_{\rm I}}{\lambda}\cos\theta^{\rm r}_{i,j}, U_j^{(2)})$, where $U_j^{(1)}$ and $U_j^{(2)}$ denote respectively the numbers of horizontal and vertical elements of node $j$, and the function $\boldsymbol{u}$ is defined as
$\boldsymbol{u}(\zeta, U)\triangleq [1, e^{-\jmath \pi \zeta}, \dots, e^{-\jmath (U-1) \pi \zeta}]$.
Following the above, the transmit steering vector, $\boldsymbol{a}_{\rm t}$, can be similarly defined. Besides, the BS$\to$IRS channel, $\boldsymbol{H}_{0, j}$, can be  modeled as
\vspace{-0.2cm}
\begin{equation}
\boldsymbol{H}_{0, j}=h_{0,j}\boldsymbol{a}_{\rm r}\left(\vartheta^{\rm r}_{0,j}, \theta^{\rm r}_{0,j}, U_j\right) \boldsymbol{a}_{\rm t}^H \left(\vartheta^{\rm t}_{0,j}, U_0\right),~~ j \in \mathcal{J}, \nn
\vspace{-0.3cm}
\end{equation}
where  $\boldsymbol{a}_{\rm t}^H \left(\vartheta^{\rm t}_{0,j}, U_0\right)=\boldsymbol{u}^H(\frac{2 d_{\rm I}}{\lambda}\cos\theta^{\rm t}_{0,j}, U_0)$ is the transmit steering vector of the BS based on the uniform linear array (ULA) model. The IRS$\to$user channel can be modeled as
\vspace{-0.3cm}
\begin{equation}
\boldsymbol{h}^H_{i, J+1}=h_{i,J+1}\boldsymbol{a}_{\rm t}^H \left(\vartheta^{\rm t}_{i,J+1}, \theta^{\rm t}_{i,J+1}, U_i\right),~~ i \in \mathcal{J}.\nn
\vspace{-0.3cm}
\end{equation}
\vspace{-1.3cm}
\subsection{Signal Reflection Model of Active/Passive IRS}
\vspace{-0.2cm}
First, for each passive IRS $j \in \mathcal{J}_{p}$, we denote by $\boldsymbol{\Psi}_{j}=\operatorname{diag}\left(e^{\jmath \phi_{j, 1}}, e^{\jmath \phi_{j, 2}}, \cdots, e^{\jmath \phi_{j, M}}\right)$ its passive reflection matrix, where we set the reflection amplitude as one (i.e., its maximum value) to maximize the signal reflection power, and $\phi_{j, m}\in[0, 2\pi)$ represents the phase-shift at its passive element $m \in \mathcal{M} \triangleq\{1, \cdots, M\}$. Next, for each active IRS $\ell \in \mathcal{J}_{a}$, we denote its reflection matrix by  $\boldsymbol{\Psi}_{\ell}=\operatorname{diag}\left(\eta_{\ell, 1}e^{\jmath \phi_{\ell, 1}}, \eta_{\ell, 2}e^{\jmath \phi_{\ell, 2}}, \cdots, \eta_{\ell, N}e^{\jmath \phi_{\ell, N}}\right)$, where $\eta_{\ell, n}$ and $\phi_{\ell, n}$ represent the reflection amplitude and phase-shift at each active element $n \in \mathcal{N} \triangleq  \{1, \cdots, N\} $, respectively. 
Moreover, based on the LoS channel model, it can be easily shown that without loss of optimality, all active reflecting elements for each active IRS $\ell$ should adopt the same amplification factor, i.e., $\eta_{\ell, n}=\eta_{\ell}$, $\forall n \in \mathcal{N}, \ell\in \mathcal{J}_{a}$. Accordingly, $\boldsymbol{\Psi}_{\ell}$ can be equivalently expressed as $\boldsymbol{\Psi}_{\ell}=\eta_{\ell}\boldsymbol{\Phi}_{\ell}$, where $\boldsymbol{\Phi}_{\ell} \triangleq \operatorname{diag}\left(e^{\jmath \phi_{\ell, 1}}, e^{\jmath \phi_{\ell, 2}}, \cdots, e^{\jmath \phi_{\ell, N}}\right)$. It is worth noting that each active IRS incurs non-negligible thermal noise at its reflecting elements.
For each active IRS, we denote its amplification noise by $\boldsymbol{n}_{\mathrm{F}} \in \mathbb{C}^{N \times 1}$,
which is assumed to follow the independent circularly symmetric complex Gaussian (CSCG) distribution, i.e., $\boldsymbol{n}_{\mathrm{F}} \sim \mathcal{C} \mathcal{N}\left(\mathbf{0}_{N}, \sigma_{\mathrm{F}}^{2} \mathbf{I}_{N}\right)$ with $\sigma_{\mathrm{F}}^{2}$ denoting the amplification noise power.
\vspace{-1.1cm}
\section{Problem Formulation}\label{Probelm Formu}
\vspace{-0.2cm}
In this section, we formulate an optimization problem for maximizing the user's achievable rate by designing the MAMP-IRS beam routing. 

It is noted that the signal models for the MAMP-IRS beam
routing with and without the active IRSs involved are in different forms. Thus, we divide the MAMP-IRS beam routing design into two cases and formulate their corresponding optimization problems. For ease of exposition, we first define the feasible routing path as follows.
\vspace{-6pt}
\begin{definition}[Feasible routing path]\label{De:Feasible}
\emph{Let $\Omega=\{s_{1}, s_{2}, \ldots, s_{ K}\}$ denote a multi-reflection routing path, where $K$ represents the number of selected active/passive IRSs and $s_{k} \in \mathcal{J}$ with $k \in \mathcal{K} \triangleq  \{1, \cdots, K\}$ denotes the node-index of the $k$-th selected IRS. Then, $\Omega$ is feasible if the following conditions are satisfied
\vspace{-0.6cm}
\begin{align}
&s_{k} \in \mathcal{J}, s_{k} \neq s_{k^{\prime}},~~\forall k,k^{\prime}\in\mathcal{K}, k \neq k^{\prime},\label{Eq:Feasible1}\\
&a_{s_{k},s_{k+1}}=1,~~\forall k\in\mathcal{K}, k \neq K,\label{Eq:Feasible2}\\
&a_{0,s_{1}}=a_{s_{K},J+1}=1.\label{Eq:Feasible3}
\end{align}}
\vspace{-1.5cm}
\end{definition}
\noindent Note that the condition in \eqref{Eq:Feasible1} ensures that each IRS in $\mathcal{J}$ is selected at most once; the condition in \eqref{Eq:Feasible2} enforces that there exists an LoS link between any two adjacent nodes along the routing path $\Omega$; and the last condition in \eqref{Eq:Feasible3} guarantees that the multi-reflection routing path $\Omega$ starts from the BS and terminates at the user. Based on the above definition, we then introduce the signal models for the two cases without and with active IRSs involved.

\vspace{-0.6cm}
\subsection{Passive-IRS Multi-reflection Signal Model}
\vspace{-0.2cm}
First, consider the case where the active IRSs are assumed not selected when establishing the multi-reflection path. Let $\tilde{\Omega}=\left\{s_{1}, s_{2}, \ldots, s_{\tilde K}\right\}$ define a feasible routing path from the BS to the user via passive IRSs only, where $\tilde{K}$ is the number of selected passive IRSs and $s_{\tilde{k}} \in \mathcal{J}_{p}$ with $\tilde{k} \in \tilde{\mathcal{K}} \triangleq  \{1, \cdots, \tilde{K}\}$ denotes the node-index of the $\tilde{k}$-th selected passive IRS. Then the BS$\to$user equivalent channel is given by
\vspace{-0.3cm}
\begin{equation}\label{Eq:D_Passive}
\tilde{\boldsymbol {g}}_{\rm BU}^H=\boldsymbol{h}_{s_{\tilde K}, J+1}^{H} \boldsymbol{\Psi}_{s_{\tilde K}} \prod_{{\tilde k} \in \{\mathcal{\tilde K} \setminus \tilde K\}}\left(\boldsymbol{H}_{s_{{\tilde k}}, s_{{\tilde k}+1}} \boldsymbol{\Psi}_{s_{{\tilde k}}}\right) \boldsymbol{H}_{0, s_{1}}.
\vspace{-0.3cm}
\end{equation}
As such, the received signal at the user via the multi-reflection path $\tilde{\Omega}$ with passive IRSs only is given as
\vspace{-0.2cm}
\begin{equation}
\tilde{y}_{\mathrm{pas}}=\tilde{\boldsymbol {g}}_{\rm BU}^H   \boldsymbol{w}_{\rm B}x+n_0,
\vspace{-10pt}
\end{equation}
where $x$ denotes the transmitted signal at the BS with power $P_{\mathrm{B}}$, $\boldsymbol{w}_{\rm B} \in \mathbb{C}^{T\times 1}$ denotes the normalized beamforming vector of the BS with  $||\boldsymbol{w}_{\rm B} ||^{2}=1$, $n_0$ denotes the received Gaussian noise at the user with power $\sigma^{2}$. The corresponding achievable rate of the user in bits/second/Herz (bps/Hz) is thus given by
$\tilde{R}_{\mathrm{pas}}=\log_{2}\left(1+\tilde{\gamma}_{\mathrm{pas}}\right)$, where the receive SNR is 
\vspace{-0.3cm}
\begin{equation}
\tilde{\gamma}_{\mathrm{pas}}={P_{\mathrm{B}} |\tilde{\boldsymbol{g}}_{\mathrm{BU}}^{H}  \boldsymbol{w}_{\rm B}|^{2}}/{\sigma^{2}}.
\vspace{-0.5cm}
\end{equation}
Based on the above, the optimization problem for the case with passive-IRS involved in the beam routing only can be formulated as
\vspace{-0.5cm}
\begin{subequations}
\begin{align}
({\bf P1}):~~\max_{\substack{\boldsymbol{w}_{\rm B}, \{\boldsymbol{\Psi}_{s_{\tilde{k}}}\}, \tilde{\Omega}} }  ~&\log_{2} \left(1+\frac{ P_{\mathrm{B}} |\tilde{\boldsymbol{g}}_{\mathrm{BU}}^{H}  \boldsymbol{w}_{\rm B}|^2 }{\sigma^{2}}\right)
\nn\\
\text{s.t.}~~~~
& \eqref{Eq:Feasible1}-\eqref{Eq:Feasible3},\nn\\
& |[\boldsymbol{\Psi}_{s_{\tilde{k}}}]_{m,m}|=1,~~\forall m \in\mathcal{M},\; \tilde{k} \in \tilde{K}. \label{C:P_unit}
\end{align}
\end{subequations}
It can be verified that problem (P$1$) is a non-convex optimization problem, due to the integer variables $\{s_{\tilde{k}}\}$, unit-modulus constraint in~\eqref{C:P_unit}, as well as the intrinsic coupling between $\tilde{\boldsymbol{g}}_{\mathrm{BU}}^{H}$ and $\boldsymbol{w}_{\rm B}$ in the objective function, which is thus difficult to solve. Generally speaking, there is no standard algorithm for solving this non-convex problem optimally. Nevertheless, it has been shown in \cite{mei2020cooperative} that the optimal solution to problem (P$1$) can be obtained by first designing the joint beamforming of the BS and passive IRSs, and then solving the passive-IRS routing problem by using graph theory. The details are omitted for brevity.
\vspace{-0.5cm}
\subsection{MAMP-IRS Multi-reflection Signal Model}
\vspace{-0.2cm}
Next, we consider the case where active IRSs are involved in the multi-reflection path. Specifically, let
$\Omega=\left\{s_{1}, s_{2}, \ldots, s_{ K}\right\}$ denote a feasible routing path from the BS to the user, which includes $K_{a}$ active IRSs and $K_{p}$ passive IRSs with $K_{a}+K_{p}=K$. Moreover, for the $K_{a}$ selected active IRSs, we denote by $\mu(k_{a})$ the routing-index of the $k_{a}$-th selected active IRS and hence node $s_{\mu(k_{a})}$ is an active IRS, i.e., $s_{\mu(k_{a})} \in \mathcal{J}_{a}$. Then, the node-indices of the selected active IRSs are given by $\mathcal{K}_{a} \triangleq \{\mu(1), \cdots, \mu(k_{a}), \cdots, \mu(K_{a})\}$, while  $\{s_{k_{p}} | s_{k_{p}}\in \mathcal{J}_{p}$, $k_{p} \in \mathcal{K} \setminus \mathcal{K}_{a}\}$ denote the passive IRSs. Let
$\boldsymbol{G}_{0, s_{\mu(1)}}$,  $\boldsymbol{G}_{s_{\mu(k_{a})}, s_{\mu(k_{a}+1)}}$, and $\boldsymbol{g}^{H}_{s_{\mu(K_{a})}, J+1}$ denote the effective channels from the BS to the  first selected active IRS, from the $k_{a}$-th selected active IRS to the ($k_{a}+1$)-th selected active IRS, and from the $K_{a}$-th selected active IRS to the user, respectively, which are given by 
\vspace{-0.3cm}
\begin{equation}\label{Eqs:inter-PIRS}
\begin{cases}\boldsymbol{G}_{0, s_{\mu(1)}} \triangleq \left(\prod_{k=1}^{\mu(1)-1}\boldsymbol{H}_{s_{k},s_{k+1}} \boldsymbol{\Psi}_{s_{k}}\right) \boldsymbol{H}_{0,s_{1}},\\ \boldsymbol{G}_{s_{\mu(k_{a})}, s_{\mu(k_{a}+1)}}\triangleq  \left(\prod_{k=\mu(k_{a})+1}^{\mu(k_{a}+1)-1}\boldsymbol{H}_{s_{k},s_{k+1}}\boldsymbol{\Psi}_{s_{k}}\right)\boldsymbol{H}_{s_{\mu(k_{a})},s_{\mu(k_{a})+1}},\\ \boldsymbol{g}^{H}_{s_{\mu(K_{a})}, J+1}\triangleq \boldsymbol{h}_{s_{K}, J+1}^{H}\boldsymbol{\Psi}_{s_{K}}
\left(\prod_{k=\mu(K_{a})+1}^{K-1}\boldsymbol{H}_{s_{k},s_{k+1}}\boldsymbol{\Psi}_{s_{k}}\right)\boldsymbol{H}_{s_{\mu(K_{a})},s_{\mu(K_{a})+1}}.\end{cases}
\end{equation}
As such, the BS$\to$user equivalent channel over the routing path $\Omega$ can be modeled as
\begin{equation}\label{Eq:BU_Channel}
\boldsymbol{g}^H_{ 0,J+1}=\boldsymbol{g}^{H}_{s_{\mu(K_{a})}, J+1}\eta_{s_{\mu(K_{a})}} \boldsymbol{\Phi}_{s_{\mu(K_{a})}} 
    \left(\prod_{k_{\mathrm{a}}=1}^{K_{\mathrm{a}}-1}  \boldsymbol{G}_{s_{\mu(k_{a})}, s_{\mu(k_{a}+1)}}\eta_{s_{\mu(k_{a})}} \boldsymbol{\Phi}_{s_{\mu(k_{a})}}\right) \boldsymbol{G}_{0, s_{\mu(1)}}.
    \vspace{-6pt}
\end{equation}
It is worth mentioning that each selected active IRS amplifies both the incident signal and non-negligible amplification noise at each active reflecting element. Let $\boldsymbol{G}_{0,s_{\mu(k_{a})}}$ denote the effective channel from the BS to the selected active IRS $s_{\mu(k_{a})}$, defined as
\vspace{-0.3cm}
\begin{equation}\label{Eq:BS_ac}
\boldsymbol{G}_{0,s_{\mu(k_{a})}} \triangleq\begin{cases} \boldsymbol{G}_{0, s_{\mu(1)}} & \text { if } k_{a}=1\\ 
\left(\prod_{\ell=1}^{k_{a}-1}  \boldsymbol{G}_{s_{\mu(\ell)}, s_{\mu(\ell+1)}}\eta_{s_{\mu(\ell)}} \boldsymbol{\Phi}_{s_{\mu(\ell)}}\right) \boldsymbol{G}_{0, s_{\mu(1)}}& \text {otherwise}. \end{cases}
\end{equation}
Then, due to the active-IRS amplification power constraint, we have \cite{you2021wireless}
\vspace{-0.3cm}
{
\begin{equation}\label{Constraint:power}
\!\!\eta_{s_{\mu(k_{a})}}^{2} (P_{\mathrm{B}}\|  \boldsymbol{\Phi}_{s_{\mu(k_{a})}} \boldsymbol{G}_{0,s_{\mu(k_{a})}} \boldsymbol{w}_{\rm B}\|^{2}
\!+\sigma_{\mathrm{F}}^{2}\sum_{i=1}^{k_{a}-1}\|\boldsymbol{\Phi}_{s_{\mu(k_{a})}}\prod_{\ell=i}^{k_{a}-1}\widetilde{\boldsymbol{G}}_{s_{\mu(\ell)},s_{\mu(\ell+1)}}\|^2+\sigma_{\mathrm{F}}^{2}\|\boldsymbol{\Phi}_{s_{\mu(k_{a})}} \mathbf{I}_{N}\|^{2})\leq P_{s_{\mu(k_{a})}},
\end{equation}
where $P_{s_{\mu(k_{a})}}$ denotes the maximum amplification power of the selected active IRS $s_{\mu(k_{a})}$, and  $\widetilde{\boldsymbol{G}}_{s_{\mu(\ell)},s_{\mu(\ell+1)}}\triangleq\boldsymbol{G}_{s_{\mu(\ell)},s_{\mu(\ell+1)}}\eta_{s_{\mu(\ell)}} \boldsymbol{\Phi}_{s_{\mu(\ell)}}$.}

\noindent Based on the above, the received signal at the user across multiple active and passive IRSs is given by
\vspace{-0.2cm}
\begin{equation}\label{MAMP_signal}
    y_{\rm MAMP}=
     \underbrace{\boldsymbol{g}_{ 0,J+1}^{H}\boldsymbol{w}_{\rm B} x}_{\rm signal} 
    +\underbrace{\sum_{k_{a}=1}^{K_{a}} \boldsymbol{g}^H_{s_{\mu(k_{a})},J+1} \eta_{s_{\mu(k_{a})}}\boldsymbol{\Phi}_{s_{\mu(k_{a})}} \boldsymbol{n}_{s_{\mu(k_{a})}}}_{\rm accumulated \;amplification \; noise} 
    +n_{0},
    \vspace{-6pt}
\end{equation}
where
\vspace{-0.5cm}
\begin{equation}\label{after_active}
\boldsymbol{g}_{s_{\mu(k_{a})},J+1}^{H}\triangleq\begin{cases} \boldsymbol{g}^{H}_{s_{\mu(K_{a})}, J+1} & \text { if } k_{a}=K_{a}\\ 
\boldsymbol{g}^{H}_{s_{\mu(K_{a})}, J+1}\eta_{s_{\mu(K_{a})}}\boldsymbol{\Phi}_{s_{\mu(K_{a})}}   \\\times\left(\prod_{\ell = k_{a}+1}^{K_{a}-1}\boldsymbol{G}_{s_{\mu(\ell)}, s_{\mu(\ell+1)}}\eta_{s_{\mu(\ell)}} \boldsymbol{\Phi}_{s_{\mu(\ell)}}\right) \boldsymbol{G}_{s_{\mu(k_{a})}, s_{\mu(k_{a}+1)}}
& \text { otherwise}. \end{cases}
\end{equation}
denotes the effective cascaded channel from the selected active IRS $s_{\mu(k_{a})}$ to the user.
\vspace{-0.4cm}
\begin{example}
\emph{Fig.~\ref{fig:my_example} illustrates an example of the MAMP-IRS routing path consisting of six active/passive IRSs, which is denoted by $\Omega=\{s_{1}, s_{2}, s_{3}, s_{4},  s_{5}, s_{6}\}$. In particular, two of them are active IRSs (red circles) whose routing-indices are $\{\mu(1),\mu(2)\}=\{2, 4\}$, i.e., $s_{\mu(1)}=s_{2} \in \Omega$ and $s_{\mu(2)}=s_4 \in \Omega$. According to \eqref{Eq:BU_Channel}, the BS$\to$user channel is given by
\vspace{-6pt}
\begin{equation}\label{Eq:K_5_A}
    \boldsymbol{g}_{0,J+1}^H=\boldsymbol{g}^{H}_{s_{4}, J+1}\eta_{s_{4}} \boldsymbol{\Phi}_{s_{4}} 
  \boldsymbol{G}_{s_{2}, s_{4}}\eta_{s_{2}} \boldsymbol{\Phi}_{s_{2}} \boldsymbol{G}_{0, s_{2}},
  \vspace{-6pt}
\end{equation}
where we have $\boldsymbol{g}^{H}_{s_{4}, J+1}=\boldsymbol{h}^H_{s_{6}, J+1}\boldsymbol{\Psi}_{s_{6}}\boldsymbol{H}_{s_{5},s_{6}}\boldsymbol{\Psi}_{s_{5}}\boldsymbol{H}_{s_{4},s_{5}}$, $\boldsymbol{G}_{s_{2}, s_{4}}= \boldsymbol{H}_{s_{3},s_{4}}\boldsymbol{\Psi}_{s_{3}}\boldsymbol{H}_{s_{2},s_{3}}$, and $\boldsymbol{G}_{0, s_{2}} = \boldsymbol{H}_{s_{1},s_{2}} \boldsymbol{\Psi}_{s_{1}}\boldsymbol{H}_{0,s_{1}}$.
Then, we can easily obtain the corresponding signal model according to \eqref{MAMP_signal} and the amplification power constraints in~\eqref{Constraint:power}, which are omitted for brevity.} 
\vspace{-10pt}
\end{example}
\begin{figure}[t]
    \centering
    \includegraphics[width=10cm]{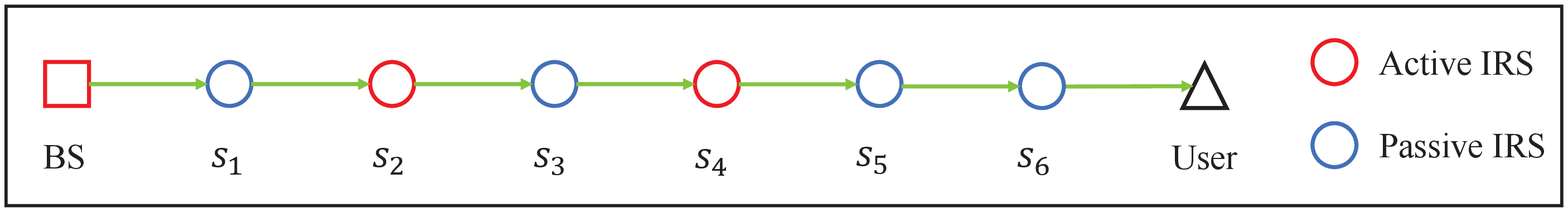}
    \caption{An illustrative example of MAMP-IRS routing path.}
    \label{fig:my_example}
    \vspace{-20pt}
\end{figure}
Based on the signal model in \eqref{MAMP_signal}, the achievable rate of the user aided by MAMP-IRS is given by
$R_{\mathrm{MAMP}}=\log_{2}\left(1+\gamma_{\mathrm{MAMP}}\right)$, where the receive SNR is
\begin{align}
    \gamma_{\rm MAMP} = \frac{P_{\rm B}|\boldsymbol{g}_{ 0,J+1}^{H}\boldsymbol{w}_{\rm B}|^2}{\sum_{k_{a}=1}^{K_{\mathrm{a}}}\| \boldsymbol{g}^H_{s_{\mu(k_{a})},J+1}\eta_{s_{\mu(k_{a})}}\boldsymbol{\Phi}_{s_{\mu(k_{a})}}\|^2 \sigma_{\mathrm{F}}^{2}+\sigma^2}.
\end{align}

Our goal is to maximize the user's achievable rate in the case of MAMP-IRS beam routing by jointly optimizing the multi-reflection routing path, as well as the beamforming of the BS, active IRSs, and passive IRSs. Mathematically, this optimization problem can be formulated as
\begin{subequations}
\begin{align}
\!\!\!\!\!\max_{\substack{\boldsymbol{w}_{\rm B}, \{\boldsymbol{\Psi}_{s_{k_{p}}}\}, \Omega,\\ \{\boldsymbol{\Phi}_{s_{\mu(k_{a})}}\}, \{\eta_{s_{\mu(k_{a})}}\}}}  ~&\log_{2} \left(1+\frac{P_{\rm B}|\boldsymbol{g}_{ 0,J+1}^{H}\boldsymbol{w}_{\rm B}|^2}{\sum_{k_{a}=1}^{K_{\mathrm{a}}}\| \boldsymbol{g}^H_{s_{\mu(k_{a})},J+1}\eta_{s_{\mu(k_{a})}}\boldsymbol{\Phi}_{s_{\mu(k_{a})}}\|^2 \sigma_{\mathrm{F}}^{2}+\sigma^2}\right)
\nn\\
\text{s.t.}~~~~~~~
& \eqref{Eq:Feasible1}-\eqref{Eq:Feasible3},\eqref{Eqs:inter-PIRS}-\eqref{after_active},\nn\\
({\bf P2}):~~~~& K_{a}\ge 1, \label{active_num}\\ &|[\boldsymbol{\Phi}_{s_{\mu(k_{a})}}]_{n,n}|=1,~~\forall n\in\mathcal{N},~s_{\mu(k_{a})} \in \Omega, \label{MAMP_active}
\\
& |[\boldsymbol{\Psi}_{s_{k_{p}}}]_{m,m}|=1,~~\forall m \in\mathcal{M},~s_{k_{p}}\in  \Omega, \label{MAMP_passive}
\\[-1cm]\nn
\end{align}
\end{subequations}
where the constraints~\eqref{Eq:Feasible1}--\eqref{Eq:Feasible3} ensure that the beam routing path is feasible;  \eqref{Constraint:power} enforces the amplification power constraints for the selected active IRSs; $\boldsymbol{g}_{ 0,J+1}^{H}$ and $\boldsymbol{g}^H_{s_{\mu(k_{a})},J+1}$ are given in \eqref{Eq:BU_Channel} and \eqref{after_active}, respectively, which are determined by the routing path $\Omega$ and the joint beamforming of the BS, active IRSs, and passive IRSs; \eqref{active_num} guarantees that there is at least one active IRS in the selected routing path; \eqref{MAMP_active} and \eqref{MAMP_passive} are the unit-modulus constraints imposed by the active and passive IRSs, respectively. Note that problem (P$2$) is a non-convex optimization problem due to the unit-modulus constraints and the intrinsic coupling among $\boldsymbol{w}_{\rm B}$, $ \boldsymbol{\Psi}_{s_{k_{p}}}$, $ \boldsymbol{\Phi}_{s_{\mu(k_{a})}}$, and $\eta_{s_{\mu(k_{a})}}$ in the objective function and path feasibility constraints~\eqref{Eq:Feasible1}--\eqref{Eq:Feasible3}. Moreover, problem (P$2$) is much more challenging than that of the passive-IRS case (i.e., problem (P$1$)), since 1) the selected active IRSs introduce additional amplification power constraints; 2) the active IRSs make the beam routing path design highly coupled; 3) each selected active IRS incurs non-negligible amplification noise 
that is determined by the beam routing path. 

To solve problem (P$2$), we first consider a special SAMP-IRS case in Section~\ref{SAMP} where only one active IRS is involved in the beam routing path and obtain useful insights into the optimal SAMP-IRS beam routing design. Then, we further consider the general case in Section~\ref{MAMP} where multiple active IRSs are involved in the beam routing and propose an efficient algorithm to obtain a near-optimal solution to problem (P$2$).
\vspace{-0.4cm}
\section{Multi-hop beam routing for SAMP-IRS aided wireless communication}\label{SAMP}
\vspace{-0.1cm}

For the SAMP-IRS aided wireless system with one active IRS and ($K-1$) passive IRSs, we denote by $\mu(1)$ the routing-index of the single active IRS, i.e.,  $s_{\mu(1)} \in \mathcal{J}_{a}$. As such, the BS$\to$user equivalent channel given in \eqref{Eq:BU_Channel} can be simplified as
\vspace{-0.5cm}
\begin{align}\label{Eq:SAMPchannel}
\!\!\!&\boldsymbol{g}_{ 0,J+1}^{H}=\boldsymbol{g}^{H}_{s_{\mu(1)}, J+1}  \eta_{s_{\mu(1)}}\boldsymbol{\Phi}_{s_{\mu(1)}}\boldsymbol{G}_{0, s_{\mu(1)}}\nn\\
\!\!\!\!\!\!&=\boldsymbol{h}_{s_{K}, J+1}^{H}\boldsymbol{\Psi}_{s_{K}}\!\!\prod_{k=\mu(1)+1}^{K-1}\left(\boldsymbol{H}_{s_{k},s_{k+1}}\boldsymbol{\Psi}_{s_{k}}\right)\boldsymbol{H}_{s_{\mu(1)},s_{\mu(1)+1}}\eta_{s_{\mu(1)}}\boldsymbol{\Phi}_{s_{\mu(1)}}\!\!
\prod_{k=1}^{\mu(1)-1}\left(\boldsymbol{H}_{s_{k},s_{k+1}} \boldsymbol{\Psi}_{s_{k}}\right)\boldsymbol{H}_{0,s_{1}}.
\end{align}
Moreover, for the SAMP-IRS case, problem (P$2$) reduces to 
\begin{subequations}
\begin{align}
\max_{\substack{\boldsymbol{w}_{\rm B}, \{\boldsymbol{\Psi}_{s_{k}}\}, \boldsymbol{\Phi}_{s_{\mu(1)}}, \eta_{s_{\mu(1)}}, \Omega} }  ~~~ &\log_{2} \left(1+\frac{ P_{\mathrm B}\left| \boldsymbol{g}_{ 0,J+1}^{H} \boldsymbol{w}_{\rm B} \right|^2 }{\| \boldsymbol{g}_{s_{\mu(1)},J+1}^{H} \eta_{s_{\mu(1)}} \boldsymbol{\Phi}_{s_{\mu(1)}}\|^{2} \sigma_{\mathrm{F}}^{2} + \sigma^{2}}\right)
\nn\\
\text{s.t.}~~~~~~~~~~~~
& \eqref{Eq:Feasible1}-\eqref{Eq:Feasible3},\eqref{Eq:SAMPchannel},\nn\\
({\bf P3}):~~~~~~~~& 
\boldsymbol{G}_{0, s_{\mu(1)}}=\prod_{k=1}^{\mu(1)-1}\left(\boldsymbol{H}_{s_{k},s_{k+1}} \boldsymbol{\Psi}_{s_{k}}\right)\boldsymbol{H}_{0,s_{1}},\label{Eq:SAMPbefore}\\
&\boldsymbol{g}_{s_{\mu(1)},J+1}^{H}
    =\boldsymbol{h}_{s_{K}, J+1}^{H}\boldsymbol{\Psi}_{s_{K}}\prod_{k=\mu(1)+1}^{K-1}\left(\boldsymbol{H}_{s_{k},s_{k+1}}\boldsymbol{\Psi}_{s_{k}}\right)\boldsymbol{H}_{s_{\mu(1)},s_{\mu(1)+1}},\label{Eq:SAMPafter}\\
    &\eta^{2}_{s_{\mu(1)}} (P_{\mathrm{B}}\|\boldsymbol{\Phi}_{s_{\mu(1)}} \boldsymbol{G}_{0, s_{\mu(1)}}  \boldsymbol{w}_{\rm B}\|^{2}+\sigma_{\mathrm{F}}^{2}\|\boldsymbol{\Phi}_{s_{\mu(1)}} \mathbf{I}_{N}\|^{2}) \leq P_{s_{\mu(1)}},\label{Eq:Power}\\
&
|[\boldsymbol{\Phi}_{s_{\mu(1)}}]_{n,n}|=1,~~\forall n\in\mathcal{N}, \label{Constraint:Pas_unit}
\\
& |[\boldsymbol{\Psi}_{s_{k}}]_{m,m}|=1,~~\forall m \in\mathcal{M},~s_{k} \in \Omega \setminus s_{\mu(1)}.\label{Constraint:Act_unit}
\end{align}
\end{subequations}
Nonetheless, it is still challenging to solve problem (P$3$) due to the unit-modulus constraints
and  the intrinsic coupling between the two sub-paths (see, e.g.,  the objective function and constraints in \eqref{Eq:SAMPbefore}--\eqref{Eq:Power}). To address this issue, an efficient method is proposed in the next subsection to obtain the optimal solution to problem (P$3$).
\vspace{-0.6cm}
\subsection{Proposed Solution to Problem (P$3$)}
\vspace{-0.2cm}
To solve problem (P$3$), we first optimize the joint beamforming of the BS and IRSs for any given multi-reflection path, based on which we further optimize the SAMP-IRS routing path.

\subsubsection{Joint Beamforming Optimization Given any SAMP-IRS Routing Path}
First, given any feasible SAMP-IRS routing path with the single active IRS involved (i.e., $\Omega$), problem (P$3$) reduces to the following joint beamforming optimization problem
\begin{subequations}
\begin{align}
\ ({\bf P4}): \max_{\substack{\boldsymbol{w}_{\rm B}, \{\boldsymbol{\Psi}_{s_{k}}\}, \boldsymbol{\Phi}_{s_{\mu(1)}}, \eta_{s_{\mu(1)}}} }  &\log_{2} \left(1+\frac{ P_{\mathrm B}\left| \boldsymbol{g}_{ 0,J+1}^{H} \boldsymbol{w}_{\rm B} \right|^2 }{\| \boldsymbol{g}_{s_{\mu(1)},J+1}^{H} \eta_{s_{\mu(1)}} \boldsymbol{\Phi}_{s_{\mu(1)}}\|^{2} \sigma_{\mathrm{F}}^{2} + \sigma^{2}}\right)
\nn\\
\text{s.t.}~~~~~~~~
& \eqref{Eq:SAMPchannel},\eqref{Eq:SAMPbefore}-\eqref{Constraint:Act_unit}.\nn\\[-1.2cm]
\nn
\end{align}
\end{subequations}
The optimal solution to problem (P$4$) is given as follows, which is proved in \cite{zhangSAMP}.
\vspace{-0.4cm}
\begin{lemma}\label{Lem:GivenPath}
\emph{Given any feasible SAMP-IRS routing path $\Omega$, the optimal solution to problem (P$4$) is given by
\vspace{-0.4cm}
\begin{align}
[\boldsymbol{\Psi}_{s_{k}}^*]_{m,m}&=e^{\jmath (\angle [\underline{\boldsymbol{a}_{\rm t}}(s_{k})]_{m}-\angle[\underline{\boldsymbol{a}_{\rm r}}(s_{k})]_{m})}, ~~~\forall m \in \mathcal{M},~s_{k} \in \Omega \setminus s_{\mu(1)},\label{Eq:PIRS}\\
[\boldsymbol{\Phi}_{s_{\mu(1)}}^*]_{n,n}&= e^{\jmath (\angle [\underline{\boldsymbol{a}_{\rm t}}(s_{\mu(1)})]_{n}-\angle[\underline{\boldsymbol{a}_{\rm r}}(s_{\mu(1)})]_{n})}, ~~~\forall n \in \mathcal{N},\label{Eq:AIRS} \\
\boldsymbol{w}_{\rm B}^*&=\boldsymbol{a}_{\rm t} \left(\vartheta^{\rm t}_{0,s_{1}}, U_0\right)/\|\boldsymbol{a}_{\rm t} \left(\vartheta^{\rm t}_{0,s_{1}}, U_0\right)\|,\label{Eq:w}\\
\eta_{s_{\mu(1)}}^*&=\sqrt{\frac{P_{s_{\mu(1)}}}{P_{\mathrm{B}}\|\boldsymbol{\Phi}^*_{s_{\mu(1)}} \boldsymbol{G}_{0, s_{\mu(1)}}   \boldsymbol{w}^*_{\rm B}\|^{2}+\sigma_{\mathrm{F}}^{2}\|\boldsymbol{\Phi}^*_{s_{\mu(1)}} \mathbf{I}_{N}\|^{2}}}, \label{Eq:AF}
\end{align}}
\end{lemma}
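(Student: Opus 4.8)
The plan is to exploit the rank-one structure of every LoS channel in \eqref{Eq:SAMPchannel} to collapse the cascaded expression into a product of scalar per-hop alignment gains, after which the ratio in (P$4$) becomes a monotone function of a few nonnegative scalars that can be maximized independently. First I would write each inter-node channel as $\boldsymbol{H}_{i,j}=h_{i,j}\boldsymbol{a}_{\rm r}\boldsymbol{a}_{\rm t}^{H}$ and observe that whenever a reflection matrix is sandwiched between two such factors, the quadratic form $\boldsymbol{a}_{\rm t}^{H}(s_k)\boldsymbol{\Psi}_{s_k}\boldsymbol{a}_{\rm r}(s_k)$ is a scalar (here $\boldsymbol{a}_{\rm r}(s_k),\boldsymbol{a}_{\rm t}(s_k)$ abbreviate the incoming and outgoing steering vectors at node $s_k$). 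Telescoping across the path then yields
\begin{equation}
\boldsymbol{G}_{0,s_{\mu(1)}}=\alpha\,\boldsymbol{a}_{\rm r}(s_{\mu(1)})\boldsymbol{a}_{\rm t}^{H}(\text{BS}),\qquad \boldsymbol{g}_{s_{\mu(1)},J+1}^{H}=\beta\,\boldsymbol{a}_{\rm t}^{H}(s_{\mu(1)}),\nn
\end{equation}
where $\alpha$ (resp.\ $\beta$) absorbs all channel gains and the passive scalars $\boldsymbol{a}_{\rm t}^{H}(s_k)\boldsymbol{\Psi}_{s_k}\boldsymbol{a}_{\rm r}(s_k)$ before (resp.\ after) the active IRS. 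Consequently $|\boldsymbol{g}_{0,J+1}^{H}\boldsymbol{w}_{\rm B}|^{2}=|\alpha|^2|\beta|^2\eta^2\,\rho_{\rm act}\,|\boldsymbol{a}_{\rm t}^{H}(\text{BS})\boldsymbol{w}_{\rm B}|^{2}$, with $\rho_{\rm act}\triangleq|\boldsymbol{a}_{\rm t}^{H}(s_{\mu(1)})\boldsymbol{\Phi}_{s_{\mu(1)}}\boldsymbol{a}_{\rm r}(s_{\mu(1)})|^{2}$.

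The key decoupling observation drives the rest. Since the amplification-noise term in the denominator is $\|\boldsymbol{g}_{s_{\mu(1)},J+1}^{H}\eta\boldsymbol{\Phi}_{s_{\mu(1)}}\|^{2}=\eta^{2}|\beta|^{2}\|\boldsymbol{a}_{\rm t}^{H}(s_{\mu(1)})\boldsymbol{\Phi}_{s_{\mu(1)}}\|^{2}=\eta^{2}|\beta|^{2}N$, the active-IRS phases do \emph{not} enter the denominator at all; they appear only through $\rho_{\rm act}$ in the numerator. Hence $\boldsymbol{\Phi}_{s_{\mu(1)}}$ should maximize $\rho_{\rm act}=|\sum_n \overline{[\boldsymbol{a}_{\rm t}]}_n e^{\jmath\phi_n}[\boldsymbol{a}_{\rm r}]_n|^2$, and by the triangle inequality the maximizer aligns every summand, i.e.\ $\phi_n=\angle[\boldsymbol{a}_{\rm t}]_n-\angle[\boldsymbol{a}_{\rm r}]_n$, giving \eqref{Eq:AIRS}; the same argument applied to each passive scalar yields \eqref{Eq:PIRS}, and Cauchy--Schwarz on $|\boldsymbol{a}_{\rm t}^{H}(\text{BS})\boldsymbol{w}_{\rm B}|$ with $\|\boldsymbol{w}_{\rm B}\|=1$ gives the matched beamformer \eqref{Eq:w}. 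To justify treating each factor independently, I would substitute the power-optimal $\eta$ and verify that, after cancellation, the SNR has the form $cA/(d+eA)$ in the pre-path gain $A\triangleq|\alpha|^2|\boldsymbol{a}_{\rm t}^{H}(\text{BS})\boldsymbol{w}_{\rm B}|^2$, the form $c'|\beta|^2/(d'+e'|\beta|^2)$ in the post-path gain $|\beta|^2$, and is linear in $\rho_{\rm act}$; each is monotonically increasing, so greedy per-node maximization is globally optimal.

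Finally, for $\eta_{s_{\mu(1)}}$ I would note that before substitution the SNR equals $P_{\rm B}A\rho_{\rm act}|\beta|^2\eta^2/(N\sigma_{\rm F}^2|\beta|^2\eta^2+\sigma^2)$, which is strictly increasing in $\eta^2$; therefore the amplification constraint \eqref{Eq:Power} must bind at the optimum, and solving for $\eta$ (using $\|\boldsymbol{\Phi}^*\boldsymbol{G}_{0,s_{\mu(1)}}\boldsymbol{w}_{\rm B}^*\|^2=|\alpha|^2|\boldsymbol{a}_{\rm t}^{H}(\text{BS})\boldsymbol{w}_{\rm B}^*|^2N$ and $\|\boldsymbol{\Phi}^*\mathbf{I}_{N}\|^2=N$) reproduces \eqref{Eq:AF}. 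The hard part is precisely this ratio structure: the post-active passive IRSs and $\eta$ influence both the received signal and the amplified noise, so one cannot naively maximize the numerator. The resolution rests on two facts I would check carefully --- that the active-IRS phase leaves the noise power invariant because $\|\boldsymbol{a}_{\rm t}^{H}\boldsymbol{\Phi}\|^2=N$, and that, somewhat counterintuitively, the SNR still increases in the active-IRS input power $A$ even though the optimal $\eta$ decreases with $A$ to honor the power budget.
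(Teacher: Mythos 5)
Your proposal is correct, and it is essentially the expected argument: the paper itself gives no in-document proof of Lemma~\ref{Lem:GivenPath} (it defers to the conference version \cite{zhangSAMP}), but your route --- collapsing the cascaded LoS channels to per-hop rank-one scalars, noting that both the amplification-noise term and the power constraint \eqref{Eq:Power} are invariant to the active-IRS phases since $\|\boldsymbol{a}_{\rm t}^{H}\boldsymbol{\Phi}\|^{2}=\|\boldsymbol{\Phi}\boldsymbol{a}_{\rm r}\|^{2}=N$, aligning phases and applying Cauchy--Schwarz for $\boldsymbol{w}_{\rm B}$, arguing the constraint binds because the SNR is increasing in $\eta^{2}$, and verifying post-substitution monotonicity in the pre- and post-path gains --- is exactly the structure the paper relies on downstream, e.g., in the gain expressions \eqref{Eq:BA}--\eqref{Eq:eta} and the monotonicity reasoning in the proof of Lemma~\ref{Lem:Decouple}. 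In particular, your explicit check that the SNR remains increasing in the active-IRS input power $A$ even though the bound-achieving $\eta$ decreases with $A$ is the one step a careless argument would miss, and you handle it correctly.
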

\noindent where $\underline{\boldsymbol{a}_{\rm t}}(s_{k}) \triangleq \boldsymbol{a}_{\rm t} (\vartheta^{\rm t}_{s_{k},s_{k+1}}, \theta^{\rm t}_{s_{k},s_{k+1}}, U_{s_{k}})$ and $\underline{\boldsymbol{a}_{\rm r}}(s_{k}) \triangleq \boldsymbol{a}_{\rm r} (\vartheta^{\rm r}_{s_{k},s_{k+1}}, \theta^{\rm r }_{s_{k},s_{k+1}}, U_{s_{k}})$, $s_{k}\in \Omega$ with $\underline{\boldsymbol{a}_{\rm t}}(s_{K}) \triangleq \boldsymbol{a}_{\rm t} (\vartheta^{\rm t}_{{s_{K}},{J+1}}, \theta^{\rm t}_{{s_{K}},{J+1}}, U_{s_{K}})$ and $\underline{\boldsymbol{a}_{\rm r}}(s_{1}) \triangleq \boldsymbol{a}_{\rm r}(\vartheta^{\rm r}_{0,s_{1}}, \theta^{\rm r}_{0,s_{1}}, U_{s_{1}})$.

Lemma~\ref{Lem:GivenPath} shows that the optimal joint beamforming design of the BS and active/passive IRSs are closely determined by the SAMP-IRS routing path (see, \eqref{Eq:PIRS}--\eqref{Eq:AF}). Specifically, the active-IRS amplification factor is determined by the path-loss of the BS$\to$active IRS reflection link. The smaller the path-loss, the smaller the amplification factor. 

\subsubsection{SAMP-IRS Routing Optimization}\label{SAMP-IRS-path}
Given the optimal beamforming design in Lemma \ref{Lem:GivenPath}, we define
\vspace{-0.5cm}
\begin{align}
f_{\rm BA}(\Omega)&\triangleq {\| \boldsymbol{G}_{0, s_{\mu(1)}} \boldsymbol{w}_B^*\|}^2=\frac{M^{2\mu(1)-2} T \beta^{\mu(1)}}{d_{0, s_{1}}^{2} \prod_{k=1}^{\mu(1)-1} d_{s_{k}, s_{k+1}}^{2}}, \label{Eq:BA}\\
f_{\rm AU}(\Omega)&\triangleq {\| \boldsymbol{g}_{s_{\mu(1)},J+1}^{H} \|}^2
=\frac{M^{2K-2\mu(1)} \beta^{K-\mu(1)+1}}{d_{s_{K}, J+1}^{2} d_{s_{\mu(1)}, s_{\mu(1)+1}}^{2} \prod_{k=\mu(1)+1}^{K-1} d_{s_{k}, s_{k+1}}^{2}},\label{Eq:AU}\\
\eta_{s_{\mu(1)}}^2(\Omega)& =\frac{P_{s_{\mu(1)}}}{N (P_{\mathrm{B}} {f}_\mathrm{BA}(\Omega)+\sigma_{\mathrm{F}}^{2})},\label{Eq:eta}
\end{align}
where $f_{\rm BA}(\Omega)$ and $f_{\rm AU}(\Omega)$ represent respectively the end-to-end channel power gain of the BS$\to$active IRS and active IRS$\to$user reflection sub-paths. 
Substituting \eqref{Eq:BA}--\eqref{Eq:eta} into problem (P$3$) yields the following equivalent problem for the SAMP-IRS routing optimization. 
\begin{subequations}
\begin{align}
({\bf P5}):\ \max_{\substack{ \Omega} }  ~~ \frac{P_{\mathrm{B}} N f_{\rm AU}(\Omega)f_{\rm BA}(\Omega)}{f_{\rm AU}(\Omega)  \sigma_{\mathrm{F}}^{2}+\frac{\sigma^{2}(P_{\mathrm{B}} f_{\rm BA}(\Omega)+\sigma_{\mathrm{F}}^{2})}{P_{s_{\mu(1)}}}}
~~~~
\text{s.t.}~~
 \eqref{Eq:Feasible1}-\eqref{Eq:Feasible3}.\nn
\end{align}
\vspace{-16pt}
\end{subequations}

Although much simpler, problem (P$5$) is still challenging to solve due to the coupling between the two sub-paths in the objective function and constraints of \eqref{Eq:Feasible1}--\eqref{Eq:Feasible3}. To deal with this challenge, we first present an important result below.
\vspace{-10pt}
\begin{lemma}\label{Lem:Decouple}
\emph{The solution to problem (P$5$) can be obtained by solving the following two subproblems separately.
\begin{subequations}
\vspace{-0.3cm}
\begin{align}
&({\bf {P6.a}}):\ \ \
\max_{\substack{ \Omega_{\rm BA}} } ~~ \frac{M^{2\mu(1)-2} T \beta^{\mu(1)}}{d_{0, s_{1}}^{2} \prod_{k=1}^{\mu(1)-1} d_{s_{k}, s_{k+1}}^{2}}
~~~~
\text{s.t.}~~~~
 \eqref{Eq:Feasible1}-\eqref{Eq:Feasible3},\nn\\
&({\bf {P6.b}}):\ \ \max_{\substack{\Omega_{\rm AU} } } ~~ \frac{M^{2K-2\mu(1)} \beta^{K-\mu(1)+1}}{d_{s_{K}, J+1}^{2} d_{s_{\mu(1)}, s_{\mu(1)+1}}^{2} \prod_{k=\mu(1)+1}^{K-1} d_{s_{k}, s_{k+1}}^{2}}
~~~~
\text{s.t.}~~~~
 \eqref{Eq:Feasible1}-\eqref{Eq:Feasible3}.\nn
\end{align}
\end{subequations}
where $\Omega_{\rm BA}\triangleq\{s_1, s_2, \ldots, s_{\mu(1)-1}\}$ and $\Omega_{\rm AU}\triangleq 
\{s_{\mu(1)+1}, s_{\mu(1)+2}, \ldots, s_{K}\}$.}
\end{lemma}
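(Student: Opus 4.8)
The plan is to collapse the coupled objective of (P$5$) into two independent channel-power-gain maximizations by establishing strict monotonicity. First I would fix the active IRS $s_{\mu(1)}$, so that $P_{\mathrm{B}}$, $N$, $\sigma_{\mathrm{F}}^{2}$, $\sigma^{2}$ and $P_{s_{\mu(1)}}$ are all positive constants, and rewrite the objective of (P$5$) by dividing its numerator and denominator by $f_{\rm BA}(\Omega)\,f_{\rm AU}(\Omega)>0$, obtaining
\begin{equation}
\frac{P_{\mathrm{B}}N}{\dfrac{\sigma_{\mathrm{F}}^{2}}{f_{\rm BA}(\Omega)}+\dfrac{\sigma^{2}P_{\mathrm{B}}}{P_{s_{\mu(1)}}\,f_{\rm AU}(\Omega)}+\dfrac{\sigma^{2}\sigma_{\mathrm{F}}^{2}}{P_{s_{\mu(1)}}\,f_{\rm BA}(\Omega)\,f_{\rm AU}(\Omega)}}.\nn
\end{equation}
Each of the three terms in the denominator is positive and nonincreasing in both gains, with $\sigma_{\mathrm{F}}^{2}/f_{\rm BA}(\Omega)$ strictly decreasing in $f_{\rm BA}(\Omega)$ and $\sigma^{2}P_{\mathrm{B}}/(P_{s_{\mu(1)}}f_{\rm AU}(\Omega))$ strictly decreasing in $f_{\rm AU}(\Omega)$. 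Hence the denominator strictly decreases whenever either gain grows, so the objective of (P$5$) is strictly increasing in both $f_{\rm BA}(\Omega)$ and $f_{\rm AU}(\Omega)$.

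Second, I would invoke the optimal-beamforming closed forms \eqref{Eq:BA} and \eqref{Eq:AU} to observe that $f_{\rm BA}(\Omega)$ is determined solely by the hop distances $d_{0,s_{1}},\ldots,d_{s_{\mu(1)-1},s_{\mu(1)}}$ of the BS$\to$active-IRS sub-path $\Omega_{\rm BA}$, while $f_{\rm AU}(\Omega)$ is determined solely by $d_{s_{\mu(1)},s_{\mu(1)+1}},\ldots,d_{s_{K},J+1}$ of the active-IRS$\to$user sub-path $\Omega_{\rm AU}$; the two gains are thus governed by disjoint sets of routing variables. Combining this functional separation with the monotonicity above, any path that maximizes $f_{\rm BA}$ over $\Omega_{\rm BA}$ and, independently, maximizes $f_{\rm AU}$ over $\Omega_{\rm AU}$ also maximizes the objective of (P$5$); these two maximizations are exactly (P$6$.a) and (P$6$.b). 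This monotonicity-plus-separation step is the routine part of the argument.

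The main obstacle is that neither monotonicity nor functional separation enforces the \emph{global} distinctness constraint \eqref{Eq:Feasible1}: an IRS may not be reused, yet (P$6$.a) and (P$6$.b) impose distinctness only \emph{within} each sub-path, so a priori their two separate optima could share a passive IRS, rendering the concatenated $\Omega$ infeasible for (P$5$). Solving the two subproblems separately therefore yields, strictly speaking, only an upper bound on (P$5$), and the crux is to prove this bound is attained by a node-disjoint path, i.e.\ that imposing cross-path distinctness does not reduce the separately optimized value. I would address this within the graph-optimization reformulation of the next subsection, where the exclusion of nodes used by one sub-path can be encoded directly; the genuine difficulty, which I expect to settle either from the geometry of the LoS graph (the two optimal sub-paths lie on opposite sides of $s_{\mu(1)}$ and are spatially separated) or via an exchange argument on any shared node, is establishing that this tightness holds so that the decoupling is lossless.
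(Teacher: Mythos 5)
Your first two steps are, in fact, the paper's \emph{entire} proof. The authors fix $f_{\rm BA}$, write the SNR as $\gamma_{\mathrm{SAMP}}^{(1)}$, and show $\partial \gamma_{\mathrm{SAMP}}/\partial f_{\rm AU}>0$; then they fix $f_{\rm AU}$, write $\gamma_{\mathrm{SAMP}}^{(2)}$, and show $\partial \gamma_{\mathrm{SAMP}}/\partial f_{\rm BA}>0$; combining this joint monotonicity with the closed forms \eqref{Eq:BA} and \eqref{Eq:AU} (which, as you note, depend on disjoint sets of routing variables), they declare the decoupling proved. So the portion you dismiss as ``routine'' coincides exactly with the published argument.

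The obstacle you raise in your third step --- that constraint \eqref{Eq:Feasible1} couples the two sub-paths, since a passive IRS used in $\Omega_{\rm BA}$ may not reappear in $\Omega_{\rm AU}$, so separate optimization a priori yields only an upper bound on (P$5$) whose tightness requires the two optima to be node-disjoint --- is a genuine subtlety, and the paper's proof is silent on it. The authors sidestep it only later, in the algorithmic section: the graph for (P$7$.a) is built on the vertex set $\mathcal{V}_{\rm BA}=\{0,s_{\mu(1)}\}\cup\{\mathcal{J}_{\rm BA}\}$, where $\mathcal{J}_{\rm BA}$ is ``a set of nodes located between the BS and the active IRS, determined by the network topology,'' and edges are restricted to point outward ($d_{j,0}>d_{i,0}$); the analogous construction for the active-IRS$\to$user graph then runs over a disjoint vertex set \emph{by construction}. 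That is a deployment-topology assumption, not a proved tightness result, so your concern is not resolved by the paper either; closing it rigorously would indeed require the kind of exchange argument or explicit node-exclusion you sketch. In short, your proposal reproduces the paper's argument where the paper gives one, and is more careful than the paper precisely where the paper gives none.
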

\vspace{-0.3cm}
\begin{proof}
First, we consider the effect of $f_{\rm AU}$ on the receive SNR given any $f_{\rm BA}$. Since the receive SNR is given by
$\gamma_{\mathrm{SAMP}}^{(1)}= \frac{P_{\mathrm{B}} N f_{\rm BA}}{  \sigma_{\mathrm{F}}^{2}+\frac{\sigma^{2}(P_{\mathrm{B}} f_{\rm BA}+\sigma_{\mathrm{F}}^{2})}{P_{s_{\mu(1)}}f_{\rm AU}}}$,
it can be easily shown that $\frac{\partial \gamma_{\mathrm{SAMP}}}{\partial f_{\rm AU}} > 0$, and hence maximizing $\gamma_{\mathrm{SAMP}}^{(1)}$ is equivalent to maximizing $f_{\rm AU}$ and  independent of $f_{\rm BA}$. Second,  we consider the effect of $f_{\rm BA}$ given any  $f_{\rm AU}$. The corresponding receive SNR is given by $\gamma_{\mathrm{SAMP}}^{(2)}= \frac{P_{\mathrm{B}}N f_{\rm AU}}{\frac{f_{\rm AU}  \sigma_{\mathrm{F}}^{2}}{f_{\rm BA}}+\frac{\sigma^{2}(P_{\mathrm{B}} +\sigma_{\mathrm{F}}^{2}/f_{\rm BA})}{P_{s_{\mu(1)}}}}$,
for which we can easily show that  $\frac{\partial \gamma_{\mathrm{SAMP}}}{\partial f_{\rm BA}}>0$. This indicates that regardless of the routing path from the active IRS to user, it is always optimal to maximize $f_{\rm BA}$ for maximizing $\gamma_{\mathrm{SAMP}}^{(2)}$.
These lead to the conclusion that maximizing the receive SNR over the multi-reflection path can be equivalently transformed to the channel power gain maximization for the two sub-paths. Combining the above results with \eqref{Eq:BA} and \eqref{Eq:AU} leads to the desired result, thus completing the proof.
\vspace{-8pt}
\end{proof}

\begin{remark}[Decomposable routing path]\label{Rem:RP}
\emph{Lemma~\ref{Lem:Decouple} shows an interesting result that the SAMP-IRS routing design can be equivalently decomposed into two separate sub-path routing designs, which can be intuitively explained as follow. First, the BS$\to$active IRS reflection path determines the incident signal power at the active IRS. The larger the incident signal power, the smaller the effective received noise at the user and hence a higher SNR. Second, the active IRS$\to$user reflection path determines the path-loss of the reflected signal. The smaller the path-loss, the larger the received signal power and hence a higher SNR.}
\vspace{-8pt}
\end{remark}

Next, we solve the optimization problems  (P$6$.a) and  (P$6$.b) by using graph theory \cite{mei2020cooperative}.

$\bullet$ \emph{Problem reformulation:}\label{Sec:Alg}
First, it can be shown that maximizing $f_{\rm BA}(\Omega)$ in problem (P$6$.a) and $f_{\rm AU}(\Omega)$ in problem (P$6$.b) are equivalent to minimizing 
\vspace{-0.3cm}
\begin{align}\label{Eq:inv_G_BA}
\frac{1}{f_{\rm BA}(\Omega)}&=\frac{M^{2}}{T} \cdot \frac{d_{0, s_{1}}^{2}}{M^{2} \beta}   \prod_{k=1}^{\mu(1)-1} \frac{d_{s_{k}, s_{k+1}}^{2}}{M^{2} \beta},\\
\label{Eq:inv_g_AU}
\!\!\!\frac{1}{f_{\rm AU}(\Omega)}\!&=\!M^{2}\cdot\frac{ d_{s_{\mu(1)}, s_{\mu(1)+1}}^{2}}{M^{2} \beta} \!\!\cdot\!\! \frac{d_{s_{K}, J+1}^{2}}{M^{2} \beta} \!\!\!\prod_{k=\mu(1)+1}^{K-1}\!\! \frac{d_{s_{k}, s_{k+1}}^{2}}{M^{2} \beta}.\!\!
\end{align}
Then, by taking the logarithm of \eqref{Eq:inv_G_BA} and \eqref{Eq:inv_g_AU} and ignoring irrelevant constant terms, problems  (P$6$.a) and (P$6$.b) can be reformulated as 
\vspace{-0.3cm}
\begin{align}
\label{P5:shortest}
({\bf P7.a}):~~
~\min _{\Omega_{\rm BA}} ~~&\ln \frac{d_{0, s_{1}}}{M \sqrt{\beta}}+\sum_{k=1}^{\mu(1)-1} \ln \frac{d_{s_{k}, s_{k+1}}}{M \sqrt{\beta}} 
~~~~
\text{s.t.}~~~~
 \eqref{Eq:Feasible1}-\eqref{Eq:Feasible3},\nn\\
({\bf P7.b}):~~
~\min _{\Omega_{\rm AU}} ~~& \ln \frac{d_{s_{\mu(1)}, s_{\mu(1)+1}}}{M \sqrt{\beta}}+\ln \frac{d_{s_{K}, J+1}}{M \sqrt{\beta}}
+\sum_{k=\mu(1)+1}^{K-1} \ln \frac{d_{s_{k}, s_{k+1}}}{M \sqrt{\beta}}
~~~~
\text{s.t.}~~~~
 \eqref{Eq:Feasible1}-\eqref{Eq:Feasible3}.\nn
\end{align}

$\bullet$ \emph{Modified Shortest-path Algorithm:}
As problems (P$7$.a) and (P$7$.b) have similar forms, we only present the algorithm for solving (P$7$.a) in the sequel, while the same method can be used for solving problem (P$7$.b). To be specific, we first recast problem (P$7$.a) into a shortest simple-path problem (SSPP), by constructing a directed weighted graph $\mathcal{G}_{\mathrm{BA}}=(\mathcal{V}_{\mathrm{BA}}, \mathcal{E}_{\mathrm{BA}})$, where the vertex set is defined as $\mathcal{V}_{\mathrm{BA}}\triangleq\{0, s_{\mu(1)}\} \cup \{\mathcal{J}_{\rm BA}\}$ (note that $\mathcal{J}_{\rm BA}$ is a set of nodes located between the BS and the active IRS, which is determined by the network topology). Moreover, there exists an edge between any two vertexes $i,j \in \mathcal{V}_{\mathrm{BA}}$ if  $a_{i,j}=1$ and $d_{j,0} > d_{i,0}$, and hence the edge set is expressed as $\mathcal{E}_{\mathrm{BA}}\triangleq\{(0,j)|a_{0,j}=1, j\in \mathcal{V}_{\mathrm{BA}}\} \cup \{(i,j)|a_{i,j}=1, d_{j,0}>d_{i,0}, i,j\in \mathcal{V}_{\mathrm{BA}}\} \cup \{(j,s_{\mu(1)})|a_{j,s_{\mu(1)}}=1, j\in \mathcal{V}_{\mathrm{BA}}\}$. In addition, the weight between any two nodes $i,j \in \mathcal{V}_{\mathrm{BA}}$ is defined as $W_{i, j}=\ln (d_{i, j}/M \sqrt{\beta}) $. 

Next, problem (P$7$.a) can be solved by using the graph-optimization algorithm. It is worth noting that different from the conventional graph with positive edge weights only, there may exist negative edge weights in the constructed graph $\mathcal{G}_{\mathrm{BA}}$  (i.e., $d_{i, j}<M \sqrt{\beta}$). Thus, we devise an efficient algorithm below to solve problem (P$7$.a) by taking into account the cases with and without negative weights, respectively \cite{mei2020cooperative}. Specifically, if there are no negative weights, the classical Dijkstra algorithm can be directly applied to solve problem (P$7$.a). On the other hand, if there exist negative weights, the classical Dijkstra algorithm cannot guarantee to attain the optimal solution in general as it operates in a greedy manner. Thus, we resort to a recursive algorithm (i.e., dynamic programming) to solve the all hops shortest paths (AHSP) problem~\cite{cheng2004finding}, aiming to find the shortest paths from a given source to any other node in the constructed graph for all possible hop counts. Accordingly, the shortest path of the SSPP can be easily obtained, with the details omitted due to limited space.

\vspace{-0.6cm}
\subsection{Select Active IRS for Beam Routing or Not?}
\vspace{-0.2cm}
In this subsection, we compare the achievable rates of the multi-hop beam routing designs with and without the active IRS involved, based on which we shed key insights into the active-IRS selection for beam routing. 

First, consider the SAMP-IRS beam routing design with the active IRS. Let $f_{\rm BA}^*$ and $f_{\rm AU}^*$  denote the maximum channel power gains of the BS$\to$active IRS and active IRS$\to$user paths, which are determined by the inter-IRS distances and number of passive reflecting elements. Then the corresponding achievable rate is given by 
$R^*_{\mathrm{SAMP}}=\log_{2}(1+\gamma^*_{\mathrm{SAMP}})$, where the maximum SNR is 
\begin{equation}\label{Eq:active}
\gamma^*_{\mathrm{SAMP}}=\frac{P_{\mathrm{B}} N f_{\rm AU}^*  f_{\rm BA}^*}
{ f_{\rm AU}^* \sigma_{\mathrm{F}}^{2} + \frac{\sigma^{2}(P_\mathrm{B} f_{\rm BA}^*+\sigma_{\mathrm{F}}^{2})}{P_{s_{\mu(1)}}}}.
\vspace{-6pt}
\end{equation}
Next, for the passive-IRS beam routing design, we denote by $\tilde{f}_{\rm BU}^*=|(\tilde{\boldsymbol{g}}_{\mathrm{BU}}^{H})^*  \boldsymbol{w}^*_{\rm B}|^2$ the maximum achievable channel power gain \cite{mei2020cooperative}. Then the achievable rate is given by 
$\tilde{R}^*_{\mathrm{pas}}=\log_{2}(1+\tilde{\gamma}^*_{\mathrm{pas}})$, where its maximum SNR is $\tilde{\gamma}^*_{\mathrm{pas}}={P_\mathrm{B} \tilde{f}_{\rm BU}^*}/{\sigma^{2}}$.

Comparing the above leads to the following key result.
\vspace{-6pt}
\begin{theorem}\label{The:Superior}
\emph{For the SAMP-IRS aided wireless communication system, the active IRS should be selected for beam routing (i.e., $R^*_\mathrm{SAMP}\ge \tilde{R}^*_\mathrm{pas}$), if
\begin{align}
\frac{N}{\sigma_{\mathrm{F}}^{2}}  \geq \frac{\tilde{f}_{\rm BU}^*}{f_{\rm BA}^* \sigma^{2}} &+ 
\frac{P_\mathrm{B} \tilde{f}_{\rm BU}^*}{P_{s_{\mu(1)}} f_{\rm AU}^*\sigma_{\mathrm{F}}^{2}}  +\frac{\tilde{f}_{\rm BU}^*}{P_{s_{\mu(1)}} f_{\rm BA}^* f_{\rm AU}^*}.
\vspace{-22pt}
\end{align}}
\end{theorem}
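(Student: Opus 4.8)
The plan is to reduce the rate comparison to an SNR comparison via monotonicity, and then carry out a purely algebraic rearrangement. Since $R^*_{\mathrm{SAMP}}=\log_{2}(1+\gamma^*_{\mathrm{SAMP}})$ and $\tilde{R}^*_{\mathrm{pas}}=\log_{2}(1+\tilde{\gamma}^*_{\mathrm{pas}})$ with $\log_{2}(\cdot)$ strictly increasing, the desired inequality $R^*_{\mathrm{SAMP}}\ge \tilde{R}^*_{\mathrm{pas}}$ is equivalent to $\gamma^*_{\mathrm{SAMP}}\ge \tilde{\gamma}^*_{\mathrm{pas}}$. Hence it suffices to work with the closed-form SNRs: the active one from \eqref{Eq:active} and the passive one $\tilde{\gamma}^*_{\mathrm{pas}}=P_{\mathrm{B}}\tilde{f}_{\rm BU}^*/\sigma^{2}$.

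First I would record that every quantity appearing in the two SNRs — namely $P_{\mathrm{B}}$, $N$, $\sigma^{2}$, $\sigma_{\mathrm{F}}^{2}$, $P_{s_{\mu(1)}}$, and the channel power gains $f_{\rm BA}^*$, $f_{\rm AU}^*$, $\tilde{f}_{\rm BU}^*$ — is strictly positive. This positivity is exactly what licenses moving the denominator of $\gamma^*_{\mathrm{SAMP}}$ across the inequality and performing the final division without reversing the sign; I would flag it as the one point to verify rather than take for granted. Cross-multiplying $\gamma^*_{\mathrm{SAMP}}\ge \tilde{\gamma}^*_{\mathrm{pas}}$ by the positive product $\sigma^{2}\big(f_{\rm AU}^*\sigma_{\mathrm{F}}^{2}+\sigma^{2}(P_{\mathrm{B}}f_{\rm BA}^*+\sigma_{\mathrm{F}}^{2})/P_{s_{\mu(1)}}\big)$ and cancelling the common factor $P_{\mathrm{B}}$ yields the equivalent condition
\[
\sigma^{2} N f_{\rm AU}^* f_{\rm BA}^* \;\ge\; \tilde{f}_{\rm BU}^*\left( f_{\rm AU}^*\sigma_{\mathrm{F}}^{2} + \frac{\sigma^{2}\left(P_{\mathrm{B}}f_{\rm BA}^* + \sigma_{\mathrm{F}}^{2}\right)}{P_{s_{\mu(1)}}} \right).
\]

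Finally I would expand the right-hand side into its three additive terms and divide both sides by the positive quantity $\sigma^{2}\sigma_{\mathrm{F}}^{2} f_{\rm AU}^* f_{\rm BA}^*$. The left-hand side collapses to $N/\sigma_{\mathrm{F}}^{2}$, while the three terms on the right become $\tilde{f}_{\rm BU}^*/(f_{\rm BA}^*\sigma^{2})$, $P_{\mathrm{B}}\tilde{f}_{\rm BU}^*/(P_{s_{\mu(1)}}f_{\rm AU}^*\sigma_{\mathrm{F}}^{2})$, and $\tilde{f}_{\rm BU}^*/(P_{s_{\mu(1)}}f_{\rm BA}^* f_{\rm AU}^*)$, reproducing the claimed condition term-by-term. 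The argument is entirely algebraic, so I do not anticipate a substantive obstacle; the only care required is the bookkeeping of which factor cancels against which term in the final division, together with the positivity check noted above, which guarantees that all the inequality manipulations are sign-preserving.
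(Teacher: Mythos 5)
Your proposal is correct and matches the paper's (implicit) argument exactly: the paper states Theorem~\ref{The:Superior} immediately after "Comparing the above leads to the following key result," meaning its proof is precisely the monotonicity-of-$\log_2$ reduction to $\gamma^*_{\mathrm{SAMP}}\ge \tilde{\gamma}^*_{\mathrm{pas}}$ followed by the sign-preserving cross-multiplication and division by $\sigma^{2}\sigma_{\mathrm{F}}^{2} f_{\rm AU}^* f_{\rm BA}^*$ that you carry out. Your algebra checks out term-by-term, and your explicit positivity remark is a sound (if routine) addition that the paper leaves tacit.
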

Based on Theorem~\ref{The:Superior}, the effects of active-IRS amplification power and number of active reflecting elements on the active-IRS selection for beam routing are characterized as follows.
\vspace{-0.3cm}
\begin{corollary}[Effect of amplification power]\label{Cor:Power}
\emph{Given the number of active reflecting elements $N$, we have $R^*_\mathrm{SAMP}\ge \tilde{R}^*_\mathrm{pas}$, if the amplification power $P_{s_{\mu(1)}}$ satisfies
\begin{equation}
P_{s_{\mu(1)}} \geq \frac{\tilde{f}_{\rm BU}^* \sigma^{2} (P_\mathrm{B} f_{\rm BA}^* + \sigma_{\mathrm{F}}^{2}) }{f_{\rm AU}^*(N f_{\rm BA}^*\sigma^{2} - \tilde{f}_{\rm BU}^* \sigma_{\mathrm{F}}^{2})}.
\vspace{-6pt}
\end{equation}}
\end{corollary}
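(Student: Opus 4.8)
The plan is to obtain Corollary~\ref{Cor:Power} as a direct algebraic consequence of the sufficient condition already established in Theorem~\ref{The:Superior}, by isolating the amplification power $P_{s_{\mu(1)}}$. Starting from the inequality in Theorem~\ref{The:Superior}, I would first group the two right-hand-side terms that carry the factor $1/P_{s_{\mu(1)}}$ and move the $P_{s_{\mu(1)}}$-free term $\tilde{f}_{\rm BU}^*/(f_{\rm BA}^* \sigma^2)$ to the left, obtaining
\[
\frac{N}{\sigma_{\mathrm{F}}^2} - \frac{\tilde{f}_{\rm BU}^*}{f_{\rm BA}^* \sigma^2} \;\geq\; \frac{1}{P_{s_{\mu(1)}}}\left(\frac{P_\mathrm{B}\,\tilde{f}_{\rm BU}^*}{f_{\rm AU}^* \sigma_{\mathrm{F}}^2} + \frac{\tilde{f}_{\rm BU}^*}{f_{\rm BA}^* f_{\rm AU}^*}\right).
\]

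Next, I would place each side over a common denominator: the left-hand side becomes $(N f_{\rm BA}^* \sigma^2 - \tilde{f}_{\rm BU}^* \sigma_{\mathrm{F}}^2)/(f_{\rm BA}^* \sigma^2 \sigma_{\mathrm{F}}^2)$, while factoring out $\tilde{f}_{\rm BU}^*/f_{\rm AU}^*$ collapses the bracketed factor on the right to $\tilde{f}_{\rm BU}^*(P_\mathrm{B} f_{\rm BA}^* + \sigma_{\mathrm{F}}^2)/(f_{\rm AU}^* f_{\rm BA}^* \sigma_{\mathrm{F}}^2)$. Cross-multiplying and cancelling the common factor $f_{\rm BA}^* \sigma_{\mathrm{F}}^2$ then reproduces precisely the stated lower bound on $P_{s_{\mu(1)}}$, completing the derivation.

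The only delicate point — and the step I expect to need explicit justification — is the sign of the coefficient $N f_{\rm BA}^* \sigma^2 - \tilde{f}_{\rm BU}^* \sigma_{\mathrm{F}}^2$ that multiplies $1/P_{s_{\mu(1)}}$ after cross-multiplication. Since $f_{\rm BA}^*, f_{\rm AU}^*, \tilde{f}_{\rm BU}^*, \sigma^2, \sigma_{\mathrm{F}}^2$, and $P_\mathrm{B}$ are all strictly positive, dividing through by this coefficient preserves the direction of the inequality only when it is positive, i.e. when $N f_{\rm BA}^* \sigma^2 > \tilde{f}_{\rm BU}^* \sigma_{\mathrm{F}}^2$. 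With $N$ held fixed, this is exactly the regime in which the denominator of the claimed bound is positive and the bound is finite; I would further remark that whenever this positivity fails the right-hand side of the corollary is non-positive, so the condition is automatically met and the statement is vacuous. Apart from this sign bookkeeping the argument is purely mechanical, so beyond carefully tracking the cancellation of common factors I do not anticipate any substantive obstacle.
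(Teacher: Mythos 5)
Your main derivation is correct and is exactly the route the paper intends: the paper gives no separate proof of Corollary~\ref{Cor:Power}, which is obtained precisely by isolating $P_{s_{\mu(1)}}$ in the inequality of Theorem~\ref{The:Superior}. Your grouping of the two $1/P_{s_{\mu(1)}}$ terms, the common-denominator step, and the cross-multiplication reproduce the stated bound, and your observation that the rearrangement preserves the inequality direction only when $N f_{\rm BA}^*\sigma^{2} - \tilde{f}_{\rm BU}^*\sigma_{\mathrm{F}}^{2} > 0$ is the right caveat to raise.

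However, your closing remark about the degenerate regime is backwards. When $N f_{\rm BA}^*\sigma^{2} \leq \tilde{f}_{\rm BU}^*\sigma_{\mathrm{F}}^{2}$, the hypothesis ``$P_{s_{\mu(1)}} \geq$ (non-positive bound)'' is \emph{automatically satisfied}, which is the opposite of vacuity; the implication then stands or falls with its conclusion, and the conclusion in fact fails there. Indeed, from \eqref{Eq:active}, for every finite $P_{s_{\mu(1)}}$ one has
\begin{equation}
\gamma^*_{\mathrm{SAMP}} < \frac{P_{\mathrm{B}} N f_{\rm BA}^*}{\sigma_{\mathrm{F}}^{2}},
\end{equation}
since the term $\sigma^{2}(P_{\mathrm{B}} f_{\rm BA}^*+\sigma_{\mathrm{F}}^{2})/P_{s_{\mu(1)}}$ in the denominator of \eqref{Eq:active} is strictly positive; hence if $N f_{\rm BA}^*\sigma^{2} \leq \tilde{f}_{\rm BU}^*\sigma_{\mathrm{F}}^{2}$ then $\gamma^*_{\mathrm{SAMP}} < P_{\mathrm{B}}\tilde{f}_{\rm BU}^*/\sigma^{2} = \tilde{\gamma}^*_{\mathrm{pas}}$ no matter how large the amplification power is. So the corollary must be read as implicitly assuming $N > \tilde{f}_{\rm BU}^*\sigma_{\mathrm{F}}^{2}/(f_{\rm BA}^*\sigma^{2})$ (equivalently, that the displayed bound is positive and finite); with that reading, your derivation is complete and coincides with the paper's.
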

Corollary~\ref{Cor:Power} shows that it is preferred to select the active IRS for cooperatively constructing a multi-reflection path if the active-IRS amplification power is sufficiently large. This is intuitively expected since a higher amplification power will lead to a higher amplification factor.
\vspace{-0.3cm}
\begin{corollary}[Effect of number of active reflecting elements]\label{Cor:AE}
\emph{Given the active amplification power $P_{s_{\mu(1)}}$, we have $R^*_\mathrm{SAMP}\ge \tilde{R}^*_\mathrm{pas}$, if the number of active reflecting elements $N$ satisfies
\begin{equation}
 N \geq \frac{\tilde{f}_{\rm BU}^* \sigma_{\mathrm{F}}^{2}}{f_{\rm BA}^*\sigma^{2}} + 
\frac{P_\mathrm{B} \tilde{f}_{\rm BU}^*  }{P_{s_{\mu(1)}} f_{\rm AU}^*} 
+\frac{\tilde{f}_{\rm BU}^* \sigma_{\mathrm{F}}^{2}}{P_{s_{\mu(1)}} f_{\rm BA}^* f_{\rm AU}^*}.
\vspace{-6pt}
\end{equation}}
\end{corollary}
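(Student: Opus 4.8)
The plan is to obtain Corollary~\ref{Cor:AE} as an immediate consequence of Theorem~\ref{The:Superior}, which I may invoke directly. First I would note that Theorem~\ref{The:Superior} already supplies a sufficient condition for active-IRS selection: $R^*_\mathrm{SAMP}\ge \tilde{R}^*_\mathrm{pas}$ holds whenever
\[
\frac{N}{\sigma_{\mathrm{F}}^{2}} \geq \frac{\tilde{f}_{\rm BU}^*}{f_{\rm BA}^* \sigma^{2}} + \frac{P_{\mathrm{B}} \tilde{f}_{\rm BU}^*}{P_{s_{\mu(1)}} f_{\rm AU}^*\sigma_{\mathrm{F}}^{2}} + \frac{\tilde{f}_{\rm BU}^*}{P_{s_{\mu(1)}} f_{\rm BA}^* f_{\rm AU}^*}.
\]
The structural feature I would exploit is that the number of active reflecting elements $N$ enters this inequality only through the single term $N/\sigma_{\mathrm{F}}^{2}$ on the left-hand side, whereas every quantity appearing on the right-hand side is fixed once the amplification power $P_{s_{\mu(1)}}$ is given.

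To make the isolation of $N$ legitimate, the next step is to confirm that the right-hand side is genuinely independent of $N$. From the closed-form channel power gains in \eqref{Eq:BA} and \eqref{Eq:AU}, both $f_{\rm BA}^*$ and $f_{\rm AU}^*$ are functions only of the passive-element count $M$, the BS antenna number $T$, the reference gain $\beta$, and the inter-node distances along the two sub-paths; likewise $\tilde{f}_{\rm BU}^*$ is the passive-IRS-only channel power gain and carries no dependence on $N$, since the active IRS is absent from that routing design. Consequently the whole right-hand side is a constant for a fixed $P_{s_{\mu(1)}}$, so the condition is in fact a linear lower bound on $N$.

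Finally, since $\sigma_{\mathrm{F}}^{2}>0$, I would multiply both sides of the above inequality by $\sigma_{\mathrm{F}}^{2}$; this preserves the inequality direction and clears the denominator on the left, yielding exactly
\[
N \geq \frac{\tilde{f}_{\rm BU}^* \sigma_{\mathrm{F}}^{2}}{f_{\rm BA}^*\sigma^{2}} + \frac{P_{\mathrm{B}} \tilde{f}_{\rm BU}^*}{P_{s_{\mu(1)}} f_{\rm AU}^*} + \frac{\tilde{f}_{\rm BU}^* \sigma_{\mathrm{F}}^{2}}{P_{s_{\mu(1)}} f_{\rm BA}^* f_{\rm AU}^*},
\]
which is precisely the bound asserted in Corollary~\ref{Cor:AE}. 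As the two inequalities are equivalent, any $N$ meeting this bound satisfies the hypothesis of Theorem~\ref{The:Superior} and therefore guarantees $R^*_\mathrm{SAMP}\ge \tilde{R}^*_\mathrm{pas}$. I expect no real analytical obstacle here: the only point deserving care is the verification that $f_{\rm BA}^*$, $f_{\rm AU}^*$, and $\tilde{f}_{\rm BU}^*$ are $N$-independent, so that the rearrangement does not secretly leave $N$ on the right-hand side; once that is established the corollary follows by a one-step scaling rather than any fresh optimization.
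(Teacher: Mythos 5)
Your proposal is correct and matches the paper's treatment: the paper presents Corollary~\ref{Cor:AE} as an immediate algebraic consequence of Theorem~\ref{The:Superior}, obtained exactly as you do by multiplying the theorem's condition through by $\sigma_{\mathrm{F}}^{2}>0$ (with the $\sigma_{\mathrm{F}}^{2}$ cancelling in the middle term) so that $N$ is isolated on the left. Your additional check that $f_{\rm BA}^*$, $f_{\rm AU}^*$, and $\tilde{f}_{\rm BU}^*$ are independent of $N$, via \eqref{Eq:BA} and \eqref{Eq:AU}, is the right point of care and is consistent with the paper.
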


Corollary~\ref{Cor:AE} indicates that it is desirable to select the active IRS for beam routing if the number of active reflecting elements is sufficiently large. This is expected since the achievable rate of the SAMP-IRS beam routing has a power scaling order of $\mathcal{O}(N)$.
\vspace{-6pt}
\begin{remark}[What determines the beam routing path?]\label{Re:Affects}
\emph{
Lemma 2 shows that if the active IRS is involved in the beam routing, the best routing path is determined by the passive IRSs only via e.g., the number of passive reflecting elements $M$ and inter-IRS distances; while it is independent of the configurations (e.g., size and amplification factor) of active IRS. However, the optimal beam routing design in the proposed SAMP-IRS aided wireless system is jointly determined by the active and passive IRSs. Specifically, if the active-IRS amplification power $P_{{s_{\mu(1)}}}$ and/or number of active elements $N$ 
are sufficiently large, the active IRS should be selected for cooperative beam routing and vice versa.}   
\end{remark}
\vspace{-0.7cm}
\section{Multi-hop beam routing for MAMP-IRS aided wireless communication}\label{MAMP}
In this section, we consider the general MAMP-IRS aided wireless communication system where multiple active and passive IRSs cooperatively form a multi-reflection path to assist the downlink communication. For this case, we first characterize the receive SNR for the end-to-end MAMP-IRS beam routing path. Then, we approximate it into a more tractable form and propose an efficient algorithm to obtain a near-optimal solution to problem (P$2$) by using graph theory. The results can be straightforwardly combined with that of the passive-IRS-only case to yield the optimal MAMP-IRS beam routing similarly as in Section~\ref{SAMP-IRS-path}, with the details omitted for brevity.
\vspace{-0.4cm}
\subsection{Proposed Solution to Problem (P$2$)}
\vspace{-0.1cm}
Similar to the case of SAMP-IRS beam routing, to solve problem (P$2$), we first optimize the joint beamforming design for any feasible MAMP-IRS routing path $\Omega$, based on which we further optimize the routing path.
\subsubsection{Joint Beamforming Optimization Given any MAMP-IRS Routing Path}
First, given any feasible MAMP-IRS routing path $\Omega$, problem (P$2$) reduces to the following joint beamforming optimization problem
\begin{subequations}
\begin{align}
({\bf P8}):~~~~
\!\!\!\!\!\max_{\substack{\boldsymbol{w}_{\rm B}, \{\boldsymbol{\Psi}_{s_{k_{p}}}\},\\ \{\eta_{s_{\mu(k_{a})}}\}, \{\boldsymbol{\Phi}_{s_{\mu(k_{a})}}\}} }  &\log_{2} \left(1+\frac{P_{\rm B}|\boldsymbol{g}_{ 0,J+1}^{H}\boldsymbol{w}_{\rm B}|^2}{\sum_{k_{a}=1}^{K_{\mathrm{a}}}\| \boldsymbol{g}^H_{s_{\mu(k_{a})},J+1}\eta_{s_{\mu(k_{a})}}\boldsymbol{\Phi}_{s_{\mu(k_{a})}}\|^2 \sigma_{\mathrm{F}}^{2}+\sigma^2}\right)
\nn\\
\text{s.t.}~~~~~
& \eqref{Eqs:inter-PIRS}-\eqref{after_active},\eqref{active_num}-\eqref{MAMP_passive}.\nn
\\[-1.3cm]\nn
\end{align}
\end{subequations}
The optimal solution to problem~(P8) is obtained by using the similar method as for Lemma~\ref{Lem:GivenPath}.
\vspace{-0.2cm}
\begin{lemma}\label{Lem:MAMP_beamforming}
\emph{Given any feasible MAMP-IRS routing path $\Omega$, the optimal solution to problem (P$8$) is given by
\vspace{-0.5cm}
\begin{align}
&[\boldsymbol{\Psi}_{s_{k_{p}}}^*]_{m,m}= e^{\jmath (\angle [\underline{\boldsymbol{a}_{\rm t}}(s_{k_{p}})]_{m}-\angle[\underline{\boldsymbol{a}_{\rm r}}(s_{k_{p}})]_{m})}, ~~~\forall m \in \mathcal{M},~k_{p} \in \mathcal{K} \setminus \mathcal{K}_{a},\label{Eq:Passive_IRS}\\
&[\boldsymbol{\Phi}_{s_{\mu(k_{a})}}^*]_{n,n}=  e^{\jmath (\angle [\underline{\boldsymbol{a}_{\rm t}}(s_{\mu(k_{a})})]_{n}-\angle[\underline{\boldsymbol{a}_{\rm r}}(s_{\mu(k_{a})})]_{n})}, ~~~\forall n \in \mathcal{N}, ~\mu(k_{a}) \in \mathcal{K}_{a}, \label{Eq:Active_IRS}\\
&\boldsymbol{w}_{\rm B}^*=\boldsymbol{a}_{\rm t} \left(\vartheta^{\rm t}_{0,s_{1}}, U_0\right)/\|\boldsymbol{a}_{\rm t} \left(\vartheta^{\rm t}_{0,s_{1}}, U_0\right)\|,\label{Eq:BSActive}\\
&\eta_{s_{\mu(k_{a})}}^*\!\!=\!\!\!\sqrt{\!\frac{P_{s_{\mu(k_{a})}}}{P_{\mathrm{B}}\|\boldsymbol{\Phi}^*_{s_{\mu(k_{a})}} \boldsymbol{G}_{0,s_{\mu(k_{a})}}  \boldsymbol{w}^*_{\rm B}\|^{2}\!+\!\sum_{i=1}^{k_{a}-1}\|\boldsymbol{\Phi}_{s_{\mu(k_{a})}}\!\prod_{\ell=i}^{k_{a}-1}\widetilde{\boldsymbol{G}}_{s_{\mu(\ell)},s_{\mu(\ell+1)}}\boldsymbol{n}_{s_{\mu(\ell)}}\|^2
		\!+\!\sigma_{\mathrm{F}}^{2}\|\boldsymbol{\Phi}^*_{s_{\mu(k_{a})}} \mathbf{I}_{N}\|^{2}}}, \nn\\ 
&~~~~~~~~~~~~~~~~~~~~~~~~~~~~~~~~~~~~~~~~~~~~~~~~~~~~~~~~~~~~~~~~~~~~~~~~~~~~~~~~~~~~~~\forall \mu(k_{a}) \in \mathcal{K}_{a}.\label{Eq:AmF}
\end{align}}
\end{lemma}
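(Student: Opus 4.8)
The plan is to mirror the four-step argument behind Lemma~\ref{Lem:GivenPath}, now carried out in the presence of several cascaded active IRSs. Since $\log_2(\cdot)$ is monotonically increasing, maximizing the rate in (P$8$) is equivalent to maximizing the receive SNR $\gamma_{\rm MAMP}$, so I would work directly with the SNR. The first structural fact I would exploit is that, under the LoS model, every link $\boldsymbol{H}_{i,j}$, $\boldsymbol{H}_{0,j}$, $\boldsymbol{h}_{i,J+1}^H$ is rank-one (a scaled outer product of a receive and a transmit steering vector). Consequently, each cascaded channel in \eqref{Eqs:inter-PIRS}--\eqref{after_active} collapses into a product of scalar per-hop alignment gains of the form $\boldsymbol{a}_{\rm t}^H(\cdot)\,\boldsymbol{\Psi}\,\boldsymbol{a}_{\rm r}(\cdot)$ (or $\boldsymbol{a}_{\rm t}^H(\cdot)\,\boldsymbol{\Phi}\,\boldsymbol{a}_{\rm r}(\cdot)$ at active nodes), multiplied by the scalar channel gains $h_{i,j}$ and a single surviving steering vector. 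Each such alignment gain depends only on the phases of one reflecting matrix, and, because it is a sum of per-element terms, its modulus is maximized by aligning all summands to a common phase, i.e.\ exactly by the phase-matching choices \eqref{Eq:Passive_IRS} and \eqref{Eq:Active_IRS}; likewise the BS term $\boldsymbol{a}_{\rm t}^H(\vartheta^{\rm t}_{0,s_1},U_0)\,\boldsymbol{w}_{\rm B}$ is maximized under $\|\boldsymbol{w}_{\rm B}\|=1$ by the MRT choice \eqref{Eq:BSActive}.

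The second step is to show that these magnitude-maximizing choices remain globally optimal even though each reflection phase enters both the signal numerator and the accumulated-noise denominator of $\gamma_{\rm MAMP}$. The key is the additive receiver noise $\sigma^2$. For any single reflecting matrix, collect the modulus of its alignment gain into a scalar $x=|\alpha|^2$: both the signal power and the amplification noise injected \emph{upstream} of this node scale linearly in $x$, whereas the node's own amplified noise, the noise injected \emph{downstream}, and $\sigma^2$ are independent of it. For an active node I would additionally use $\|\boldsymbol{g}^H\boldsymbol{\Phi}\|^2=\|\boldsymbol{g}^H\|^2$, so a node's phase does not affect its own noise term. Hence $\gamma_{\rm MAMP}$ takes the form $\tfrac{a\,x}{b\,x+c}$ with $c\ge\sigma^2>0$, whose derivative $\tfrac{ac}{(bx+c)^2}>0$ shows strict monotonicity in $x$. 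Thus maximizing every alignment gain is optimal irrespective of the induced noise coupling.

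The third step fixes the amplification factors. Substituting the power budget \eqref{Constraint:power} at equality, each $\eta_{s_{\mu(k_a)}}$ enters $\gamma_{\rm MAMP}$ as $\tfrac{a'\eta^2}{b'\eta^2+c'}$: the signal and every noise term injected at or upstream of this node scale as $\eta^2$ and sit in $b'$, while the downstream-injected noise together with $\sigma^2$ keep $c'\ge\sigma^2>0$. The ratio is again increasing in $\eta^2$, so each active IRS should operate at full power, yielding exactly \eqref{Eq:AmF}. The same monotonicity shows that raising the incident signal power at the first active node can only help; since $\boldsymbol{w}_{\rm B}$ influences $\gamma_{\rm MAMP}$ only through the first hop and through the phase-independent incident-power terms in \eqref{Constraint:power}, the MRT choice \eqref{Eq:BSActive} is optimal.

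The part I expect to be genuinely delicate is the interaction between the cascaded power constraints and the accumulated noise: changing an upstream alignment gain or amplification factor alters the incident power, and hence the feasible $\eta$, of every downstream active IRS, so the variables are not independent. I would handle this by substituting the full-power expression \eqref{Eq:AmF} for each $\eta$ \emph{before} optimizing the phases and $\boldsymbol{w}_{\rm B}$ (the substitution is well defined because \eqref{Eq:AmF} is a forward recursion from the first to the last active node), which reduces $\gamma_{\rm MAMP}$ to an explicit function of the alignment-gain magnitudes alone; the monotonicity of the previous paragraphs then applies variable-by-variable. Verifying that this elimination preserves the clean $\tfrac{ax}{bx+c}$ structure at every node—so that no alignment gain or amplification factor ever trades off against another—is the crux, and it is precisely where the rank-one LoS structure is indispensable, since it guarantees that signal and upstream noise share the same per-hop scalar factors while the receiver noise $\sigma^2$ does not.
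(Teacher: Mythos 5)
Your plan follows essentially the same route the paper itself gestures at: the paper gives no self-contained proof of this lemma (it invokes ``the similar method as for Lemma~\ref{Lem:GivenPath}'', whose proof is deferred to the conference version \cite{zhangSAMP}), and that method is exactly your skeleton --- per-hop phase alignment of each rank-one LoS hop, MRT at the BS, and full-power amplification at every selected active IRS. Your supporting observations are all correct and are the right ingredients: the noise contributions are norms and hence insensitive to the overall cascaded phase; $\|\boldsymbol{g}^H\boldsymbol{\Phi}\|=\|\boldsymbol{g}^H\|$ removes any dependence of an active IRS's own noise term on its own phases; and the strictly positive receiver noise $\sigma^{2}$ gives the $ax/(bx+c)$ monotonicity that lets magnitude maximization survive the signal/noise coupling.

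Two refinements are needed to turn the plan into a proof. First, the coordinate-wise ``push every $\eta$ to full power'' step in your third paragraph is not valid as stated, because the feasible set is not a box: with downstream active IRSs already at full power, increasing an upstream $\eta$ violates their constraints \eqref{Constraint:power}. The clean fix is a \emph{backward} induction, not the forward recursion you describe (which only establishes well-definedness of the substitution): the last active IRS has no downstream power constraint, so its $\eta$ may be set to full power; after substituting, the SNR is still increasing in the incident power at that node --- raising incident power improves the ratio of (signal plus upstream noise) to the locally injected noise $N\sigma_{\mathrm F}^2$ and to $\sigma^{2}$ --- so the next-to-last $\eta$ can then be pushed to full power, and so on. Second, the step you explicitly flag as the crux --- that after eliminating all $\eta$'s via \eqref{amplify} the SNR retains the $ax/(bx+c)$ form in every alignment gain --- is true, but you leave it unverified; it is precisely the content of Lemma~\ref{Lem:SNR_Exp} and Appendix~\ref{App_1}, where the paper shows that with aligned phases and full-power amplification the receive SNR equals \eqref{SNR_out}, whose reciprocal is affine with positive coefficients in each reciprocal per-hop SNR $1/\gamma_i$. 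Since each alignment-gain modulus enters exactly one $\gamma_i$ linearly, $\gamma_{\rm MAMP}$ indeed has the form $ax/(bx+c)$ in that variable, and your variable-by-variable monotonicity argument closes. In short, your approach is the paper's approach, but the computation that seals it lives in the paper's later per-hop SNR decomposition rather than in anything you supplied.
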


Lemma~\ref{Lem:MAMP_beamforming} indicates that the optimal joint beamforming design of the BS, active IRSs, and passive IRSs are intrinsically determined by the MAMP-IRS routing path (see,  \eqref{Eq:Passive_IRS}--\eqref{Eq:AmF}). 
{
\begin{lemma}\label{Lem:Amplification_approx}
	\emph{For any selected active IRS $s_{\mu(k_{a})}$, its optimal amplification factor given in \eqref{Eq:AmF} can be  approximated as 
	\begin{equation}\label{Eq:Amplification_approx}
	\eta_{s_{\mu(k_{a})}}^*\approx\tilde{\eta}_{s_{\mu(k_{a})}}^*\triangleq\sqrt{\frac{P_{s_{\mu(k_{a})}}}{P_{\mathrm{B}}\|\boldsymbol{\Phi}^*_{s_{\mu(k_{a})}} \boldsymbol{G}_{0,s_{\mu(k_{a})}}  \boldsymbol{w}^*_{\rm B}\|^{2}
				+\sigma_{\mathrm{F}}^{2}\|\boldsymbol{\Phi}^*_{s_{\mu(k_{a})}} \mathbf{I}_{N}\|^{2}}}.
\end{equation}}
\end{lemma}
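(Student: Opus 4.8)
The plan is to establish \eqref{Eq:Amplification_approx} by showing that the accumulated-amplification-noise power in the denominator of the exact factor \eqref{Eq:AmF}---namely the middle summation $\sigma_{\mathrm{F}}^{2}\sum_{i=1}^{k_{a}-1}\|\boldsymbol{\Phi}_{s_{\mu(k_{a})}}\prod_{\ell=i}^{k_{a}-1}\widetilde{\boldsymbol{G}}_{s_{\mu(\ell)},s_{\mu(\ell+1)}}\|^{2}$, i.e.\ the expected accumulated-noise power entering the amplification constraint \eqref{Constraint:power}---is negligible compared with the retained signal term $P_{\mathrm{B}}\|\boldsymbol{\Phi}_{s_{\mu(k_{a})}}^{*}\boldsymbol{G}_{0,s_{\mu(k_{a})}}\boldsymbol{w}_{\mathrm{B}}^{*}\|^{2}$ in the high-SNR regime. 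Since the two terms that remain (this signal term plus the local noise $\sigma_{\mathrm{F}}^{2}N$) are exactly those defining $\tilde{\eta}_{s_{\mu(k_{a})}}^{*}$, discarding the middle summation converts \eqref{Eq:AmF} into \eqref{Eq:Amplification_approx}; hence it suffices to bound the relative size of the neglected term.

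The key observation is a ratio-invariance of the reflection cascade. By \eqref{Eq:BS_ac} the effective BS$\to$IRS channel factorises as $\boldsymbol{G}_{0,s_{\mu(k_{a})}}=\big(\prod_{\ell=i}^{k_{a}-1}\widetilde{\boldsymbol{G}}_{s_{\mu(\ell)},s_{\mu(\ell+1)}}\big)\boldsymbol{G}_{0,s_{\mu(i)}}$, so the useful signal at IRS $s_{\mu(i)}$ and the noise injected there are subsequently processed by the \emph{identical} downstream operator $\prod_{\ell=i}^{k_{a}-1}\widetilde{\boldsymbol{G}}_{s_{\mu(\ell)},s_{\mu(\ell+1)}}$ before arriving at IRS $s_{\mu(k_{a})}$. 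As every LoS factor is rank-one, this operator is rank-one; its downstream gain therefore cancels in the per-source noise-to-signal ratio, leaving only the quantities evaluated at IRS $s_{\mu(i)}$ itself. Invoking the phase-alignment optima \eqref{Eq:Passive_IRS}--\eqref{Eq:Active_IRS}---under which the incident beam combines coherently with array gain $N$ while the locally injected noise, being isotropic, does not---I would obtain for each source
\[
\frac{\sigma_{\mathrm{F}}^{2}\big\|\boldsymbol{\Phi}_{s_{\mu(k_{a})}}\prod_{\ell=i}^{k_{a}-1}\widetilde{\boldsymbol{G}}_{s_{\mu(\ell)},s_{\mu(\ell+1)}}\big\|^{2}}{P_{\mathrm{B}}\big\|\boldsymbol{\Phi}_{s_{\mu(k_{a})}}^{*}\boldsymbol{G}_{0,s_{\mu(k_{a})}}\boldsymbol{w}_{\mathrm{B}}^{*}\big\|^{2}}=\frac{1}{N\rho_{i}},\qquad \rho_{i}\triangleq\frac{P_{\mathrm{B}}\|\boldsymbol{G}_{0,s_{\mu(i)}}\boldsymbol{w}_{\mathrm{B}}^{*}\|^{2}}{\sigma_{\mathrm{F}}^{2}N}.
\]
Summing the $k_{a}-1$ contributions, the weight of the neglected term relative to the signal is $\sum_{i=1}^{k_{a}-1}(N\rho_{i})^{-1}$, where $\rho_{i}$ is precisely the per-hop input SNR at IRS $s_{\mu(i)}$. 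In the high-SNR regime each $\rho_{i}\gg1$, so this sum is $\ll1$, the induced multiplicative perturbation of $\eta_{s_{\mu(k_{a})}}^{*2}$ is $1+O\big(\sum_{i}(N\rho_{i})^{-1}\big)$, and the approximation error vanishes.

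The main obstacle is making the rank-one cancellation fully rigorous across an arbitrary number of interleaved passive IRSs: each $\widetilde{\boldsymbol{G}}_{s_{\mu(\ell)},s_{\mu(\ell+1)}}$ is itself a product of rank-one inter-node channels separated by passive phase matrices $\boldsymbol{\Psi}$, so one must check that the coherent array gains accrued by the signal and by each noise stream along the shared sub-cascade coincide, leaving the ratio at $(N\rho_{i})^{-1}$ independently of the passive hops in between. A second subtlety is that $\rho_{i}$ is defined through $\boldsymbol{G}_{0,s_{\mu(i)}}$, which already contains the upstream amplification factors $\eta_{s_{\mu(1)}},\dots,\eta_{s_{\mu(i-1)}}$---the very quantities being approximated---so I would close the argument by induction on $k_{a}$, using $\eta_{s_{\mu(\ell)}}^{*}\approx\tilde{\eta}_{s_{\mu(\ell)}}^{*}$ for all $\ell<k_{a}$ to evaluate each $\rho_{i}$ and then confirming that the resulting perturbation stays of the same small order.
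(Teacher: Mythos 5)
Your proposal is correct and follows the same skeleton as the paper's proof: both observe that the exact factor \eqref{Eq:AmF} and the approximation \eqref{Eq:Amplification_approx} differ only by the accumulated-noise term $y^{\rm (noise)}_{s_{\mu(k_{a})}}\triangleq\sigma_{\mathrm{F}}^{2}\sum_{i=1}^{k_{a}-1}\|\boldsymbol{\Phi}_{s_{\mu(k_{a})}}\prod_{\ell=i}^{k_{a}-1}\widetilde{\boldsymbol{G}}_{s_{\mu(\ell)},s_{\mu(\ell+1)}}\|^{2}$ in the denominator, so that $\tilde{\eta}^{*}_{s_{\mu(k_{a})}}/\eta^{*}_{s_{\mu(k_{a})}}=\sqrt{1+y^{\rm (noise)}_{s_{\mu(k_{a})}}/\big(y^{\rm (signal)}_{s_{\mu(k_{a})}}+\sigma_{\mathrm{F}}^{2}N\big)}$ with $y^{\rm (signal)}_{s_{\mu(k_{a})}}\triangleq P_{\mathrm{B}}\|\boldsymbol{\Phi}^{*}_{s_{\mu(k_{a})}}\boldsymbol{G}_{0,s_{\mu(k_{a})}}\boldsymbol{w}^{*}_{\rm B}\|^{2}$, and both conclude by arguing $y^{\rm (signal)}_{s_{\mu(k_{a})}}\gg y^{\rm (noise)}_{s_{\mu(k_{a})}}$. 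The difference is in how that inequality is handled: the paper's step $(a_{1})$ simply asserts it as a high-SNR fact, whereas you derive it. Your rank-one cancellation is valid: writing $\prod_{\ell=i}^{k_{a}-1}\widetilde{\boldsymbol{G}}_{s_{\mu(\ell)},s_{\mu(\ell+1)}}=\boldsymbol{u}_{i}\boldsymbol{v}_{i}^{H}$, the common downstream factor $\|\boldsymbol{u}_{i}\|^{2}$ cancels between the $i$-th noise term and the signal term, and since the optimal phase shifts \eqref{Eq:Active_IRS} at IRS $s_{\mu(i)}$ make $\boldsymbol{v}_{i}$ parallel to the receive steering vector of $\boldsymbol{G}_{0,s_{\mu(i)}}$ (Cauchy--Schwarz with equality, independent of the intermediate passive hops), the per-source ratio collapses to $\sigma_{\mathrm{F}}^{2}/\big(P_{\mathrm{B}}\|\boldsymbol{G}_{0,s_{\mu(i)}}\boldsymbol{w}^{*}_{\rm B}\|^{2}\big)=1/(N\rho_{i})$, exactly as you claim; hence $y^{\rm (noise)}/y^{\rm (signal)}=\sum_{i=1}^{k_{a}-1}(N\rho_{i})^{-1}$, which is precisely the quantitative content behind the paper's bare assertion and buys an explicit error estimate in terms of per-hop SNRs. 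One remark: your closing induction is not needed for the lemma as stated, because $\eta^{*}_{s_{\mu(k_{a})}}$ and $\tilde{\eta}^{*}_{s_{\mu(k_{a})}}$ are both evaluated with the \emph{same} upstream channel $\boldsymbol{G}_{0,s_{\mu(k_{a})}}$ (i.e., whatever amplification factors were actually applied upstream), so the hop-by-hop comparison is well-posed on its own; the induction becomes relevant only when all upstream factors are simultaneously replaced by their approximations, as the paper later does in \eqref{amplify} and Lemma \ref{Lem:SNR_Exp}, and there your check that the perturbation remains of the same small order is indeed the right point to verify.
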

\begin{proof}
First, the received signal and
noise at the selected active IRS $s_{\mu(k_{a})}$ are defined as $y^{\rm (signal)}_{s_{\mu(k_{a})}}\triangleq P_{\mathrm{B}}\|\boldsymbol{\Phi}^*_{s_{\mu(k_{a})}} \boldsymbol{G}_{0,s_{\mu(k_{a})}}  \boldsymbol{w}^*_{\rm B}\|^{2}$ and $y^{\rm (noise)}_{s_{\mu(k_{a})}}\triangleq \sigma_{\mathrm{F}}^{2}\sum_{i=1}^{k_{a}-1}\|\boldsymbol{\Phi}_{s_{\mu(k_{a})}}\prod_{\ell=i}^{k_{a}-1}\widetilde{\boldsymbol{G}}_{s_{\mu(\ell)},s_{\mu(\ell+1)}}\|^2$, respectively. To prove Lemma~\ref{Lem:Amplification_approx}, we only need to verify that $	\tilde{\eta}_{s_{\mu(k_{a})}}^*/	\eta_{s_{\mu(k_{a})}}^*\approx 1$. To this end, we have 
\begin{align}
\frac{\tilde{\eta}_{s_{\mu(k_{a})}}^*}{\eta_{s_{\mu(k_{a})}}^*}&=\sqrt{\frac{y^{\rm (signal)}_{s_{\mu(k_{a})}}+y^{\rm (noise)}_{s_{\mu(k_{a})}}+\sigma_{\mathrm{F}}^{2}\|\boldsymbol{\Phi}^*_{s_{\mu(k_{a})}} \mathbf{I}_{N}\|^{2}}{y^{\rm (signal)}_{s_{\mu(k_{a})}}+\sigma_{\mathrm{F}}^{2}\|\boldsymbol{\Phi}^*_{s_{\mu(k_{a})}} \mathbf{I}_{N}\|^{2}}}=\sqrt{1+\frac{y^{\rm (noise)}_{s_{\mu(k_{a})}}}{y^{\rm (signal)}_{s_{\mu(k_{a})}}+\sigma_{\mathrm{F}}^{2}\|\boldsymbol{\Phi}^*_{s_{\mu(k_{a})}} \mathbf{I}_{N}\|^{2}}}\nn\\
&=\sqrt{1+\frac{1}{y^{\rm (signal)}_{s_{\mu(k_{a})}}/y^{\rm (noise)}_{s_{\mu(k_{a})}}+\sigma_{\mathrm{F}}^{2}\|\boldsymbol{\Phi}^*_{s_{\mu(k_{a})}} \mathbf{I}_{N}\|^{2}/y^{\rm (noise)}_{s_{\mu(k_{a})}}}}\stackrel{(a_{1})}{\approx}1,
\end{align}
where $(a_{1})$ holds since $y^{\rm (signal)}_{s_{\mu(k_{a})}}/y^{\rm (noise)}_{s_{\mu(k_{a})}}=\frac{P_{\rm B}}{\sigma_{\mathrm{F}}^{2}}\frac{\|\boldsymbol{\Phi}^*_{s_{\mu(k_{a})}} \boldsymbol{G}_{0,s_{\mu(k_{a})}}  \boldsymbol{w}^*_{\rm B}\|^{2}}{\sum_{i=1}^{k_{a}-1}\|\boldsymbol{\Phi}_{s_{\mu(k_{a})}}\prod_{\ell=i}^{k_{a}-1}\widetilde{\boldsymbol{G}}_{s_{\mu(\ell)},s_{\mu(\ell+1)}}\|^2}\gg 1$.

\noindent Combining the above steps leads to the desired result.
\end{proof}}

\subsubsection{MAMP-IRS Routing Optimization}
With the optimal joint beamforming design in Lemma~\ref{Lem:MAMP_beamforming}, we then define
\vspace{-0.3cm}
\begin{align}
f_{0, s_{\mu(1)}}(\Omega)&\triangleq \|\boldsymbol{G}_{0, s_{\mu(1)}}\boldsymbol{w}^*_{\rm B}\|^2 =\left\|\prod_{k=1}^{\mu(1)-1}\left(\boldsymbol{H}_{s_{k},s_{k+1}} \boldsymbol{\Psi}^*_{s_{k}}\right) \boldsymbol{H}_{0,s_{1}}\boldsymbol{w}^*_{\rm B}\right\|^2, \label{gain_01}\\ 
f_{s_{\mu(k_{a})}, s_{\mu(k_{a}+1)}}(\Omega)&\triangleq\|\boldsymbol{G}_{s_{\mu(k_{a})}, s_{\mu(k_{a}+1)}}\|^2=\left\|\prod_{k=\mu(k_{a})+1}^{\mu(k_{a}+1)-1}\left(\boldsymbol{H}_{s_{k},s_{k+1}}\boldsymbol{\Psi}^*_{s_{k}}\right)\boldsymbol{H}_{s_{\mu(k_{a})},s_{\mu(k_{a})+1}}\right\|^2, \label{gain_kk1} \\ 
f_{s_{\mu(K_{a})},J+1}(\Omega)&\triangleq\|\boldsymbol{g}^{H}_{s_{\mu(K_{a})}, J+1}\|^2\!\!=\!\!\left\|\boldsymbol{h}_{s_{K}, J+1}^{H}\boldsymbol{\Psi}^*_{s_{K}}
\!\!\!\prod_{k=\mu(K_{a})+1}^{K-1}\!\!\!\left(\boldsymbol{H}_{s_{k},s_{k+1}}\!\!\boldsymbol{\Psi}^*_{s_{k}}\right)\boldsymbol{H}_{s_{\mu(K_{a})},s_{\mu(K_{a})+1}}\right\|^2, \!\!\!\!\!\!\label{gain_kJ}\\ 
\eta_{s_{\mu(k_{a})}}^{2}(\Omega)&\triangleq \frac{P_{s_{\mu(k_{a})}}}{N(P_{\mathrm{B}}f_{0, s_{\mu(1)}}(\Omega)\prod_{t=1}^{k_{a}-1}(f_{s_{\mu(k_{a})}, s_{\mu(k_{a}+1)}}(\Omega)\eta^{2}_{s_{\mu(t)}}N^2)
+\sigma_{\mathrm{F}}^{2})}, \label{amplify}
\end{align}
where $f_{0, s_{\mu(1)}}(\Omega)$, $f_{s_{\mu(k_{a})}, s_{\mu(k_{a}+1)}}(\Omega)$, and $f_{s_{\mu(K_{a})},J+1}(\Omega)$ denote respectively the end-to-end channel power gains of the BS$\to$the first selected active IRS, $k_{a}$-th active IRS$\to$($k_{a}$+1)-th active IRS, and $K_{a}$-th active
IRS$\to$user reflection sub-paths. Substituting \eqref{gain_01}--\eqref{gain_kJ} into problem (P$2$) yields the following equivalent form for the MAMP-IRS routing optimization.
\vspace{-0.3cm}
\begin{subequations}
\begin{align}
({\bf P9}):~~~~
\max_{\substack{\Omega}}  ~~& \frac{P_{\rm B}N^{2K_{a}}f_{s_{\mu(K_{a})},J+1}(\Omega)\eta^{2}_{s_{\mu(K_{a})}}\left[\prod_{k_{a}=1}^{K_{a}-1}f_{s_{\mu(k_{a})}, s_{\mu(k_{a}+1)}}(\Omega)\eta^{2}_{s_{\mu(k_{a})}}\right]f_{0, s_{\mu(1)}}(\Omega)}
{\sum_{k_{a}=1}^{K_{\mathrm{a}}}\left[f_{s_{\mu(K_{a})},J+1}(\Omega)N\eta^{2}_{s_{\mu(K_{a})}}\prod_{\ell=k_{a}}^{K_{a}-1}f_{s_{\mu(\ell)}, s_{\mu(\ell+1)}}(\Omega)\eta^{2}_{s_{\mu(\ell)}}N \right]\sigma_{\mathrm{F}}^{2}+\sigma^2}
\nn\\
\text{s.t.}~~
& \eqref{Eq:Feasible1}-\eqref{Eq:Feasible3},\eqref{active_num},\eqref{gain_01}-\eqref{amplify}.\nn
\\[-1.3cm]\nn
\end{align}
\end{subequations}
Note that problem (P$9$) is a combinatorial optimization problem due to the discrete optimization variables (i.e., $\Omega$), which is difficult to solve optimally in general. To address this issue, we first present an important lemma as follows.
\vspace{-12pt}
\begin{lemma}\label{Le:General}
\emph{Given any active-IRS beam routing path that specifies the indices of active IRSs in $\Omega$, i.e., $\Omega_{\rm act}=\{s_{\mu(1)},s_{\mu(2)},\cdots, s_{\mu(K_{a})}\}$, problem (P$9$) reduces to $K_{a}+1$ subproblems, corresponding to the passive-IRS beam routing optimization for maximizing the effective channel power gain of the sub-paths from the BS to the first active IRS $s_{\mu(1)}$ (given in \eqref{gain_01}), from the last active IRS $s_{\mu(K_{a})}$ to the user (given in \eqref{gain_kJ}), as well as that between each two adjacent (selected)
active IRSs along $\Omega_{\rm act}$ (given in \eqref{gain_kk1}).}
\end{lemma}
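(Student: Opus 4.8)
The plan is to generalize the decoupling argument of Lemma~\ref{Lem:Decouple} from a single active IRS to the cascade of $K_a$ active IRSs fixed by $\Omega_{\rm act}$. First I would observe that specifying $\Omega_{\rm act}=\{s_{\mu(1)},\ldots,s_{\mu(K_a)}\}$ partitions the full path $\Omega$ into exactly $K_a+1$ segments: the BS$\to s_{\mu(1)}$ segment, the $K_a-1$ inter-active segments $s_{\mu(k_a)}\to s_{\mu(k_a+1)}$, and the $s_{\mu(K_a)}\to$~user segment. By the simple-path condition~\eqref{Eq:Feasible1}, each passive IRS is used at most once, so the passive IRSs populating distinct segments form \emph{disjoint} sets; hence the channel power gains in \eqref{gain_01}--\eqref{gain_kJ} are governed by mutually disjoint groups of passive-reflection variables and can, in principle, be tuned independently. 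It therefore suffices to prove that the objective of problem~(P$9$) is monotonically non-decreasing in each of the $K_a+1$ segment gains $f_{0,s_{\mu(1)}}$, $\{f_{s_{\mu(k_a)},s_{\mu(k_a+1)}}\}_{k_a=1}^{K_a-1}$, and $f_{s_{\mu(K_a)},J+1}$, with the other gains held fixed.

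To expose this monotonicity I would substitute the optimal beamforming of Lemma~\ref{Lem:MAMP_beamforming} and the amplification factors \eqref{amplify} (in the signal-dominated form justified by Lemma~\ref{Lem:Amplification_approx}) into $\gamma_{\rm MAMP}$, and then write its reciprocal as $1/\gamma_{\rm MAMP}=\sum_{k_a=1}^{K_a}(\text{noise}_{k_a})/(\text{signal})+\sigma^2/(\text{signal})$, which is already the explicit per-active-IRS additive structure of the denominator in (P$9$). Minimizing $1/\gamma_{\rm MAMP}$ then reduces to checking that each summand is non-increasing in every segment gain. The receiver-noise term $\sigma^2/(\text{signal})$ is immediate, since the signal power is a product of all segment gains (and their dependent amplification factors) and is increasing in each of them.

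The amplification-noise summands are where the real work lies, and I expect this to be the main obstacle. Raising an upstream or inter-active gain $f$ enters the problem twice with opposite signs: it inflates the signal numerator, but through the power constraint \eqref{amplify} it also \emph{shrinks} the amplification factor $\eta_{s_{\mu(k_a)}}$ of every downstream active IRS, so numerator and denominator are coupled through the whole cascade. I would untangle this by induction on $K_a$ (equivalently, a downstream-to-upstream grouping of terms): substituting the closed form of $\eta^2_{s_{\mu(k_a)}}$ shows that each such gain enters the signal and each noise summand only through a bounded, strictly increasing factor of the form $P_{\rm B}f/(P_{\rm B}f+\sigma_{\mathrm{F}}^{2})$, so that increasing $f$ strictly increases the signal while leaving every noise summand non-increasing. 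For the terminal gain $f_{s_{\mu(K_a)},J+1}$ the computation is identical to the SAMP case in the proof of Lemma~\ref{Lem:Decouple}, as it multiplies the signal and only the last noise summand. Together these give $\partial\gamma_{\rm MAMP}/\partial f>0$ for all $K_a+1$ segment gains.

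Finally, combining the established monotonicity with the disjointness of the per-segment passive-IRS variable sets yields the reduction: for the fixed $\Omega_{\rm act}$, maximizing $\gamma_{\rm MAMP}$ over all passive IRSs is equivalent to separately maximizing each channel power gain \eqref{gain_01}, \eqref{gain_kk1} and \eqref{gain_kJ} over its own sub-path, which are precisely the claimed $K_a+1$ passive-IRS beam-routing subproblems. This completes the proof.
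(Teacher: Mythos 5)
Your proposal is correct and reaches the paper's conclusion through the same high-level mechanism --- monotonicity of the (P$9$) objective in each of the $K_a+1$ segment gains, combined with the disjointness of the passive-IRS variables governing different segments --- but the way you establish the monotonicity is genuinely different from the paper's own proof in Appendix~\ref{App_2}. There, for each inter-active gain $f_{s_{\mu(i)},s_{\mu(i+1)}}$, the paper simply rewrites the receive SNR by factoring that gain out of the numerator and dividing the noise terms by it (equation \eqref{Lem:General}), and then asserts $\partial\gamma^{(i)}_{\rm MAMP}/\partial f_{s_{\mu(i)},s_{\mu(i+1)}}>0$; this computation implicitly holds the amplification factors $\eta^{2}_{s_{\mu(k_a)}}$ fixed, even though by \eqref{amplify} every downstream $\eta^{2}$ depends on the upstream gains --- precisely the coupling you flag as ``where the real work lies.'' Your route resolves that coupling explicitly: after substituting the closed form of \eqref{amplify} and writing $1/\gamma_{\rm MAMP}$ as a sum of per-active-IRS noise-to-signal ratios, each ratio depends only on the gains upstream of the corresponding active IRS and is decreasing in them, because the incident signal power obeys the recursion $S_{t+1}=NP_{s_{\mu(t)}}f_{s_{\mu(t)},s_{\mu(t+1)}}\,S_{t}/(S_{t}+\sigma_{\mathrm{F}}^{2})$, a composition of increasing maps (note the increasing factor is $S_{t}/(S_{t}+\sigma_{\mathrm{F}}^{2})$ with $S_{t}$ the cascaded incident power, not literally $P_{\rm B}f/(P_{\rm B}f+\sigma_{\mathrm{F}}^{2})$ except at the first hop; this is a harmless imprecision since your induction produces exactly this saturating structure). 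In effect, you have fused Lemma~\ref{Le:General} with Lemma~\ref{Lem:SNR_Exp}: the substitution-and-recursion you carry out is essentially the algebra the paper defers to Appendix~\ref{App_1} to derive the per-hop-SNR expression \eqref{SNR_out}, from which the decomposition is immediate because each per-hop SNR $\gamma_{i}$ is proportional to exactly one segment gain. What your approach buys is rigor on the very point the paper's shorter proof glosses over, plus an explicit statement of the per-segment variable disjointness that the paper leaves implicit; what the paper's approach buys is brevity, at the cost of a derivative claim that is only fully justified once the $\eta$-coupling has been accounted for.
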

\begin{proof}
 \upshape See Appendix \ref{App_2}.
\end{proof}

Lemma~\ref{Le:General} can be regarded as a general version of Lemma~\ref{Lem:Decouple}, which shows that given any active-IRS routing path $ \Omega_{\rm act}$, the complicated MAMP-IRS routing optimization problem (i.e., problem (P$9$)) can be equivalently decomposed into a set of independent subproblems for the 
passive-IRS beam routing. More importantly, it also indicates that regardless of the active-IRS routing path, the optimal passive-IRS routing path between each two active IRSs is always the one that maximizes the effective cascaded channel power gain. These facts thus motivate us to develop a new hierarchical MAMP-IRS beam routing design for solving problem (P$9$), with two key procedures elaborated as follows.
\begin{itemize}
\item \textbf{\underline{Phase 1 (Inner passive-IRS beam routing)}}:
The first phase aims to determine the optimal passive-IRS routing paths between each two active IRSs and that between the BS/user and each active IRS. Take the passive-IRS routing between two active IRSs as an example. {\color{black} If there exists a feasible path (consisting of passive IRSs only) between active IRSs $i$ and $j$ with $i, j\in \mathcal{J}_{a}$, its inner passive-IRS routing optimization problem can be formulated as 
\begin{subequations}
\begin{align}
({\bf P10}):\ \max_{\substack{ \tilde{\Omega}_{i,j}} }  ~~ f_{i, j}(\tilde{\Omega}_{i,j})~~~~~~
\text{\rm s.t.}~~
 \eqref{Eq:Feasible1}-\eqref{Eq:Feasible3},\nn
\vspace{-3pt}
\end{align}
\end{subequations}
where $\tilde{\Omega}_{i,j}\triangleq\{s_{1},s_{2},\cdots,s_{\tilde{K}_{i,j}}\}$ denotes the passive-IRS routing path between the two active IRSs with $\tilde{K}_{i,j}$ denoting the number of passive IRSs in the path. Moreover,  $f_{i,j}(\tilde{\Omega}_{i,j})$ is the effective channel power gain between active IRSs $i$ and $j$, which is given by
\begin{equation}
   f_{i,j}(\tilde{\Omega}_{i,j}) \triangleq \frac{M^{2\tilde{K}_{i,j}} \beta^{\tilde{K}_{i,j}+1}}{ d_{i, s_{1}}^{2}d_{s_{\tilde{K}_{i,j}}, j}^{2}\prod_{\tilde{k}=1}^{\tilde{K}_{i,j}-1} d_{s_{\tilde{k}}, s_{\tilde{k}+1}}^{2}}.\label{Eq:f_ij}
\end{equation}
The optimal solution to problem (P$10$) can be obtained by following the similar procedures in Section~\ref{SAMP-IRS-path}, which is denoted by $f^*_{i,j}$.}
\item \textbf{\underline{Phase 2 (Outer active-IRS beam routing)}}: Given the optimal inner passive-IRS beam routing designed in Phase 1 as well as the corresponding effective channel power gains of the paths among active IRSs (i.e., $f^*_{i,j}$), the second phase aims to optimize the outer active-IRS routing path for maximizing the achievable rate. This problem is  formulated as
\end{itemize}
\vspace{-0.2cm}
\begin{subequations}
\begin{align}
({\bf P11}):~~~~
\max_{\substack{\Omega_{\rm act}}}  ~~& \frac{P_{\rm B}N^{2K_{a}}f^*_{s_{\mu(K_{a})},J+1}\eta^{2}_{s_{\mu(K_{a})}}\left(\prod_{k_{a}=1}^{K_{a}-1}f^*_{s_{\mu(k_{a})}, s_{\mu(k_{a}+1)}}\eta^{2}_{s_{\mu(k_{a})}}\right)f^*_{0, s_{\mu(1)}}}
{\sum_{k_{a}=1}^{K_{\mathrm{a}}}\left(f^*_{s_{\mu(K_{a})},J+1}N\eta^{2}_{s_{\mu(K_{a})}}\prod_{\ell=k_{a}}^{K_{a}-1}f^*_{s_{\mu(\ell)}, s_{\mu(\ell+1)}}\eta^{2}_{s_{\mu(\ell)}}N \right)\sigma_{\mathrm{F}}^{2}+\sigma^2}
\label{SNR_MAMP}\\
\text{s.t.}~~~
& \eqref{Eq:Feasible1}-\eqref{Eq:Feasible3},\eqref{active_num},\eqref{amplify}.\nn
\\[-1.3cm]\nn
\end{align}
\end{subequations}
Problem (P$11$) is still challenging to solve since the objective function is highly complicated due to the intrinsic coupling of $\Omega_{\rm act}$ in the objective function and constraint \eqref{amplify}. Hence, to make the problem more tractable, an explicit expression for the receive SNR is presented below.
\vspace{-0.1cm}
{
\begin{lemma}\label{Lem:SNR_Exp}
\emph{
Given the optimal joint beamforming in Lemma~\ref{Lem:MAMP_beamforming} and the maximum effective channel power gains between each two active IRSs in \eqref{Eq:f_ij}, the receive SNR at the user over the MAMP-IRS routing path given in \eqref{SNR_MAMP} can be equivalently expressed as 
\vspace{-0.3cm}
\begin{equation}\label{SNR_out}
    \gamma_{\rm MAMP}=\frac{N^{K_{a}}}{\sum_{k_{a}=1}^{K_{a}+1} \sum_{t=1}^{k_{a}} \prod_{i=t}^{k_{a}} \frac{1}{ \gamma_{i}}\prod_{j=1}^{t-2}N},
    \vspace{-6pt}
\end{equation}
where $\gamma_{i}$ represents the per-hop SNR between nodes $(i-1)$ and $ i \in \bar{\Omega}_{\rm act}\triangleq\{\Omega_{\rm act},J+1\}$, defined as
\begin{equation}\label{SNR_DEFINITION}
\gamma_{i}= \begin{cases} P_{\rm B}f^*_{0, s_{\mu(1)}} / \sigma^2_{\mathrm{F}} & \text { if } i=1 \\   P_{s_{\mu(i-1)}} f^*_{s_{\mu(i-1)}, s_{\mu(i)}} / \sigma^2_{\mathrm{F}} & \text { if } 2 \leq i \leq K_{a} \\  P_{s_{\mu(K_{a})}} f^*_{s_{\mu(K_{a})},J+1} / \sigma^2 &  \text { if } i = J+1.\end{cases}
\end{equation}}
\end{lemma}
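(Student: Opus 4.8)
The plan is to start from the explicit receive SNR in the objective of (P$11$), i.e.\ \eqref{SNR_MAMP}, substitute the optimal beamforming from Lemma~\ref{Lem:MAMP_beamforming} and the amplification factor \eqref{amplify} (the approximation of Lemma~\ref{Lem:Amplification_approx}), and then collapse the result to the nested double sum in \eqref{SNR_out} by exploiting a telescoping structure in the cascaded gains. Concretely, I first write $\gamma_{\rm MAMP}=\mathcal{S}/\mathcal{N}$, where $\mathcal{S}=P_{\rm B}N^{2K_{a}}f^*_{0,s_{\mu(1)}}\prod_{k_{a}=1}^{K_{a}}(f^*_{s_{\mu(k_{a})},s_{\mu(k_{a}+1)}}\eta^2_{s_{\mu(k_{a})}})$ collects the numerator of \eqref{SNR_MAMP} and $\mathcal{N}=\sigma_{\rm F}^2\sum_{k_{a}=1}^{K_{a}}N^{K_{a}-k_{a}+1}\prod_{\ell=k_{a}}^{K_{a}}(f^*_{s_{\mu(\ell)},s_{\mu(\ell+1)}}\eta^2_{s_{\mu(\ell)}})+\sigma^2$ collects its denominator, under the convention $f^*_{s_{\mu(K_{a})},s_{\mu(K_{a}+1)}}\triangleq f^*_{s_{\mu(K_{a})},J+1}$.

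The key step is to note that the signal power incident on the $k_{a}$-th selected active IRS, $S_{k_{a}}\triangleq P_{\rm B}f_{0,s_{\mu(k_{a})}}(\Omega)$, satisfies $S_{k_{a}+1}=N^2\eta^2_{s_{\mu(k_{a})}}f^*_{s_{\mu(k_{a})},s_{\mu(k_{a}+1)}}S_{k_{a}}$ with $S_{1}=P_{\rm B}f^*_{0,s_{\mu(1)}}$, so that each factor $N^2\eta^2 f^*=S_{k_{a}+1}/S_{k_{a}}$ telescopes. Using this, every noise-to-signal ratio simplifies: I would show that the amplification-noise term of IRS $k_{a}$ divided by $\mathcal{S}$ equals $N^{\,k_{a}-K_{a}-1}w_{k_{a}}$ with $w_{k_{a}}\triangleq\sigma_{\rm F}^2/S_{k_{a}}$, while the receiver-noise term gives $\sigma^2/\mathcal{S}=\sigma^2/S_{K_{a}+1}$. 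Substituting $\eta^2_{s_{\mu(k_{a})}}=P_{s_{\mu(k_{a})}}/(N(S_{k_{a}}+\sigma_{\rm F}^2))$ from \eqref{amplify} together with the per-hop identities of \eqref{SNR_DEFINITION} (namely $S_{1}=\gamma_{1}\sigma_{\rm F}^2$, $P_{s_{\mu(\ell)}}f^*_{s_{\mu(\ell)},s_{\mu(\ell+1)}}=\gamma_{\ell+1}\sigma_{\rm F}^2$, and $P_{s_{\mu(K_{a})}}f^*_{s_{\mu(K_{a})},J+1}=\gamma_{J+1}\sigma^2$ for the final hop) converts the $S$-recursion into the linear recursion $w_{k+1}=\tfrac{1}{N\gamma_{k+1}}(w_{k}+1)$ with $w_{1}=1/\gamma_{1}$.

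Solving this first-order linear recursion in closed form yields $w_{k}=N^{-(k-1)}\prod_{j=1}^{k}\gamma_{j}^{-1}+\sum_{t=2}^{k}N^{-(k-t+1)}\prod_{j=t}^{k}\gamma_{j}^{-1}$; the final hop to the user fits the same pattern upon setting $\gamma_{K_{a}+1}\triangleq\gamma_{J+1}$, giving $\sigma^2/S_{K_{a}+1}=w_{K_{a}+1}$. Assembling $1/\gamma_{\rm MAMP}=\sum_{k_{a}=1}^{K_{a}}N^{\,k_{a}-K_{a}-1}w_{k_{a}}+w_{K_{a}+1}$ and inserting the closed form for $w_{k}$, each summand carries an $N$-exponent that collapses to $\max(t-2,0)-K_{a}$, which is exactly $N^{-K_{a}}$ times the $(t,k)$ term $N^{\max(t-2,0)}\prod_{i=t}^{k}\gamma_{i}^{-1}$ of the denominator of \eqref{SNR_out}; this establishes $\gamma_{\rm MAMP}=N^{K_{a}}/\sum_{k_{a}=1}^{K_{a}+1}\sum_{t=1}^{k_{a}}N^{\max(t-2,0)}\prod_{i=t}^{k_{a}}\gamma_{i}^{-1}$. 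As a base case I would verify that $K_{a}=1$ recovers the SAMP SNR $\gamma^*_{\rm SAMP}$ in \eqref{Eq:active}.

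I expect the main obstacle to be the bookkeeping of the powers of $N$: the coherently combined signal accrues a factor $N^2$ at each active hop whereas the incoherent amplification noise accrues only $N$, and this asymmetry is precisely what produces the $N^{K_{a}}$ numerator together with the $N^{\max(t-2,0)}$ weights. In particular, obtaining the correct constant term of the recursion (an $N\gamma_{k+1}$ rather than $N^2\gamma_{k+1}$, traceable to the factor $N$ outside the bracket in \eqref{amplify}) and splicing in the final hop, whose noise is normalized by the receiver power $\sigma^2$ instead of $\sigma_{\rm F}^2$, as a seamless continuation of the recursion are the delicate points. An equivalent and perhaps more transparent alternative is induction on $K_{a}$ with base case \eqref{Eq:active}, where the inductive step appends one active IRS and invokes the same per-hop recursion; I would keep this route in reserve should the direct telescoping obscure the $N$-counting.
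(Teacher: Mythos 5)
Your proof is correct, and it reaches \eqref{SNR_out} by a genuinely different organization of the algebra than the paper's Appendix~\ref{App_1}. The paper works globally: it writes $P_{\rm signal}$ and $P_{\rm noise}$ as explicit products (your $\mathcal{S}$ and $\mathcal{N}$), forms the ratio, substitutes the fully unrolled closed form \eqref{amplify} for every $\eta^2_{s_{\mu(k_a)}}$ at once, and then simplifies by three successive normalizations (dividing numerator and denominator by $\prod\sigma_{\rm F}^2\prod N\eta^2$, multiplying by $\prod_t P_{s_{\mu(t)}}$, and finally dividing by $\prod_{k_a}\gamma_{k_a}$), handling the last hop through the notational conventions $\sigma_{\rm F}^2\triangleq\sigma^2$ and $f^*_{s_{\mu(K_a)},s_{\mu(K_a+1)}}\triangleq f^*_{s_{\mu(K_a)},J+1}$. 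You instead localize the substitution of \eqref{amplify}: introducing the incident signal power $S_{k_a}$ and the noise-to-signal ratio $w_{k_a}=\sigma_{\rm F}^2/S_{k_a}$ turns the cascade into the first-order recursion $w_{k+1}=(w_k+1)/(N\gamma_{k+1})$ with $w_1=1/\gamma_1$, whose closed-form solution assembles directly into the denominator of \eqref{SNR_out}; your observation that the final hop obeys the same recursion (the $\sigma^2$ in the receiver noise cancels against the $\sigma^2$ normalizing $\gamma_{J+1}$ in \eqref{SNR_DEFINITION}) plays exactly the role of the paper's $\sigma_{\rm F}^2\triangleq\sigma^2$ convention. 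I checked the load-bearing identities: $\mathcal{S}=S_{K_a+1}$ by telescoping; the amplification-noise term of IRS $k_a$ divided by $\mathcal{S}$ equals $N^{k_a-K_a-1}w_{k_a}$; the recursion follows from $\eta^2_{s_{\mu(k_a)}}=P_{s_{\mu(k_a)}}/\bigl(N(S_{k_a}+\sigma_{\rm F}^2)\bigr)$; and the closed form $w_k=\sum_{t=1}^{k}N^{\max(t-2,0)-(k-1)}\prod_{i=t}^{k}\gamma_i^{-1}$ reproduces the weights $\prod_{j=1}^{t-2}N=N^{\max(t-2,0)}$ in \eqref{SNR_out}. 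Your $K_a=1$ sanity check against \eqref{Eq:active} also holds. As for what each route buys: the paper's direct computation needs no auxiliary sequence but pays with the heavily nested intermediate expression for its denominator $\tilde{B}$, where the powers of $N$ are easy to mistrack; your recursion makes the $N$-bookkeeping (coherent $N^2$ per hop for signal versus $N$ per hop for noise) automatic and modular, and it would generalize immediately to hop-dependent element counts or amplification powers.
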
}
\begin{proof}
 \upshape See Appendix \ref{App_1}.
\end{proof}
Lemma~\ref{Lem:SNR_Exp} demonstrates a fundamental tradeoff between maximizing the prominent multiplicative
active-IRS beamforming gain (see the numerator in \eqref{SNR_out}) versus minimizing the accumulated amplification noise power (see the denominator in \eqref{SNR_out}). 

By using Lemmas~\ref{Lem:MAMP_beamforming}--\ref{Lem:SNR_Exp}, problem (P$11$) can be equivalently transformed into the following problem
\vspace{-0.4cm}
{
\begin{equation}
({\bf P12}):~~\max_{\substack{ \Omega_{\rm act}}}  ~~ \frac{N^{K_{a}}}{\sum_{k_{a}=1}^{K_{a}+1} \sum_{t=1}^{k_{a}} \prod_{i=t}^{k_{a}} \frac{1}{ \gamma_{i}}\prod_{j=1}^{t-2}N}~~~~
\text{s.t.}~~~
 \eqref{Eq:Feasible1}-\eqref{Eq:Feasible3}, \eqref{active_num}.\nn
 \vspace{-5pt}
\end{equation}
}
\begin{figure}[t]
	\centering
	\includegraphics[height=6cm]{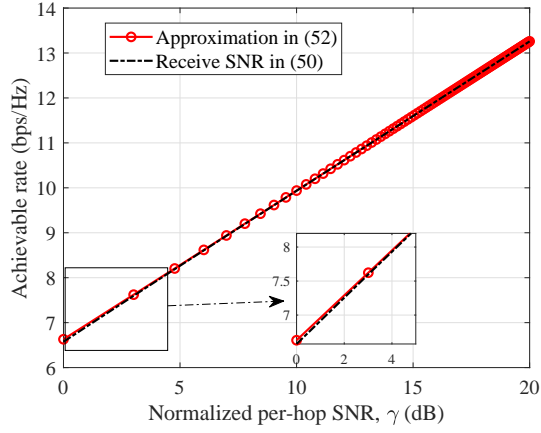}
	\caption{Accuracy of the receive SNR lower bound given in \eqref{Eq:lowerb}.}\label{fig:verify_appro}
	\vspace{-18pt}
\end{figure}

Note that the denominator of the objective function is a complicated function of the per-hop SNR of all selected active IRSs, which is determined by the active-IRS routing path $\Omega_{\rm act}$. To tackle this difficulty, we first obtain a more tractable lower bound for the receive SNR given in \eqref{SNR_out}, which shall facilitate the design of a near-optimal MAMP-IRS routing path.
\vspace{-5pt}
{
\begin{theorem}\label{theorem:L_Bound} 
\emph{
For the MAMP-IRS routing path $\Omega$, in the high-SNR regime, the receive SNR given in~\eqref{SNR_out} is approximated as
\vspace{-0.4cm}
\begin{equation}\label{Eq:lowerb}
\gamma_{\mathrm{MAMP}} \approx \gamma^{\mathrm{(app)}}_{\mathrm{MAMP}} \triangleq N^{K_{a}}\left[\sum_{k_{a}=1}^{K_{a}+1} \frac{1}{ \gamma_{k_{a}}}\prod_{j=1}^{k_{a}-2}N\right]^{-1}.
\vspace{-0.2cm}
\end{equation}}
\end{theorem}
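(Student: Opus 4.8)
The plan is to work directly with the exact SNR expression \eqref{SNR_out} from Lemma~\ref{Lem:SNR_Exp} and show that its denominator collapses, one outer index at a time, to the denominator in \eqref{Eq:lowerb}. Write the exact denominator as the double sum $D\triangleq\sum_{k_{a}=1}^{K_{a}+1}\sum_{t=1}^{k_{a}}a_{t}^{(k_{a})}$ with $a_{t}^{(k_{a})}\triangleq\prod_{i=t}^{k_{a}}\frac{1}{\gamma_{i}}\prod_{j=1}^{t-2}N$. The denominator targeted in \eqref{Eq:lowerb} is precisely $\sum_{k_{a}=1}^{K_{a}+1}a_{k_{a}}^{(k_{a})}$, i.e.\ the inner sum truncated to its last ($t=k_{a}$) term. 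Hence the whole claim reduces to showing that, for each fixed outer index $k_{a}$, the inner sum $\sum_{t=1}^{k_{a}}a_{t}^{(k_{a})}$ is dominated by its $t=k_{a}$ term in the high-SNR regime.

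First I would compute the ratio of consecutive inner terms; a direct cancellation gives $a_{t}^{(k_{a})}/a_{t+1}^{(k_{a})}=1/(\gamma_{t}N)$, so that $a_{t+1}^{(k_{a})}=\gamma_{t}N\,a_{t}^{(k_{a})}$. Each inner term therefore exceeds the previous one by a factor $\gamma_{t}N\gg 1$, so the sequence is sharply increasing in $t$ and its largest element is the last, $a_{k_{a}}^{(k_{a})}=\frac{1}{\gamma_{k_{a}}}\prod_{j=1}^{k_{a}-2}N$. Factoring this out yields $\sum_{t=1}^{k_{a}}a_{t}^{(k_{a})}=a_{k_{a}}^{(k_{a})}\bigl(1+\tfrac{1}{\gamma_{k_{a}-1}N}+\tfrac{1}{\gamma_{k_{a}-1}\gamma_{k_{a}-2}N^{2}}+\cdots\bigr)$. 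In the high-SNR regime every per-hop SNR $\gamma_{i}$ in \eqref{SNR_DEFINITION} satisfies $\gamma_{i}\gg 1$, and together with the passive-beamforming gain $N$ this makes each correction factor $1/(\gamma_{i}N)\ll 1$; the bracket is thus $1+O\!\bigl(1/(\gamma N)\bigr)\approx 1$ and $\sum_{t=1}^{k_{a}}a_{t}^{(k_{a})}\approx a_{k_{a}}^{(k_{a})}$. (I would verify the edge cases against the empty-product convention: for $k_{a}=1$ the inner sum is a single term, so the approximation is exact there, and the first genuine correction only appears at $k_{a}=2$.)

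Summing the per-$k_{a}$ approximations over $k_{a}=1,\dots,K_{a}+1$ gives $D\approx\sum_{k_{a}=1}^{K_{a}+1}\frac{1}{\gamma_{k_{a}}}\prod_{j=1}^{k_{a}-2}N$, and substituting back into \eqref{SNR_out} produces \eqref{Eq:lowerb} directly. I would also record the direction of the error: since every omitted term $a_{t}^{(k_{a})}$ with $t<k_{a}$ is strictly positive, the approximate denominator never exceeds the exact one, so $\gamma^{\mathrm{(app)}}_{\mathrm{MAMP}}\ge\gamma_{\mathrm{MAMP}}$, with equality approached to leading order as the per-hop SNRs grow.

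The hard part will be making the ``high-SNR'' hypothesis precise and \emph{uniform} across all hops. The neglected correction is governed by the smallest of the products $\gamma_{i}N$ over the relevant hops, so one must argue that every per-hop SNR in \eqref{SNR_DEFINITION} is large enough that $\gamma_{i}N\gg 1$ — in particular the weakest inter-active-IRS hop, whose effective channel power gain $f^{*}_{s_{\mu(\ell)},s_{\mu(\ell+1)}}$ may be diminished by product-distance path-loss. This is exactly where the active-IRS amplification (which keeps each $\gamma_{i}$ from collapsing) and the element count $N$ enter, and I would back the resulting approximation with the numerical accuracy check anticipated in Fig.~\ref{fig:verify_appro}.
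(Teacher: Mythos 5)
Your proposal is correct and follows essentially the same route as the paper: both identify the approximate denominator as the $t=k_a$ diagonal of the exact double sum in \eqref{SNR_out} and argue that every dropped term, being a product of two or more reciprocal per-hop SNRs, is negligible in the high-SNR regime. The only wrinkle is the empty-product convention you already flagged---at $t=1$ the ratio $a_{t}^{(k_a)}/a_{t+1}^{(k_a)}$ equals $1/\gamma_{1}$ with no factor of $N$---which leaves your conclusion intact since $\gamma_{1}\gg 1$; if anything your bookkeeping is more complete than the paper's, whose displayed identity for $\xi_{\mathrm{MAMP}}-\xi^{\mathrm{(app)}}_{\mathrm{MAMP}}$ keeps only the second-order terms $1/(\gamma_{k_a}\gamma_{k_a+1})$ and silently omits the third- and higher-order ones, and your observation that the dropped terms are positive (so $\gamma^{\mathrm{(app)}}_{\mathrm{MAMP}}\ge\gamma_{\mathrm{MAMP}}$) correctly pins down the direction of the error, which the paper's ``lower bound'' wording gets backwards.
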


\begin{proof}
First, we define $\xi_{\mathrm{MAMP}}\triangleq \sum_{k_{a}=1}^{K_{a}+1} \sum_{t=1}^{k_{a}} \prod_{i=t}^{k_{a}} \frac{1}{ \gamma_{i}}\prod_{j=1}^{t-2}N$ and $\xi^{\mathrm{(app)}}_{\mathrm{MAMP}}\triangleq \sum_{k_{a}=1}^{K_{a}+1} \frac{1}{ \gamma_{k_{a}}}\prod_{j=1}^{k_{a}-2}N$. As such, to prove Theorem~\ref{theorem:L_Bound}, we only need to show that $\delta \triangleq \lim_{\gamma\to\infty} \frac{\xi^{\mathrm{(app)}}_{\mathrm{MAMP}}}{\xi_{\mathrm{MAMP}}}\approx 1$, defined as 
\begin{align}
	\delta \triangleq \lim_{\gamma\to\infty} \frac{\xi^{\mathrm{(app)}}_{\mathrm{MAMP}}}{\xi_{\mathrm{MAMP}}}&=\lim_{\gamma\to\infty}\frac{\xi_{\mathrm{MAMP}}-\sum_{k_{a}=1}^{K_{a}} \frac{1}{ \gamma_{k_{a}}\gamma_{k_{a}+1}}\prod_{j=1}^{k_{a}-2}N}{\xi_{\mathrm{MAMP}}}\nn\\
	&=\lim_{\gamma\to\infty}\left(1-\frac{\sum_{k_{a}=1}^{K_{a}} \frac{1}{ \gamma_{k_{a}}\gamma_{k_{a}+1}}\prod_{j=1}^{k_{a}-2}N}{\xi_{\mathrm{MAMP}}}\right).
\end{align}
Next, in the high-SNR regime, from the definition of per-hop SNR in \eqref{SNR_DEFINITION}, one can easily conclude that: 1) $\frac{1}{\gamma_{i}} \ll 1, \; \forall i \in \bar{\Omega}_{\rm act}$; 2) $\gamma_{i}$ and $\gamma_{j}$ with $ i, j \in \bar{\Omega}_{\rm act}$ and $i\neq j$ are independent of each other; 3) $\frac{1}{\gamma_{i}} \gg \frac{1}{\gamma_{i}\gamma_{j}}, \; \forall i, j \in \bar{\Omega}_{\rm act}, i \neq j$.
As a result, it can be shown that $\delta \approx 1$,  
thus leading to the desired result in \eqref{Eq:lowerb}.
\end{proof}
}

Theorem~\ref{theorem:L_Bound} is intuitively expected since the receive SNR in \eqref{Eq:lowerb} is an approximation of \eqref{SNR_out} by dropping several non-dominant terms to the denominator, which thus have a negligible effect in the high-SNR regime. To show this, we numerically illustrate the accuracy of the obtained receive-SNR approximation in Fig.~\ref{fig:verify_appro} under the system setup with 4 reflection hops (i.e., $K_{a}=3$) and $N=100$. It is observed that the achievable rate based on the SNR approximation in \eqref{Eq:lowerb} is very close to that of the exact receive SNR, when the per-hop SNR is larger than $5$ dB. Moreover, it is worth noting that the high-SNR condition can be achieved easily in the considered MAMP-IRS aided wireless system. For example, for an indoor system with $P_{\rm B}=20$ dBm, $\beta=-46$ dB, $d=12$ m, $\sigma^2_{\rm F}=-70$ dBm, and $\sigma^2=-80$ dBm, the corresponding per-hop SNR can easily reach $20$ dB. While for the low-SNR regime, i.e., $\gamma<5$ dB, the obtained receive-SNR approximation may lead to the modest overestimation of the actual
achievable rate. 

With Theorem~\ref{theorem:L_Bound}, problem (P$12$) can be approximated as follows by replacing the objective function with its lower bound.
\vspace{-0.4cm}
\begin{subequations}
\begin{align}
({\bf P13}):~~\max_{\substack{ \Omega_{\rm act}}}  ~~ \gamma^{\mathrm{(app)}}_{\mathrm{MAMP}}~~~~
\text{s.t.}~~~
 \eqref{Eq:Feasible1}-\eqref{Eq:Feasible3}, \eqref{active_num}.\nn
\end{align}
\end{subequations}
In the following, we propose an efficient algorithm to solve problem (P$13$) by using graph theory similarly as in Section~\ref{Sec:Alg}.

{First,  maximizing $\gamma^{\mathrm{(app)}}_{\mathrm{MAMP}}$ in problem (P$13$) is equivalent to minimizing
\begin{equation}\label{Eq:or_1/L_SNR}
\frac{1}{\gamma^{\mathrm{(app)}}_{\mathrm{MAMP}}}=\frac{\sum_{k_{a}=1}^{K_{a}+1} \frac{1}{ \gamma_{k_{a}}}\prod_{j=1}^{k_{a}-2}N}{ N^{K_{a}}}. 
\end{equation}
Next, substituting \eqref{SNR_DEFINITION} into \eqref{Eq:or_1/L_SNR} yields
\begin{equation}\label{Eq:1/L_SNR}
\!\!\!\frac{1}{\gamma^{\mathrm{(app)}}_{\mathrm{MAMP}}}=\frac{1}{N^{K_{a}}}\underbrace{\left[\left(\frac{\sigma^2_{\rm F}}{P_{\rm B}f^*_{0, s_{\mu(1)}}}\right)+\left(\frac{\sigma^2\prod_{t=1}^{K_{a}-1}N}{P_{s_{\mu(K_{a})}}f^*_{s_{\mu(K_{a})}, J+1}}\right)+\sum_{k_{a}=2}^{K_{a}}\left(\frac{\sigma^2_{\rm F}\prod_{t=1}^{k_a-2}N}{P_{s_{\mu(k_{a}-1)}}f^*_{s_{\mu(k_{a}-1)}, s_{\mu(k_{a})}}}\right)\right]}_{
\varphi (K_{a},\{s_{\mu(k_{a})}\}^{K_{a}}_{k_{a}=1})}
\end{equation}
Then, problem (P$13$) can be reformulated as 
\begin{subequations}
\begin{align}
({\bf P14}):~~~~\min_{\substack{ \Omega_{\rm act}}}  ~~
\frac{1}{N^{K_{a}}} \times \varphi(K_{a},\{s_{\mu(k_{a})}\}^{K_{a}}_{k_{a}=1})~~~~
\text{s.t.}~~
 \eqref{Eq:Feasible1}-\eqref{Eq:Feasible3}, \eqref{active_num}.\nn
 \\[-1.3cm]\nn
\end{align}
\end{subequations}
\begin{lemma}\label{Problemtrans}
\emph{The solution to problem (P$14$) can be obtained by solving the following problem
\begin{subequations}
\begin{align}
({\bf P15}):~~~~\min_{\substack{ K_{a}}}~~   \left[\frac{1}{N^{K_{a}}}\times[\min_{\substack{\{s_{\mu(k_{a})}\}^{K_{a}}_{k_{a}=1}}}  ~\varphi(K_{a},\{s_{\mu(k_{a})}\}^{K_{a}}_{k_{a}=1})]\right]~~~~
\text{s.t.}~~
 \eqref{Eq:Feasible1}-\eqref{Eq:Feasible3}, \eqref{active_num}.\nn
\end{align}
\end{subequations}}
\end{lemma}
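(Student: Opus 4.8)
The plan is to recognize that Lemma~\ref{Problemtrans} asserts nothing more than the validity of rewriting the joint minimization in problem (P$14$) as an iterated (nested) minimization, with the outer variable being the number of selected active IRSs $K_{a}$ and the inner variable being the ordered selection of active-IRS node-indices. First I would observe that an active-IRS routing path $\Omega_{\rm act}=\{s_{\mu(1)},\ldots,s_{\mu(K_{a})}\}$ is completely specified by the pair consisting of its cardinality $K_{a}$ (constrained by \eqref{active_num}, i.e., $K_{a}\geq 1$) together with the ordered tuple $\{s_{\mu(k_{a})}\}_{k_{a}=1}^{K_{a}}$ of node-indices obeying the feasibility conditions \eqref{Eq:Feasible1}--\eqref{Eq:Feasible3}. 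Consequently, minimizing the objective of (P$14$) over all feasible $\Omega_{\rm act}$ is identical to minimizing over all feasible pairs $(K_{a},\{s_{\mu(k_{a})}\}_{k_{a}=1}^{K_{a}})$, and by the elementary identity $\min_{(a,b)}h(a,b)=\min_{a}[\min_{b}h(a,b)]$ this joint minimization factors into an outer minimization over $K_{a}$ of an inner minimization over $\{s_{\mu(k_{a})}\}_{k_{a}=1}^{K_{a}}$.

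Next I would exploit the separable structure of the objective. The prefactor $1/N^{K_{a}}$ depends only on the outer variable $K_{a}$ and is strictly positive (since $N\geq 1$), whereas $\varphi(K_{a},\{s_{\mu(k_{a})}\}_{k_{a}=1}^{K_{a}})$, defined in \eqref{Eq:1/L_SNR}, carries all of the dependence on the inner variable. Because multiplication by the positive constant $1/N^{K_{a}}$ is order-preserving, it can be pulled outside the inner minimization:
\begin{equation}
\min_{\{s_{\mu(k_{a})}\}_{k_{a}=1}^{K_{a}}}\frac{1}{N^{K_{a}}}\,\varphi\!\left(K_{a},\{s_{\mu(k_{a})}\}_{k_{a}=1}^{K_{a}}\right)=\frac{1}{N^{K_{a}}}\min_{\{s_{\mu(k_{a})}\}_{k_{a}=1}^{K_{a}}}\varphi\!\left(K_{a},\{s_{\mu(k_{a})}\}_{k_{a}=1}^{K_{a}}\right).\nn
\end{equation}
Substituting this back into the outer minimization over $K_{a}$ reproduces exactly the form of problem (P$15$), which would complete the argument.

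The argument is essentially routine, and I do not anticipate a genuine analytic obstacle; the content of the lemma is precisely the legitimacy of the decomposition, not any inequality. The only point demanding care is the bookkeeping of the feasible set in the inner problem: the constraints \eqref{Eq:Feasible1}--\eqref{Eq:Feasible3} couple the admissible active-IRS selections to the existence of the inner passive-IRS sub-paths established in Phase~1, so for certain values of $K_{a}$ the inner feasible set may be empty. I would handle this by the standard convention that the infimum over an empty set is $+\infty$, which is harmless because the outer minimization over $K_{a}$ simply discards such infeasible values. The practical payoff, which I would state as the closing remark, is that this decomposition reduces the two-dimensional search over $\Omega_{\rm act}$ to an inner combinatorial search over active-IRS selections for each fixed $K_{a}$ (amenable to the graph-optimization method of Section~\ref{Sec:Alg}), nested inside a one-dimensional search over the small integer $K_{a}$.
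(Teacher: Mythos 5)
Your proposal is correct and follows essentially the same route as the paper's own proof: both decompose the joint minimization over $\Omega_{\rm act}$ into an inner minimization over the active-IRS selection $\{s_{\mu(k_{a})}\}_{k_{a}=1}^{K_{a}}$ for fixed $K_{a}$ and an outer search over $K_{a}$, justified by the fact that the positive factor $1/N^{K_{a}}$ depends only on the outer variable and can be pulled outside the inner minimization. Your additional remarks (the explicit $\min_{(a,b)}=\min_a\min_b$ identity and the empty-feasible-set convention) merely make rigorous what the paper asserts as ``easily shown,'' so no substantive difference exists.
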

\begin{proof}
Let $\Omega^*_{\rm act}$ denote the optimal active-IRS routing path with $K^*_{a}$ selected active IRSs. Thus, it can be easily shown that minimizing $1/\gamma^{\mathrm{(low)}}_{\mathrm{MAMP}}$ is equivalent to the following two-layer problem: 1) the inner layer is to optimize the active-IRS routing path $\Omega_{\rm act}$ for minimizing $\varphi(K_{a},\{s_{\mu(k_{a})}\}^{K_{a}}_{k_{a}=1})$ given any feasible $K_{a}$; 2) the outer layer is to search the optimal $K_{a}$ for minimizing $1/N^{K_{a}}\times \varphi(K_{a},\{s_{\mu(k_{a})}\}^{K_{a}}_{k_{a}=1})$.
\end{proof}

To solve the inner problem, we transform the inner problem into the following form
\begin{align}
({\bf P16}):\min_{\{s_{\mu(k_{a})}\}^{K_{a}}_{k_{a}=1}} ~&\frac{\sigma^2_{\rm F}}{P_{\rm B}f^*_{0, s_{\mu(1)}}}+\frac{\sigma^2\prod_{t=1}^{K_{a}-1}N}{P_{s_{\mu(K_{a})}}f^*_{s_{\mu(K_{a})}, J+1}}+\sum_{k_{a}=2}^{K_{a}}\frac{\sigma^2_{\rm F}\prod_{t=1}^{k_a-2}N}{P_{s_{\mu(k_{a}-1)}}f^*_{s_{\mu(k_{a}-1)}, s_{\mu(k_{a})}}}
~~~~\text{s.t.}~
\eqref{Eq:Feasible1}-\eqref{Eq:Feasible3}.\nn
\end{align}}
It is observed that problem (P$16$) has a similar form with problems (P$7$.a) and (P$7$.b). Thus, the proposed graph-optimization algorithms in Section~\ref{SAMP-IRS-path} can be readily applied to solve problem (P$16$). Next, the outer problem can be solved by an exhaustive search for the optimal number of active IRSs.
\vspace{-0.4cm}
\begin{remark}[Computation Complexity Analysis]
\emph{{\color{black}
First, for Phase 1, we say there is an inter-active-IRS edge between each two active IRSs (and between the BS/user and active IRS) if there exists a feasible all-passive-IRSs routing path in between. The total number of inter-active-IRS edges is denoted by $E_{\rm act}$. Then, for the passive-IRS routing design between each two active IRSs, its worst-case complexity is $\mathcal{O}(J_{p}|\mathcal{E}|)$ where $\mathcal{E}$ denotes the edge set of the whole graph. Hence, the worst-case complexity of Phase 1 is $\mathcal{O}(E_{\rm act} J_{p}|\mathcal{E}|)$ \cite{mei2020cooperative}.
 Next, the complexity of Phase 2 depends on the inner-layer and outer-layer problems. For the inner-layer active-IRS beam routing optimization, its routing graph can be represented by $\mathcal{G}_{0,J+1}=(\mathcal{V}_{0,J+1},\mathcal{E}_{0,J+1})$, where $\mathcal{V}_{0,J+1}$ denotes the set of the BS, user, and active IRSs; and $\mathcal{E}_{0,J+1}$ denotes the inter-active-IRS edge set. 
 Given the number of selected active IRSs, $K_{a}$, the worst-case complexity for solving the inner problem is $\mathcal{O}(K_{a}|\mathcal{E}_{0,J+1}|)$ \cite{cheng2004finding}. 
 Accounting for the outer-layer search for the optimal $K_{a}$, the worst-case complexity of Phase 2 is $\mathcal{O}(\sum_{K_{a}=1}^{J_{a}} K_{a}|\mathcal{E}_{0,J+1}|)=\mathcal{O}(J_{a}^2|\mathcal{E}_{0,J+1}|)$. To summarize, the overall (worst-case) computation complexity of the proposed MAMP-IRS beam routing algorithm is  $\mathcal{O}(E_{\rm act} J_{p}|\mathcal{E}|+J_{a}^2|\mathcal{E}_{0,J+1}|)$.
 }}
\end{remark}
\vspace{-0.8cm}
\section{Numerical Results}\label{Results}
\vspace{-0.2cm}
In this section, we present numerical results to demonstrate the effectiveness of the proposed MAMP-IRS beam routing scheme, while the SAMP-IRS case is its special case whose results can be obtained by setting $K_{a}=1$ (see the conference version \cite{zhangSAMP} for more details). 
The simulation setup is as follows unless otherwise specified. We set
$P_{\rm B}=20$ dBm, $T=4$, $\lambda=0.06$ m,  $\beta=(\lambda/4\pi)^{2} = -46$ dB,  $\sigma^{2}= -80$ dBm, and $\sigma_{\mathrm{F}}^{2} = -70$ dBm \cite{you2021wireless}. Moreover, we consider the same amplification power for all active IRSs, i.e., $P_{\ell}=P_{\rm F}=10$ dBm, $\forall~\ell\in \mathcal{J}_{a}$. As shown in Fig.~\ref{High_appro_route}, three active IRSs (marked as red circles), i.e., ${J_{a}=3}$, and seven passive IRSs (marked as blue circles) are deployed in  an indoor environment (e.g., smart factory), where the communication links with available LoS paths are represented by edges.
\vspace{-0.5cm}
\subsection{Optimized Beam Routing Path}
{First, we compare in Fig.~\ref{High_appro_route} the optimized MAMP-IRS beam routing paths by the proposed scheme and the optimal one based on the exhaustive search, given different numbers of elements at each passive IRS, $M$, and at each active IRS, $N$. Several important
observations are made as follows. First, it is observed that the proposed MAMP-IRS beam routing scheme yields the same beam routing path with the optimal scheme. Second, as $M$ increases from $M=500$ to $M=1500$, more passive IRSs are involved in the passive-IRS beam routing path. This is because when $M$ is larger, the passive beamforming gain brought by traveling through more passive IRSs is more dominant than the increased product-distance path-loss. Moreover, it is observed from Figs.~\ref{Route1500_100} and \ref{Route1500_1000} that, for a sufficiently large $M$ (i.e., $M=1500$), both beam routing paths go through only two active IRSs (i.e., red circles) regardless of the number of active reflecting elements $N$. This is expected, since when the per-hop SNR is sufficiently large,  selecting an active IRS for beam routing may not tend to provide a higher active beamforming gain and a power amplification gain in the considered MAMP-IRS aided system.}
\begin{figure}[t]
\centering
\subfigure[$N=100, M=500$]{\label{Route500}
	\includegraphics[width=8cm]{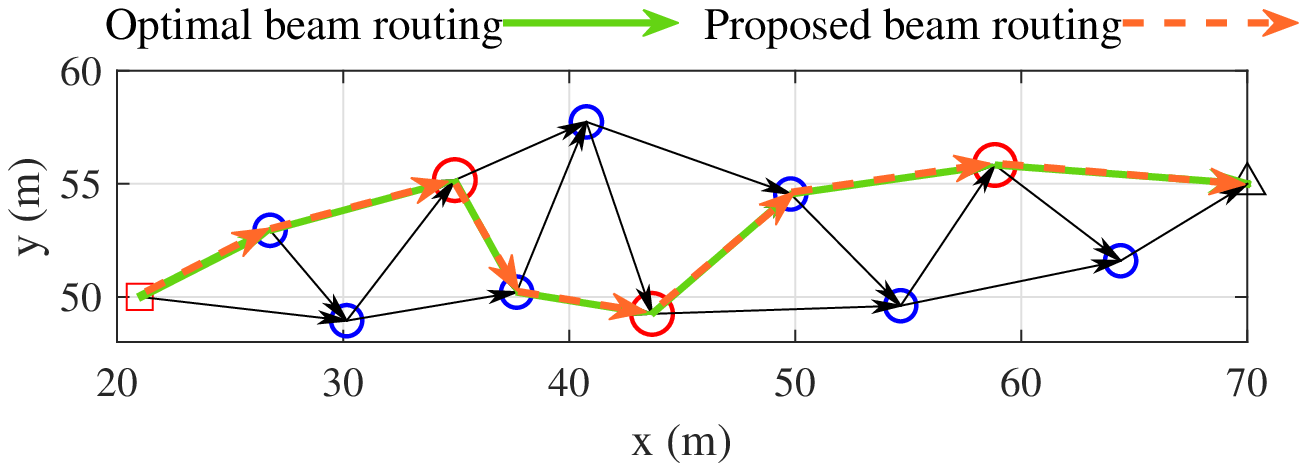}}
\subfigure[$N=100, M=1000$]{\label{Route1000}
	\includegraphics[width=8cm]{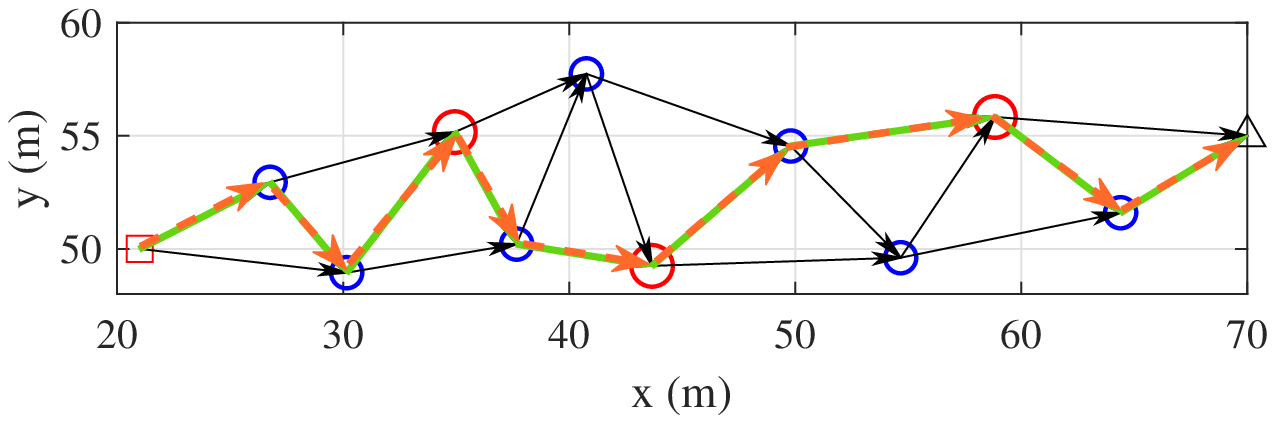}}
\subfigure[$N=100, M=1500$]{\label{Route1500_100}
	\includegraphics[width=8cm]{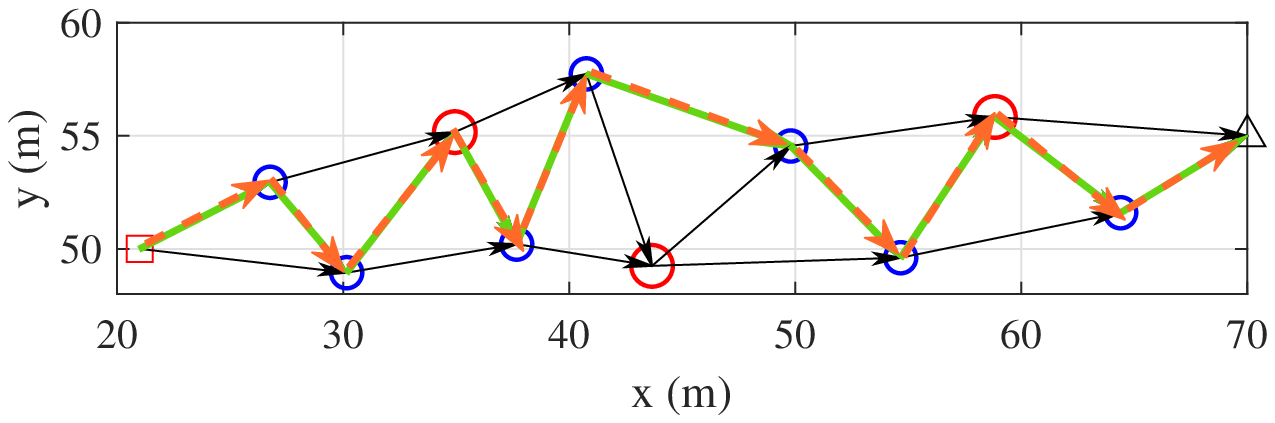}}
\subfigure[$N=1000, M=1500$]{\label{Route1500_1000}
	\includegraphics[width=8cm]{Route1500_100.eps}}
\caption{Different MAMP-IRS beam routing schemes versus $N$ and $M$.}\label{High_appro_route}
\vspace{-22pt}
\end{figure}
\vspace{-0.6cm}

\subsection{Effect of System Para meters on the Achievable Rate}
\vspace{-0.1cm}
Next, for rate performance comparison, we consider the following four benchmark schemes: 1) \emph{MAMP-IRS with optimal beam routing}, which traverses all feasible routing paths and selects the one that achieves the maximum receive SNR given in \eqref{SNR_out}; 2) \emph{MAMP-IRS with myopic beam routing}, which sequentially selects IRSs with the minimum edge weight for routing until reaching the user; 3) \emph{MAMP-IRS with random beam routing}, for which the IRSs are selected randomly; 4) \emph{all-passive-IRSs scheme}, for which the three active IRSs are replaced by three passive IRSs and the corresponding beam routing design is obtained by using the similar methods in \cite{mei2020cooperative}. 
\begin{figure}[t]
\centering
\subfigure[Achievable rate versus active-IRS amplification power.]{\label{MAMP_power}
\includegraphics[height=5.6cm]{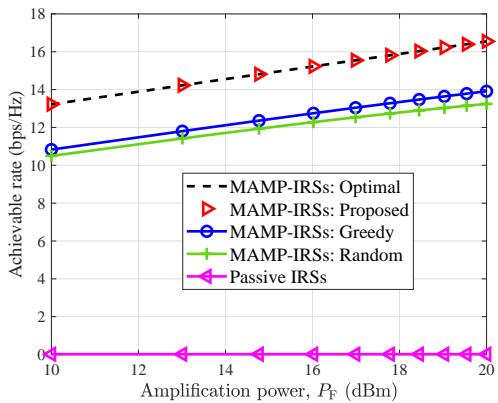}}
\hspace{10mm}
\subfigure[Achievable rate versus number of per-active-IRS reflecting elements.]{\label{MAMP_element}
\includegraphics[height=5.6cm]{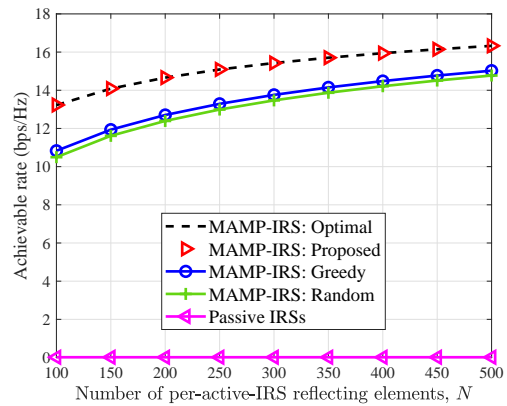}}
\subfigure[Achievable rate versus number of active IRSs.]{\label{Fig:NumActive}
    \includegraphics[height=5.6cm]{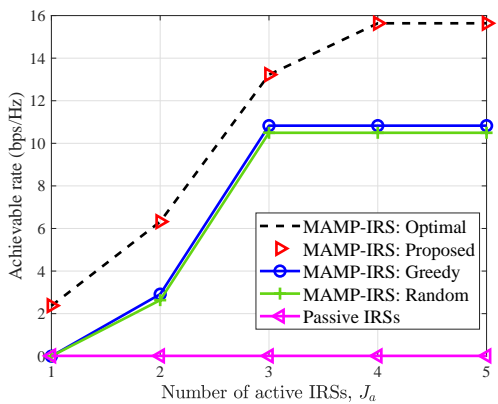}}
    \hspace{10mm}
\subfigure[Achievable rate versus Rician factor.]{\label{Rician}
    \includegraphics[height=5.6cm]{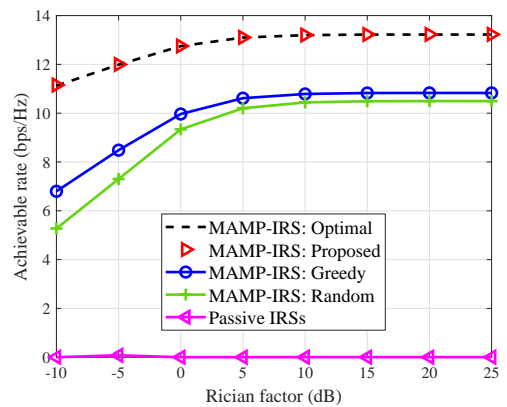}}
\caption{Effects of system parameters on the achievable rate.}\label{Fig:ParMAMP}
\vspace{-16pt}
\end{figure} 

\subsubsection{Effect of the Active-IRS Amplification Power}

We compare in Fig.~\ref{MAMP_power} the achievable rates of different MAMP-IRS beam routing schemes versus the active-IRS amplification power with $M=400$ and $N=100$. First, it is observed that the achievable rates of different MAMP-IRS beam routing schemes monotonically increase with the active-IRS amplification power. Second,
the proposed MAMP-IRS beam routing scheme achieves close rate performance to the optimal scheme and significantly outperforms the random and greedy benchmark schemes. Third, the achievable rate of the MAMP-IRS aided system is much higher than that of the all-passive-IRSs scheme at a modest higher energy cost, and the gap increases with the active-IRS amplification power. This is expected since a higher $P_{\rm F}$
tends to provide a higher power amplification gain.
\subsubsection{Effect of the Number of Per-Active-IRS Reflecting Elements}
Fig.~\ref{MAMP_element} plots the achievable rates of different MAMP-IRS beam routing schemes versus the number of reflecting elements at each active IRS, $N$. First, it is observed that the achievable rate of the MAMP-IRS aided system monotonically increases with the number of active reflecting elements $N$. Second,
the proposed MAMP-IRS beam routing scheme achieves near-optimal rate performance as compared to the optimal scheme and greatly outperforms the random and greedy benchmarks. {Last, the rate gap becomes larger as the per-active-IRS reflecting elements increases due to the higher beamforming gain.}
\subsubsection{Effect of the Number of Active IRSs}
{In Fig.~\ref{Fig:NumActive}, we show the effects of the number of active IRSs on the achievable rate  with $N=100$ and $M=400$. First, the achievable rate of the MAMP-IRS aided system first monotonically increases with the number of active IRSs, e.g., $J_{a}\leq4$, and then tends to saturate when $J_{a}$ is sufficiently large, e.g., $J_{a}>4$. This is because, with more active IRSs being involved in the beam routing path, the power amplification gain brought by the newly involved active IRS appears to be essentially negligible, resulting in a very slight rate increase. Second, the achievable rates of the random and greedy schemes keep unchanged when $J_{a} >3$, since the added active IRSs are not involved in their beam routing paths. Last, the rate gap between the MAMP-IRS aided system and passive IRSs aided system first increases with the number of active IRSs ($J_{a}\leq4$), and then becomes constant when $J_{a} >4$.}
\subsubsection{Effect of Rician Factor}
Finally, we consider the more practical Rician fading channel model for all involved MAMP-IRS reflection links, and plot in Fig.~\ref{Rician} the achievable rates of different schemes versus the Rician factor. Note that the joint beamforming of the BS, active IRSs, and passive IRSs is designed based on the dominant LoS channels, as shown in Lemma~\ref{Lem:MAMP_beamforming}. It is observed that with an increasing Rician factor, the achievable rates for all schemes first increase and then saturate when the Rician factor is sufficiently high. Moreover, even in the low-Rician-factor regime, the proposed MAMP-IRS beam routing scheme still achieves relatively high data rate and greatly outperforms  the other benchmark schemes.  

\vspace{-0.4cm}
\section{Conclusions}\label{CL}
\vspace{-0.2cm}
In this paper, we studied the multi-hop beam routing design for a new MAMP-IRS aided wireless system, where a number of active and passive IRSs cooperatively establish a virtual LoS multi-reflection path from the BS to the user. In particular, for the special case of the SAMP-IRS aided system, we showed that the active IRS should be selected to
establish the beam routing path when its amplification power
and/or number of active reflecting elements are sufficiently
large. Moreover, for the general MAMP-IRS aided wireless system, we proposed an efficient method to obtain a suboptimal beam routing design by using the hierarchical routing design method, receive SNR approximation, and graph theory. Numerical results demonstrated that
the proposed MAMP-IRS aided system attained substantial performance gains over various benchmark schemes under different system settings. In the future, it is an interesting direction to study efficient active/passive IRS placement strategy under the site/topology constraints to further improve the MAMP-IRS beam routing performance. Moreover, this work also opens up a new research direction for hybrid IRS aided wireless networks consisting of both active and passive IRSs.

\vspace{-0.2cm}
\begin{appendices}
\begin{spacing}{1.3}
\vspace{-0.3cm}
\section{}\label{App_2}
For any two adjacent active IRSs along any given active-IRS routing path $\Omega_{\rm act}$, denoted by $s_{\mu(i)},s_{\mu(i+1)} \in \Omega_{\rm act}$, the cascaded channel power gain in between is defined as $f_{s_{\mu(i)},s_{\mu(i+1)}}$.
Similar to Lemma~\ref{Lem:Decouple}, we first consider the effect of $f_{s_{\mu(i)},s_{\mu(i+1)}}$ on the receive SNR in problem (P$9$), the receive SNR is given by
\begin{equation}\label{Lem:General}
\!\!    \gamma^{(i)}_{\rm MAMP} \!\!= \!\!
\frac{P_{\rm B}N^{2K_{a}}f_{s_{\mu(K_{a})},J+1}\eta^{2}_{s_{\mu(K_{a})}}\!\!\left(\prod_{k_{a}=1}^{i-1}f_{s_{\mu(k_{a})}, s_{\mu(k_{a}+1})}\eta^{2}_{s_{\mu(k_{a})}}\!\prod_{k_{a}=i+1}^{K_{a}-1}f_{s_{\mu(k_{a})}, s_{\mu(k_{a}+1)}}\eta^{2}_{s_{\mu(k_{a})}}\right)\!\!f_{0, s_{\mu(1)}}}
{\frac{\sum_{k_{a}=1}^{K_{\mathrm{a}}}\left(f_{s_{\mu(K_{a})},J+1}N\eta^{2}_{s_{\mu(K_{a})}}\prod_{\ell=k_{a}}^{K_{a}-1}f_{s_{\mu(\ell)}, s_{\mu(\ell+1)}}\eta^{2}_{s_{\mu(\ell)}}N \sigma_{\mathrm{F}}^{2}\right)}{f_{s_{\mu(i)},s_{\mu(i+1)}}}+\frac{\sigma^2}{f_{s_{\mu(i)},s_{\mu(i+1)}}}}.
\vspace{-6pt}
\end{equation}
It can be verified from \eqref{Lem:General} that $\frac{\partial \gamma^{(i)}_{\mathrm{MAMP}}}{\partial f_{s_{\mu(i)},s_{\mu(i+1)}}} > 0$, and hence maximizing $ \gamma^{(i)}_{\rm MAMP}$ is equivalent to maximizing $f_{s_{\mu(i)},s_{\mu(i+1)}}$ and independent of $f_{s_{\mu(j)},s_{\mu(j+1)}}, \forall s_{\mu(j)},s_{\mu(j+1)}\in \Omega_{\rm act}, s_{\mu(j)} \neq s_{\mu(i)}$.
As such, one can easily conclude that, for any other adjacent active IRSs in $\Omega_{\rm act}$, similar result can be easily obtained as above. This leads to the result that maximizing the receive SNR over any given active-IRS routing path can be equivalently transformed to the channel power gain maximization of a set of sub-paths, thus completing the proof.
\end{spacing}
\vspace{-0.5cm}
\begin{spacing}{1.4}
\section{}\label{App_1}
{For ease of exposition, we denote $P_{s_{\mu(0)}} \triangleq P_{\rm B} $, $f^*_{s_{\mu(0)},s_{\mu(1)}} \triangleq f^*_{0,s_{\mu(1)}}$, $f^*_{s_{\mu(K_{a})},s_{\mu(K_{a}+1)}} \triangleq f^*_{s_{\mu(K_{a})},J+1}$, and $\sigma^2_{\mathrm{F}}\triangleq \sigma^2$ in the sequel.
Based on \eqref{MAMP_signal}, the received signal power via the MAMP-IRS multi-reflection path, $P_{\rm signal}$, is given by
\vspace{-0.2cm}
\begin{align}
    P_{\mathrm{signal}}=P_{s_{\mu(0)}}N^{2K_{a}}&(f^*_{s_{\mu(0)},s_{\mu(1)}} f^*_{s_{\mu(1)},s_{\mu(2)}}\cdots f^*_{s_{\mu(K_{a})},s_{\mu(K_{a}+1)}}) \times (\eta_{s_{\mu(1)}}^{2} \eta_{s_{\mu(2)}}^{2}\cdots \eta_{s_{\mu(K_{a})}}^{2}).
\end{align}
Accordingly, the superimposed amplification noise power at the receiver is given by
\vspace{-0.2cm}
\begin{align}
P_{\mathrm{noise}}&=\sigma^2_{\mathrm{F}}N^{K_{a}}(f^*_{s_{\mu(1)},s_{\mu(2)}} \cdots f^*_{s_{\mu(K_{a})},s_{\mu(K_{a}+1)}})(\eta_{s_{\mu(1)}}^{2} \eta_{s_{\mu(2)}}^{2} \cdots \eta_{s_{\mu(K_{a})}}^{2})\nn\\
    &+\sigma^2_{\mathrm{F}}N^{K_{a}-1}(f^*_{s_{\mu(2)},s_{\mu(3)}} \cdots f^*_{s_{\mu(K_{a})},s_{\mu(K_{a}+1)}})(\eta_{s_{\mu(2)}}^{2} \cdots \eta_{s_{\mu(K_{a})}}^{2})
    +\cdots+ \sigma^2_{\mathrm{F}}.
    \vspace{-6pt}
\end{align}
Consequently, the receive SNR is obtained as 
\begin{align}
    \gamma_{\mathrm{MAMP}}=\frac{ P_{\mathrm{signal}}}{P_{\mathrm{noise}}}=\frac{P_{s_{\mu(0)}}N^{2K_{a}}\prod_{k_{a}=1}^{K_{a}+1} f^*_{s_{\mu(k_{a}-1)},s_{\mu(k_{a})}} \prod_{k_{a}=1}^{K_{a}} \eta_{s_{\mu(k_{a})}}^{2}}{\sum_{k_{a}=1}^{K_{a}+1} \sigma^2_{\mathrm{F}} \prod_{t=k_{a}+1}^{K_{a}+1} f^*_{s_{\mu(t-1)},s_{\mu(t)}} \prod_{t=k_{a}}^{K_{a}} N\eta_{s_{\mu(t)}}^{2}}.\label{eq:rMAMP}
\end{align}
Dividing both the numerator and the denominator by $\prod_{k_{a}=1}^{K_{a}+1} \sigma^2_{\mathrm{F}}\prod_{k_{a}=1}^{K_{a}} N \eta_{s_{\mu(k_{a})}}^{2}$, the numerator in \eqref{eq:rMAMP}, denoted by $A$, is expressed as
\vspace{-0.3cm}
\begin{equation}
    A = P_{s_{\mu(0)}}N^{K_{a}}\prod_{k_{a}=1}^{K_{a}+1} \frac{f^*_{s_{\mu(k_{a}-1)},s_{\mu(k_{a})}}}{\sigma^2_{\mathrm{F}}}.
    \vspace{-6pt}
\end{equation}
Accordingly, the denominator, denoted by $B$, is given by
\begin{equation}\label{Denom}
    B = \sum_{k_{a}=1}^{K_{a}+1} \frac{ \prod_{t=k_{a}+1}^{K_{a}+1}f^*_{s_{\mu(t-1)},s_{\mu(t)}}/\sigma^2_{\mathrm{F}}}{\prod_{t=1}^{k_{a}-1} N\eta_{s_{\mu(t)}}^2 \prod_{t=1}^{k_{a}-1} \sigma_{\mathrm{F}}^2}.
\end{equation}
For each selected active IRS $s_{\mu(t)}\in\Omega_{\rm act}$, the amplification power constraint needs to be satisfied (i.e., \eqref{amplify}). 
Substituting~\eqref{amplify} into \eqref{Denom}, the denominator $B$ becomes
\begin{align}
    B =
     \sum_{k_{a}=1}^{K_{a}+1} \frac{ \prod_{t=k_{a}+1}^{K_{a}+1}f^*_{s_{\mu(t-1)},s_{\mu(t)}}/\sigma^2_{\mathrm{F}}}{\frac{\prod_{t=1}^{k_{a}-1} P_{s_{\mu(t)}}}{\prod_{t=1}^{k_{a}-1} P_{s_{\mu(t-1)}} f^*_{s_{\mu(t-1)}, s_{\mu(t)}}/\sigma_{\mathrm{F}}^{2}\prod_{i=t+1}^{k_{a}-1}N+\sum_{t=1}^{k_{a}-1}  \prod_{j=1}^{t-1} P_{s_{\mu(j-1)}} f^*_{s_{\mu(j-1)}, s_{\mu(j)}}/\sigma_{\mathrm{F}}^2\prod_{i=1}^{t-2}N }}.
\end{align}
Multiplying both the numerator $A$ and denominator $B$ by $\prod_{t=1}^{K_{a}} P_{s_{\mu(t)}}$, the numerator $A$ now becomes  
\begin{equation}
     \tilde{A} = N^{K_{a}}\prod_{k_{a}=1}^{K_{a}+1} \frac{P_{s_{\mu(k_{a}-1)}}f^*_{s_{\mu(k_{a}-1)},s_{\mu(k_{a})}}}{\sigma^2_{\mathrm{F}}},
     \vspace{-6pt}
\end{equation}
and the denominator $B$ is given by 
\begin{align}
    \!\!\!\tilde{B}\!\!=\!\!\sum_{k_{a}=1}^{K_{a}+1} \!\! \left[\prod_{t=k_{a}+1}^{K_{a}+1} \!\!\frac{ P_{s_{\mu(t-1)}}f^*_{s_{\mu(t-1)},s_{\mu(t)}}}{\sigma^2_{\mathrm{F}} }
   \!\!\left(
    \prod_{t=1}^{k_{a}-1}\!\! \frac{P_{s_{\mu(t-1)}} f^*_{s_{\mu(t-1)}, s_{\mu(t)}}}{\sigma_{\mathrm{F}}^2}\! \! \prod_{i=t+1}^{k_{a}-1}\! \! N
    \!\!+\!\!\sum_{t=1}^{k_{a}-1}  \prod_{j=1}^{t-1} \frac{P_{s_{\mu(j-1)}} f^*_{s_{\mu(j-1)}, s_{\mu(j)}}}{\sigma_{\mathrm{F}}^2}\prod_{i=1}^{t-2}N \!\!\right)
    \right].\nn
\end{align}
Let $\gamma_{i} \triangleq P_{s_{\mu(i-1)}} f^*_{s_{\mu(i-1)},s_{\mu(i)}}/ \sigma^2_{\mathrm{F}}$ denote the per-hop SNR between nodes $s_{\mu(i-1)}$ and $s_{\mu(i)}$, based on which the numerator $\tilde{A}$ and the denominator $\tilde{B}$ can be respectively rewritten as 
\begin{equation}
    \tilde{A} =  N^{K_{a}}\prod_{k_{a}=1}^{K_{a}+1} \gamma_{k_{a}},~~~
    \tilde{B} =
    \sum_{k_{a}=1}^{K_{a}+1}  \prod_{t=k_{a}+1}^{K_{a}+1}\gamma_{t}
    \left(
    \prod_{t=1}^{k_{a}-1}\gamma_{t}\prod_{i=t+1}^{k_{a}-1}N +\sum_{t=1}^{k_{a}-1}  \prod_{j=1}^{t-1} \gamma_{j}\prod_{i=1}^{t-2}N\right).
\end{equation}
The equivalent SNR expression in MAMP-IRS aided system is then given by
\begin{align}
  \gamma_{\mathrm{MAMP}}&=\frac{N^{K_{a}}\prod_{k_{a}=1}^{K_{a}+1} \gamma_{k_{a}}}{\sum_{k_{a}=1}^{K_{a}+1}  \prod_{t=k_{a}+1}^{K_{a}+1}\gamma_{t}
    (
    \prod_{t=1}^{k_{a}-1}\gamma_{t}\prod_{i=t+1}^{k_{a}-1}N +\sum_{t=1}^{k_{a}-1}  \prod_{j=1}^{t-1} \gamma_{j}\prod_{i=1}^{t-2}N)}\nn\\
&\stackrel{(b_1)}{=}\!\frac{N^{K_{a}}}{\sum_{k_{a}=1}^{K_{a}+1} \sum_{t=1}^{k_{a}} \prod_{i=t}^{k_{a}} \frac{1}{ \gamma_{i}}\prod_{j=1}^{t-2}N} ,
\end{align}
where $(b_1)$ is obtained by dividing both the numerator and the denominator by $\prod_{k_{a}=1}^{K_{a}+1} \gamma_{k_{a}}$.\\
Combining the above steps leads to the desired result.}
\end{spacing}
\end{appendices}

\vspace{-0.2cm}
\begin{spacing}{1.15}

\end{spacing}

\end{document}